\newtheorem{thm}{Theorem}
\newtheorem{lemma}[thm]{Lemma}
\newtheorem{prop}[thm]{Proposition}
\newtheorem{defn}[thm]{Definition}
\theoremstyle{definition}
\newtheorem*{definition*}{Definition}
\newtheorem{cla}[thm]{Claim}
\DeclareMathOperator{\Var}{Var}
\DeclareMathOperator{\Gap}{Gap}
\newcommand{\eps}{\epsilon}
\newcommand{\tmix}{t_{\text{mix}}}
\newcommand{\cB}{\mathcal{B}}
\newcommand{\cE}{\mathcal{E}}
\newcommand{\cF}{\mathcal{F}}
\newcommand{\cL}{\mathcal{L}}
\newcommand{\cR}{\mathcal{R}}
\newcommand{\cV}{\mathcal{V}}
\newcommand{\sm}{\setminus}
    \let\@fnsymbol\@arabic
\title{The Glauber dynamics for edge-colourings of trees}
\author[$1$]{Michelle Delcourt%
}
\author[$2$]{Marc Heinrich%
}
\author[$3$]{Guillem Perarnau%
}
\affil[$1$]{\small\it Department of Mathematics, Ryerson University, Toronto, Canada.}
\affil[$2$]{\small\it LIRIS, Université Claude Bernard, Lyon, France.}
\affil[$3$]{\small\it Departament de Matem\`atiques (MAT), Universitat Polit\`ecnica de Catalunya (UPC), Barcelona, Spain.}
\begin{document}
	
\maketitle
\begin{abstract}
Let $T$ be a tree on $n$ vertices and with maximum degree $\Delta$. We show that for $k\geq \Delta+1$ the Glauber dynamics for $k$-edge-colourings of $T$ mixes in polynomial time in $n$. The bound on the number of colours is best possible as the chain is not even ergodic for $k \leq  \Delta$. Our proof uses a recursive decomposition of the tree into subtrees; we bound the relaxation time of the original tree in terms of the relaxation time of its subtrees using block dynamics and chain comparison techniques. Of independent interest, we also introduce a monotonicity result for Glauber dynamics that simplifies our proof.
\end{abstract}

The Glauber dynamics is a Markov chain over the set of configuration of spin systems in the vertices of a graph. In this paper we consider the discrete-time Metropolis Glauber dynamics and study the particular case of sampling $k$-colourings of graphs. The Glauber dynamics has attracted interest from many different areas. In statistical physics for example, this dynamics gives a single-update sampler for the Gibbs distribution, with the particular case of $k$-colourings corresponding to the antiferromagnetic Potts model at zero temperature. In computer science, rapid convergence of the Glauber dynamics gives a fully polynomial-time randomised approximation scheme for the number of $k$-colourings of a graph, a counting problem that is $\sharp P$-complete. Finally, this dynamics is a popular Monte Carlo method for simulating physical systems as it is both relatively simple and easy to implement. The efficiency of these algorithms depends on the time the chain takes to reach equilibrium, known as the mixing time. This is the main object of study in this paper.

A well-known conjecture in this area states that the Glauber dynamics for $k$-colourings of graphs with $n$ vertices and maximum degree $\Delta$ mixes in polynomial time in $n$ for every $k\geq \Delta+2$. This bound on the number of colours is best possible as the chain might not be ergodic for $k= \Delta+1$. Jerrum~\cite{Jer}, and independently, Salas and Sokal~\cite{SalSok}, proved that polynomial mixing happens for $k\geq 2\Delta$. Vigoda~\cite{Vig} improved the bound to $k\geq 11\Delta/6$ analysing the flip chain. The flip chain is the Wang-Swendsen-Kotecky (WSK) algorithm~\cite{WanSweKot} on $k$-colourings and its transitions are colour-swaps of maximal bicoloured connected subgraphs, instead of single-site recolourings as the Glauber dynamics. Recently, Chen and Moitra~\cite{CheMoi}, and independently, Delcourt, Perarnau and Postle~\cite{DPP} showed that polynomial mixing holds for $k\geq (11/6-\eps)\Delta$, for some small $\eps>0$. This is the best known bound for general graphs. The conjecture has also been studied for several classes of graphs such as graphs with large girth~\cite{DFHV,HayVig}, random graphs~\cite{EHSV,MosSly}, planar graphs~\cite{HVV} and graphs with bounded tree-width~\cite{Var}.

One of the central topics in the area is the study of Glauber dynamics on trees. Although it is quite simple to sample a uniformly random $k$-colouring of a tree, Glauber dynamics on trees often shows an extremal behaviour that helps in the understanding of its performance on general graphs.
First of all, let us remark that the Glauber dynamics on colourings of trees is ergodic if $k\geq 3$. In this context, two important thresholds appear: the \emph{uniqueness threshold} at $k=\Delta+2$~\cite{Jon} that corresponds to the  existence of an infinite-volume Gibbs distribution under any boundary condition and the \emph{reconstruction threshold} at $k=(1+o(1))\Delta/\log{\Delta}$~\cite{Sly}, which marks the existence of an infinite Gibbs distribution under free boundary conditions and has connections to algorithmic barriers for local algorithms on trees and sparse random graphs~\cite{AchCoj}.

Most of the existing results study the complete $b$-ary tree, (where $\Delta = b+1$). Martinelli, Sinclair and Weitz~\cite{MSW} proved that the Glauber dynamics for $k$-colourings on $b$-ary trees with $n$ vertices mixes in time $O(n\log{n})$ for every $k\geq b+2$ and for any boundary condition. Hayes, Vera and Vigoda~\cite{HVV} showed that this dynamic mixes in polynomial time (even for planar graphs) for $k\geq C\Delta/\log{\Delta}$ for some constant $C$ and it might take superpolynomial time for $k=o(\Delta/\log{\Delta})$, provided that $\Delta,k\to\infty$ as 
$n\to\infty$. For complete trees, Tetali, Vera, Vigoda, and Yang~\cite{TVVY} established that the threshold for polynomial mixing coincides with the reconstruction threshold, and Sly and Zhang~\cite{SlyZha}  extended the approach in~\cite{MSW} to show $O(n \log n)$ mixing in the non-reconstruction regime.
The mixing time has also been studied beyond the reconstruction threshold. Berger, Kenyon, Mossel, and Peres~\cite{BKMP} proved that the mixing time in a complete tree of maximum degree $\Delta$ is polynomial for $k\geq 3$ and constant~$\Delta$. Lucier, Molloy, and Peres~\cite{LMP,LucMol} obtained the explicit polynomial bound $n^{O(1+\Delta/k\log{\Delta})}$ for every $k\geq 3$. This result extends to any tree with maximum degree $\Delta$ and it is best possible up to the constant in the exponent~\cite{GolJerKar,LucMol}.

In this paper we initiate the study of Glauber dynamics for edge-colourings of graphs by bounding the mixing time for the dynamics on edge-colourings of trees with maximum degree $\Delta$. To our knowledge, there are no previous results specific to edge-colourings, although the results for vertex-colourings of general graphs can be transferred as the set of edge-colourings of a graph corresponds to the set of vertex-colourings of its line graph. Note that if a graph has maximum degree $\Delta$, then its line graph has maximum degree at most $2\Delta-2$. Thus, the results in~\cite{CheMoi, DPP} imply that the Glauber dynamics on edge-colourings on any graph with maximum degree $\Delta$ mixes in polynomial time for every $k\geq (11/3-\eps)\Delta$.

Edge-colourings behave substantially different than vertex-colourings. Vizing's theorem states that an edge-colouring exists if $k\geq \Delta+1$, although the line graph can have maximum degree $2\Delta-2$. Therefore, the Glauber dynamics for $k$-edge-colourings is ergodic for every $k\geq 2\Delta$, and this is best possible as the Knesser graph of degree $\Delta$, admits a frozen edge-colourings with $2\Delta-1$ colours~\cite{Knesser}.

The analysis for trees is simpler. The edges of a tree with maximum degree $\Delta$ can be coloured with $k=\Delta$ colours and no $k$-colouring exists for $k< \Delta$. However, at $k=\Delta$, the Glauber dynamics is not ergodic for \emph{any} tree with maximum degree $\Delta$, as the edges incident to a vertex of degree $\Delta$ cannot be recoloured. Even more, 
on a complete tree, any (non-trivial) block dynamics on its edge-colourings with $\Delta$ colours is not ergodic, as edges in the boundary of a block cannot be updated to any colour using block dynamics transitions. 
It is not difficult to see that the Glauber dynamics on edge-colourings of a tree with maximum degree $\Delta$ is ergodic if $k\geq \Delta+1$. Note that this bound on $k$ is roughly half the bound for ergodicity of vertex-colourings in general graphs, as the maximum degree of the line graph is at most $2\Delta-2$.

In sharp contrast to the vertex-colourings case, this paper shows that edge-colouring dynamics on trees has polynomial mixing as soon as it becomes ergodic.
\begin{thm}\label{thm:stronger}
There exists $C>0$ such that for any tree $T$ on $n$ vertices and maximum degree $\Delta\geq 3$ and for any $k\geq \Delta+1$, the Glauber dynamics for $k$-edge-colourings on $T$ has mixing time at most $n^{C}$.
\end{thm}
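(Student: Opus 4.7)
The plan is to prove Theorem~\ref{thm:stronger} by strong induction on the number of vertices $n$, exploiting the recursive structure of trees through a two-block heat-bath dynamics. The inductive statement should be strong enough to hold for any boundary condition on the subtree (any fixed colouring of the edges meeting a designated root), so that it can be reapplied inside the blocks during the recursion.

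For the inductive step I would first choose a centroid-type edge $e^* = \{u,v\}$ whose removal splits $T$ into subtrees $T_1\ni u$ and $T_2\ni v$, each with at most $2n/3$ vertices, and then define a two-block heat-bath dynamics with overlapping blocks $B_1 = E(T_1)\cup\{e^*\}$ and $B_2 = E(T_2)\cup\{e^*\}$. Conditioned on the complement of $B_i$, the edge $e^*$ still has at least $k-(\Delta-1)\geq 2$ admissible colours, so the block dynamics is ergodic; and conditioned on the colour of $e^*$, the colourings of $T_1$ and $T_2$ decouple, which is what makes the two-block chain tractable. The goal at this stage is to show directly that $\Gap(\text{block dynamics}) \geq n^{-O(1)}$, essentially by reducing to mixing a single scalar variable (the colour of $e^*$) with non-trivial transition probabilities between its admissible values.

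Next, a standard block-dynamics / Glauber comparison inequality yields a bound of the form
\begin{equation*}
\Gap(\text{Glauber on } T) \;\gtrsim\; \Gap(\text{block dynamics})\cdot \min_{i,\,\tau}\, \Gap(\text{Glauber on } B_i \mid \tau),
\end{equation*}
where the inner minimum is taken over the two blocks and all admissible boundary conditions $\tau$ induced by the colours on the complement of $B_i$. Each $B_i$ is essentially a subtree of size at most $2n/3+1$ with maximum degree still at most $\Delta$, so the inductive hypothesis provides $\Gap(\text{Glauber on }B_i\mid\tau) \geq n^{-C'}$, uniformly in $\tau$. Iterating the resulting recursion $\Gap(n)\geq n^{-O(1)}\Gap(2n/3)$ down to the base cases yields the polynomial mixing bound $\tmix \leq n^C$ claimed in the theorem.

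The main obstacle is that at the critical number of colours $k = \Delta+1$ the conditional distribution on each block is highly constrained, and both the spectral gap of the block dynamics and that of Glauber on a subtree can in principle depend sensitively on the boundary condition~$\tau$. This is where the monotonicity result promised in the abstract enters: it should let us compare Glauber under an arbitrary boundary with Glauber under a canonical (worst-case or extremal) one, so that only a single representative per subtree needs to be analysed and the induction closes uniformly in $\tau$. Getting this uniform control on the gap of $B_i$ -- and, relatedly, verifying the polynomial bound for the block dynamics when $e^*$ is incident to vertices of maximum degree -- is what I expect to be the technically delicate part of the argument.
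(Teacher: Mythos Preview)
Your high-level plan---recursive decomposition plus a block-dynamics comparison---matches the paper's, but the two differ in a structural choice, and your version has a gap that prevents the induction from closing.

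You propose a \emph{two-block} decomposition at a centroid edge $e^*$, whereas the paper splits at a centroid \emph{vertex} (or edge) into $d$ (respectively $2d-1$) blocks, one per subtree hanging from the root. The paper's many-block choice is not incidental: the reduced block dynamics then becomes precisely the Glauber dynamics for list-colourings of a clique (the line graph of the star at the root), and the bulk of the technical work (Section~\ref{sec:clique}) goes into bounding the relaxation time of that clique dynamics via weighted canonical paths and a multi-commodity flow argument. Your two-block route would sidestep the clique analysis, but the block-dynamics gap you need is not as innocent as ``mixing a single scalar'': once the subtree already carries a boundary, the marginal of $e^*$ from each side is no longer uniform, and something like Lemma~\ref{lem:each_col} is needed to show it is close to uniform.

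The concrete gap is that your inductive hypothesis does not close. You state it for a subtree with boundary edges meeting \emph{one} designated root, but one step of recursion breaks this. Suppose $T$ already has a boundary at a vertex $r$, you split at $e^*=\{u,v\}$ with $r\in T_1$, and you try to apply the hypothesis to $B_1 = E(T_1)\cup\{e^*\}$. Conditioning on $E\setminus B_1 = E(T_2)$ imposes a boundary at $v$, while $B_1$ also inherits the original boundary at $r$: two boundary locations, and your hypothesis no longer applies. The paper confronts exactly this: it works with ``splitting subtrees'' allowed to have up to \emph{two} interior-boundary edges, required to be non-adjacent, and Section~\ref{sec:proofThm} contains a four-case analysis (sometimes applying Lemma~\ref{lem:blockDyn} up to three times) to guarantee every block produced is again splitting. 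Without such an invariant, at $k=\Delta+1$ the Glauber dynamics on a block can fail to be ergodic, so~\ref{item:A2} is violated and the comparison inequality cannot even be invoked.

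Finally, you have misread the role of the monotonicity result. It is not used to compare different boundary conditions; it is used once at the outset (Proposition~\ref{prop:mono}) to replace an arbitrary tree by its $d$-regularisation, with $d=k-1$, so that every internal vertex has degree exactly $d$ and the counting in Lemmas~\ref{lem:each_col} and~\ref{lem:stat_dist} is clean. Uniform control over the boundary $\tau$ comes not from monotonicity but from those near-uniformity estimates combined with the splitting invariant.
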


We now give a brief overview of the strategy used to prove Theorem~\ref{thm:stronger}. 
First of all, we will work with the continuous-time version of the Glauber dynamics. Although the proofs can be adapted to the discrete-time setting, we use the continuous-time setting, mirroring previous work in~\cite{LucMol,LMP}.
The bound on the mixing time of the dynamics will follow from bounding its relaxation time. We first show that it suffices to bound the relaxation time of $d$-regular trees where $d=k-1$, which notably simplifies the proof. To this end, we prove a monotonicity result on the relaxation time of the Glauber dynamics that can be of independent interest (see Section~\ref{sec:mono}).

The main idea of the proof is to recursively decompose the tree into subtrees, and study the process restricted to each one. The decomposition approach has already been used in the literature to bound the mixing time of the Glauber dynamics for vertex-colourings of trees~\cite{BKMP,LucMol,LMP}. To analyse the decomposition procedure we need to study the associated block dynamics, which can be informally described as follows: given a partition of the tree into subtrees, at each step we select one subtree and update its colouring by choosing a uniformly random colouring that is compatible with the boundary condition. We then use a result of Martinelli~\cite{Mar} on block dynamics to upper bound the relaxation time of the whole process in terms of the relaxation time of each block, and the relaxation time of the block dynamics. 

Our decomposition procedure will satisfy two properties,
\begin{itemize}
	\item[i)] the Glauber dynamics on each subtree is ergodic for every possible boundary condition;
	\item[ii)] the number of recursive decompositions is small (i.e., logarithmic in the size of the tree).
\end{itemize}
Condition i) is necessary to apply Martinelli's result on block dynamics.
Moreover, if ii) holds, the upper bound we obtain on the relaxation time is polynomial in $n$, with an exponent that is independent from $\Delta$ and $k$.

The splitting strategy builds on the one used in~\cite{LucMol,LMP}, and introduces new ideas to deal with the extremal case $k=\Delta+1$ (see Section~\ref{sec:proofThm} for a precise description). Informally speaking, to ensure that ii) holds, at every step we split into several trees of at most half the size of the original one. To ensure that i) holds, we make sure that every subtree has at most two edges incident to the boundary, and that these two edges are \emph{non-adjacent}. The latter condition is not necessary for $k\geq \Delta+2$ but it is crucial for $k=\Delta+1$, as if the boundary edges are adjacent, the Glauber dynamics on each block might be reducible. It is also novel with respect to the approach in~\cite{LMP} where a different chain was introduced to deal with the irreducibility of the blocks. The strategy is implemented by splitting the tree so that all subtrees pend from a vertex, or an edge. 

The second ingredient of the proof is a bound on the relaxation time of the block dynamics, where each block corresponds to a subtree hanging from a root vertex. This dynamics is very similar to the Glauber dynamics on the star spanned by the  edges incident to the root, constrained to a boundary condition.
In Section~\ref{sec:relaxBlockTree} we reduce the block dynamics to an analogous problem: bounding the relaxation time of list-vertex-colourings of a clique (the line graph of a star), where the lists are given by the boundary constraints and each vertex is updated at a different rate. In Sections~\ref{sec:clique_1} and~\ref{sec:clique_2}, we use the weighted canonical paths method of Lucier and Molloy~\cite{LucMol} and a multi-commodity flow argument to bound the relaxation time of the list-colourings of the clique. 

To conclude the proof in the case $k=\Delta+1$, we need to obtain a bound on the relaxation time of the block dynamics where the blocks correspond to trees hanging from a root edge instead of a vertex. In this case, it suffices to study the list-vertex-colouring dynamics on the graph formed by two cliques intersecting at a vertex, which we do in Section~\ref{sec:biclique}.

\section{Preliminaries}

Let $G=(V,E)$ be a finite graph. The \emph{line graph} $G^\ell$ of $G$ is the graph with vertex set $E$ where two edges are adjacent in $G^\ell$ if and only if they are incident in $G$. We define some notions for vertex-colourings of graphs, but these can be naturally transferred to the setting of edge-colourings by considering line graphs.

For $v\in V$, the \emph{neighbourhood of $v$}, denoted by $N(v)$, is the set of vertices in $G$ adjacent to $v$. For $k\in \mathbb{N}$, denote $[k]=\{1,\dots, k\}$. A \emph{(proper) $k$-vertex-colouring} is a function $\mu : V\to [k]$ such that $\mu(v)\neq \mu(u)$ for every $u\in N(v)$. All colourings in this paper are proper. 
Let $\Omega_{V}$ denote the set of $k$-vertex-colourings of $G$. The set of $k$-edge-colourings (i.e., $k$-vertex-colourings of $G^\ell$) will be denoted by $\Omega_{E}$. Throughout the paper, we use similar notation to distinguish the vertex and the edge-version of each set or parameter.

For $\mu\in \Omega_{V}$ and $U\subseteq V$, we denote the restriction of $\mu$ to $U$ by $\mu|_U$. We denote by $\mu(U)$ the set of colours in $U$. 
We write $\Omega^\mu_U$ to denote the set of $\sigma\in \Omega_V$ which agree with $\mu$ on $V\setminus U$, i.e., with $\sigma|_{V\setminus U}=\mu|_{V\setminus U}$. Informally, we think of $\Omega^\mu_U$ as colourings of $U$ which are compatible with $\mu$ in the boundary of $U$. 

A list assignment of $V$ is a function $L: V\to 2^{[k]}$. An \emph{$L$-colouring of $G$} is a $k$-colouring $\mu\in \Omega_V$ satisfying $\mu(v)\in L(v)$ for all $v\in V$. We denote by $\Omega^L_V$ the set of all $L$-colourings of $G$. Note that if $U\subseteq V$ and $H$ is the subgraph of $G$ induced by $U$, then any $\mu\in \Omega_V$ yields a list assignment for $U$ where $L^{\mu}(u)=[k]\setminus \mu(N(u) \sm U)$. This gives a natural bijection between $\Omega^\mu_U$ and $\Omega^{L^\mu}_U$.

In this paper we will focus on $k$-edge-colourings of a tree $G$ on $n$ vertices with maximum degree at most~$\Delta$. Note that $G^{\ell}$ is a union of cliques of size at most $\Delta$ such that two cliques intersect in at most one vertex and every cycle is contained in some clique.

The vertices of $G$ with degree $1$ are called \emph{leaves} and the vertices with degree at least $2$, \emph{internal} vertices.  Throughout this paper we will denote the set of edges in $G$ incident to an edge $e$ by $N(e)$. 
If $T$ is a subtree of $G$, we define the \emph{exterior and interior (edge) boundary of $T$} respectively as 
\begin{align*}
\partial_{G} T & = \{e\in E\setminus E(T): N(e)\cap E(T)\neq \emptyset\}\;,\\
\overline{\partial}_{G} T & = \{e \in E(T): N(e)\cap \partial_{G} T\neq \emptyset\}\;.
\end{align*} 
  If $G$ is clear from the context, we will denote them by $\partial T$ and $\overline\partial T$. We said that $T$ has a \emph{fringe boundary} if all edges in $\overline\partial T$ have an endpoint that was a leaf of $T$. We will use $t$ to denote the size of $\overline\partial T$.  In this paper, we will always consider $t$ to be a constant with respect to $n, \Delta, k$. 

\subsection{Markov chains and Glauber dynamics}

Let $X_t$ be a discrete-time Markov chain with state space $\Omega$ with transition matrix $P$.
It is often convenient to consider a continuous-time analogue of $(X_t)$. For that, consider the continuous-time Markov chain with state space $\Omega$ and entries of its generator matrix $\cL$ given as follows: non-diagonal entries are the corresponding entries of $P$ and diagonal entries are such that each row of $\cL$ adds to $0$.  In a slight abuse of notation, we denote the chain by $\cL$.  Here $\cL[x \to y]$ represents the rate at which $x$ jumps to $y \neq x$. For a dynamics $\cL$ we write $(x,y)\in \cL$ if and only if $\cL[x\to y]>0$. 

Given probability distributions $\nu,\pi$ on $\Omega$, we define their \emph{total variation distance} as
$$
d_{TV} (\nu,\pi) = \frac{1}{2} \sum_{x\in \Omega} |\nu(x)-\pi(x)|\;.
$$
Let $\cL$ be an ergodic Markov chain with stationary distribution $\pi$.
The \emph{mixing time} of $\cL$ is defined as  
$$
\tmix(\cL) = \inf\left\{t: \max_{x\in \Omega}  d_{TV} (\nu^t_x,\pi)<1/4\right\}\;,
$$
where $\nu^t_x$ is the probability distribution on $\Omega$ obtained by running $\cL$ for time $t$ starting at $x$.
The \emph{spectral gap} of $\cL$, denoted by $\Gap(\cL)$, is the second smallest eigenvalue of $-\cL$. The \emph{relaxation time} of $\cL$ is defined as
$$
\tau(\cL)=\frac{1}{\Gap(\cL)}\;.
$$
If the chain is irreducible, it satisfies
\begin{align}\label{eq:mixrel}
\tmix(\cL) \leq  \log (4/\pi_{\text{min}}) \tau(\cL)\;.
\end{align}
where  $\pi_{\text{min}}= \min_{x\in \Omega} \pi(x)$ (see, e.g. Theorem 12.3 in~\cite{LevPer}).

For a graph $G=(V,E)$ with $V= \{v_1,\dots,v_n\}$, a probability distribution $\nu$ on $V$, and a positive integer $k$, the
 \emph{(discrete-time)} Glauber dynamics for $k$-vertex-colourings of $G$ with distribution $\nu$ is a discrete-time Markov chain $X_t$ on $\Omega_V$, where $X_{t+1}$ is obtained from $X_t$ by choosing a $v\in V$ according to $\nu$ and $c\in [k]$ uniformly at random, and updating $v$ with $c$ if this colour does not appear in $N(v)$. If $\nu$ is uniform, then we call $X_t$ the (discrete-time) Glauber dynamics for $k$-vertex-colourings of $G$.

The \emph{continuous-time Glauber dynamics for $k$-vertex-colourings of $G$ with parameters $(p_1,\dots, p_n)$} is a continuous-time Markov chain on $\Omega_V$ with generator matrix given by
\begin{align*}
\cL_V[\sigma\to \eta]&=\begin{cases}
p_i & \text{if }\sigma,\eta\text{ differ only at $v_i$,}\\
0& \text{otherwise.}
\end{cases}
\end{align*}
One can imagine this stochastic process as every vertex $v_i$ having an independent rate $k p_i$ Poisson clock. When the clock at $v$ rings, one chooses a colour $c\in [k]$ uniformly at random and recolours $v$ with $c$ if possible. 

If $(p_1,\dots,p_n)$ are a probability distribution, then $\cL_V$ is the continuous-time version of the discrete-time Glauber dynamics with such distribution.
Here we will consider $p_i$ that are bounded away from $0$ as $n\to \infty$, which in particular implies that the continuous chain updates vertices much faster than the discrete version.
In particular, if $p_i=1/k$ for every $i\in [n]$, then we call $\cL_V$ the continuous-time Glauber dynamics for $k$-vertex-colourings of $G$. If all $p_i$ are uniformly bounded, then the expected update rate is $\Theta(n)$ and it follows from standard Markov chain comparison results~(see e.g.~\cite{RanTet}) that
\begin{align}\label{eq:cont_disc}
\tmix(X_t) =O( n\, \tmix(\cL_V) )\;.
\end{align}

As $\cL_V$ is symmetric for any set of parameters (i.e. $\cL_V[\sigma\to \eta]=\cL_V[\eta\to\sigma]$), if $\cL_V$ is ergodic then its stationary distribution $\pi$ is uniform on $\Omega_V$. As $|\Omega_V|\leq k^n$, we have that  $\pi_{\min}\geq k^{-n}$ and by~\eqref{eq:mixrel}, the mixing time of Glauber dynamics satisfies,
\begin{align}\label{eq:mixrel2}
\tmix(\cL_V) \leq  (n\log{k}) \tau(\cL_V)\;.
\end{align}

If $G=(V,E)$ is a tree with maximum degree $\Delta$ its line graph $G^\ell$ is $(\Delta-1)$-degenerate (i.e. every subgraph has a vertex of degree at most $\Delta-1$). It is known~\cite{DFFV} that in this case the Glauber dynamics $\cL_{E}$ for $k$-edge-colourings of $G$ is ergodic provided $k\geq \Delta+1$. 

For $U\subseteq V$ and $\mu\in \Omega_V$, denote by $\cL^\mu_U$ the dynamics defined by $\cL_V$ restricted to the set $\Omega_U^\mu$. Similarly, for a list assignment $L$ of $U$, let $\cL_U^L$ be the dynamics defined on the state space $\Omega^L_U$. All these chains are symmetric but their ergodicity will depend on the size of the boundary of $\mu$ in $U$, and the size of the lists $L(u)$ for $u\in U$. If ergodic, then their stationary distribution will be uniform in the corresponding state space.

\section{Comparison of Markov chains}\label{sec:tools}

This section introduces the methods in the analysis of the relaxation time of reversible Markov chains that we will use later in the proofs. The first result is a unified framework of two well-known chain comparison techniques. We then define the (reduced) block dynamics and present known results on how this can be used to bound the original chain. Finally, we provide a result on the monotonicity of the Glauber dynamics that will allow us to notably simplify the proof.

\subsection{The weighted multi-commodity flow method}\label{sec:comparison}

In this subsection we present a unified framework for two similar techniques for comparing the relaxation time of two different Markov chains $\cL$ and $\cL'$ on the same state space. These two techniques are (1) the \emph{fractional paths} (or \emph{multi-commodity flows}) method~\cite{Sin} and (2) the \emph{weighted canonical paths} method~\cite{LucMol}. Both methods are generalisations of the \emph{canonical paths method} (see e.g.~\cite{DiaSal}). The main idea of this method is to simulate the transitions of $\cL'$ using transitions of $\cL$ in such a way that no transition of $\cL$ is used too often, to obtain an upper bound on the ratio between $\tau(\cL)$ and $\tau(\cL')$.

Consider two continuous-time ergodic reversible Markov chains $\cL$ and $\cL'$ on $\Omega$ with stationary distributions $\pi$ and $\pi'$, respectively.
In the following, we denote by $\omega \colon \Omega \times \Omega \to \mathbb{R}$ a weight function on the transitions of $\cL$.  
To each $\alpha,\beta\in \Omega$ with $(\alpha,\beta)\in \cL'$, we associate a set of paths $\Gamma_{\alpha, \beta}$ where every $\gamma\in \Gamma_{\alpha,\beta}$ is a sequence $\alpha=\xi^0,\dots ,\xi^m=\beta$, for some $m\geq 1$ with $(\xi^{i-1},\xi^i)\in \cL$ for every $i\in [m]$. We define a \emph{flow from $\alpha$ to $\beta$} to be a function $g_{\alpha,\beta}\colon \Gamma_{\alpha, \beta} \to [0,1]$ such that $\sum_{\gamma \in \Gamma_{\alpha, \beta}} g_{\alpha,\beta}(\gamma) = 1$. The \emph{weight} of $\gamma$ with respect to $\omega$ is defined as 
$$
|\gamma|_{\omega}:=\sum_{i=1}^m \omega(\xi^{i-1},\xi^i)\;.
$$
If the weight function $\omega$ is a constant equal to $1$, then the weight of $\gamma$ is simply denoted by $|\gamma|$, and corresponds to the length of the transformation sequence.

Let $b:= \max_{\alpha \in \Omega}\frac{\pi(\alpha)}{\pi'(\alpha)}$. For every $(\sigma,\eta)\in \cL$ we define its \emph{congestion} as
$$
\rho_{\sigma,\eta}:=\frac{1}{\pi(\sigma)\cL[\sigma \rightarrow \eta] \omega(\sigma, \eta)} \sum_{(\alpha,\beta) \in \cL'} \sum_{\gamma\in\Gamma_{\alpha, \beta}\atop \gamma\ni  (\sigma, \eta)} g_{\alpha,\beta}(\gamma) \pi'(\alpha)\cL'[\alpha \rightarrow \beta] |\gamma|_\omega \;. 
$$
Let $\rho_{\max} = \max \{\rho_{\sigma,\eta}: (\sigma,\eta)\in \cL\}$ be the maximum congestion over all transitions of $\cL$.
We are now ready to state our main lemma.
\begin{prop}[Weighted multi-commodity flows method]\label{prop:weighted_frac_paths}
We have 
$
\tau(\cL) \leq b^2 \rho_{\max}\, \tau(\cL') \;.
$
\end{prop}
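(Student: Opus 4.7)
The plan is to use the variational (Dirichlet form) characterization of the relaxation time: for a continuous-time reversible chain $\cL$ with stationary distribution $\pi$,
\[
\tau(\cL) = \sup_{f \not\equiv \text{const}} \frac{\Var_\pi(f)}{\cE_\cL(f,f)}, \qquad \cE_\cL(f,f) = \tfrac{1}{2}\sum_{(\sigma,\eta)\in\cL} \pi(\sigma)\,\cL[\sigma\to\eta]\,(f(\sigma)-f(\eta))^2,
\]
and similarly for $\cL'$. The goal is to bound $\Var_\pi$ from above by a constant multiple of $\Var_{\pi'}$, and $\cE_{\cL'}$ from above by a constant multiple of $\cE_\cL$. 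Combining these two bounds with the variational formula will then produce the desired inequality for $\tau(\cL)/\tau(\cL')$.

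For the variance comparison, I would use the symmetric identity $\Var_\pi(f) = \tfrac{1}{2}\sum_{x,y}\pi(x)\pi(y)(f(x)-f(y))^2$. Since $\pi(x)\leq b\,\pi'(x)$ for every $x$ by definition of $b$, replacing both factors $\pi$ by $b\pi'$ yields $\Var_\pi(f)\leq b^2\,\Var_{\pi'}(f)$. This is the step where the $b^2$ in the theorem enters.

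For the Dirichlet form comparison, I would fix a transition $(\alpha,\beta)\in\cL'$ and a path $\gamma: \alpha=\xi^0,\xi^1,\dots,\xi^m=\beta$ in $\Gamma_{\alpha,\beta}$, and write $f(\alpha)-f(\beta)=\sum_{i=1}^m(f(\xi^{i-1})-f(\xi^i))$. Applying Cauchy--Schwarz with the weights $\omega(\xi^{i-1},\xi^i)$ (this is the technical heart and the main obstacle, because the choice of weighting is what interpolates between the unweighted multi-commodity flow and the Lucier--Molloy weighted paths method) gives
\[
(f(\alpha)-f(\beta))^2 \leq |\gamma|_\omega \sum_{i=1}^m \frac{(f(\xi^{i-1})-f(\xi^i))^2}{\omega(\xi^{i-1},\xi^i)}.
\]
Averaging over $\gamma$ against $g_{\alpha,\beta}$ (which is a probability distribution on $\Gamma_{\alpha,\beta}$), multiplying by $\pi'(\alpha)\cL'[\alpha\to\beta]$, and summing over $(\alpha,\beta)\in\cL'$ produces an upper bound on $2\cE_{\cL'}(f,f)$ in the form of a double sum indexed by $\cL'$-transitions and paths.

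The final step is to swap the order of summation so that the outer index becomes $(\sigma,\eta)\in\cL$. The inner sum then becomes exactly $\pi(\sigma)\cL[\sigma\to\eta]\,\omega(\sigma,\eta)\,\rho_{\sigma,\eta}$ by the definition of congestion; after cancelling $\omega(\sigma,\eta)$ with the $1/\omega$ from the Cauchy--Schwarz step and bounding $\rho_{\sigma,\eta}\leq \rho_{\max}$, one obtains $\cE_{\cL'}(f,f)\leq \rho_{\max}\,\cE_\cL(f,f)$. Plugging the two comparison inequalities into the variational formula gives
\[
\tau(\cL)=\sup_f \frac{\Var_\pi(f)}{\cE_\cL(f,f)} \leq \sup_f \frac{b^2\,\Var_{\pi'}(f)}{\cE_{\cL'}(f,f)/\rho_{\max}} = b^2\,\rho_{\max}\,\tau(\cL'),
\]
as claimed. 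The main obstacle is really only bookkeeping: keeping the weights consistently on the correct side of the Cauchy--Schwarz inequality so that both $\omega$-factors cancel after the summation swap, which is precisely the place the weight function buys flexibility over the classical unweighted argument.
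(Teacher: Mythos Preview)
Your proposal is correct and follows essentially the same route as the paper's proof: compare variances via $\pi(x)\le b\,\pi'(x)$ to get the $b^2$ factor, compare Dirichlet forms by telescoping along paths and applying the weighted Cauchy--Schwarz inequality to obtain $\cE_{\cL'}(f,f)\le \rho_{\max}\,\cE_{\cL}(f,f)$, and then combine via the variational formula for the gap. The bookkeeping you describe (swapping the order of summation so the inner sum becomes $\pi(\sigma)\cL[\sigma\to\eta]\,\omega(\sigma,\eta)\,\rho_{\sigma,\eta}$, with the $\omega$ factors cancelling) is exactly how the paper carries it out.
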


The proof of this result is based on the same ideas as the proofs of both the weighted canonical paths method and the fractional paths method, providing a common framework for them. It is included in the Appendix for sake of completeness, and follows from standard computations using the variational characterisation of $\Gap(\cL)$ that involves the variance and the Dirichlet form. We will not need this result in full generality in our proofs, but we believe it is interesting in its own right and will use it in two particular cases.

The first case is when the stationary distribution of the two chains is uniform over $\Omega$, and for every pair $(\alpha, \beta) \in \cL'$ the set of paths $\Gamma_{\alpha, \beta}$ consists of a single path $\gamma_{\alpha, \beta}$, with $g_{\alpha,\beta}(\gamma_{\alpha, \beta}) = 1$. In this case, Proposition~\ref{prop:weighted_frac_paths} implies the result in~\cite{LucMol}.
\begin{prop}[Weighted canonical paths method]
If $\pi$ and $\pi'$ are uniform, then
	\label{prop:weighted_cano_path}
$$
\tau(\cL) \leq \tau(\cL') \cdot \max_{(\sigma, \eta) \in \cL} \left( \frac{1}{\cL[\sigma \rightarrow \eta] \omega(\sigma, \eta)} \sum_{(\alpha,\beta) \in \cL' \atop \gamma_{\alpha, \beta} \ni  (\sigma, \eta)}  \cL'[\alpha \rightarrow \beta] |\gamma_{\alpha, \beta}|_\omega \right) \;.
$$	
\end{prop}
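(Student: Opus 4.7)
The plan is to derive Proposition~\ref{prop:weighted_cano_path} as a direct specialisation of Proposition~\ref{prop:weighted_frac_paths}, substituting the two simplifying hypotheses (uniform stationary distributions and single-path flows) into the general congestion bound.

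First, I would handle the constant $b$. Since $\pi$ and $\pi'$ are both uniform on the common state space $\Omega$, we have $\pi(\alpha)=\pi'(\alpha)=1/|\Omega|$ for every $\alpha \in \Omega$. Hence
\[
b = \max_{\alpha \in \Omega}\frac{\pi(\alpha)}{\pi'(\alpha)} = 1,
\]
so $b^2 = 1$ and Proposition~\ref{prop:weighted_frac_paths} reduces to $\tau(\cL) \le \rho_{\max}\,\tau(\cL')$.

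Next, I would simplify $\rho_{\sigma,\eta}$. Because $\Gamma_{\alpha,\beta} = \{\gamma_{\alpha,\beta}\}$ with $g_{\alpha,\beta}(\gamma_{\alpha,\beta}) = 1$, the inner sum over $\gamma \in \Gamma_{\alpha,\beta}$ with $\gamma \ni (\sigma,\eta)$ is either the single term corresponding to $\gamma_{\alpha,\beta}$ (when $\gamma_{\alpha,\beta}$ uses the transition $(\sigma,\eta)$) or empty otherwise. Substituting into the definition of $\rho_{\sigma,\eta}$ and using $\pi(\sigma) = \pi'(\alpha) = 1/|\Omega|$ (so these factors cancel) gives
\[
\rho_{\sigma,\eta} = \frac{1}{\cL[\sigma \to \eta]\,\omega(\sigma,\eta)} \sum_{\substack{(\alpha,\beta) \in \cL' \\ \gamma_{\alpha,\beta} \ni (\sigma,\eta)}} \cL'[\alpha \to \beta]\,|\gamma_{\alpha,\beta}|_\omega.
\]
Taking the maximum over all transitions $(\sigma,\eta) \in \cL$ yields exactly the expression in the statement, and combining with $\tau(\cL) \le \rho_{\max}\,\tau(\cL')$ completes the proof.

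There is no real obstacle here: the statement is a specialisation of Proposition~\ref{prop:weighted_frac_paths}, and the only care required is bookkeeping the cancellations of $\pi(\sigma)$ and $\pi'(\alpha)$ in the uniform setting, and observing that collapsing to a single path removes the $g_{\alpha,\beta}$ weighting entirely. The genuine content is in Proposition~\ref{prop:weighted_frac_paths}, whose proof is deferred to the Appendix and relies on the variational characterisation of the spectral gap via the Dirichlet form.
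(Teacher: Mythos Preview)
Your proposal is correct and matches the paper's approach exactly: the paper simply states that Proposition~\ref{prop:weighted_cano_path} is the special case of Proposition~\ref{prop:weighted_frac_paths} obtained by taking $\pi,\pi'$ uniform (so $b=1$) and $\Gamma_{\alpha,\beta}=\{\gamma_{\alpha,\beta}\}$ with $g_{\alpha,\beta}(\gamma_{\alpha,\beta})=1$. Your write-up actually supplies more detail than the paper does, which only asserts the implication without spelling out the cancellations.
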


In the second case, all transitions have weight $\omega(\sigma, \eta) = 1$, and again both chains have uniform stationary distributions. In this case, we obtain the following.

\begin{prop}[Fractional paths method]\label{prop:frac_path}
If $\pi$ and $\pi'$ are uniform, then
	$$\tau(\cL) \leq \tau(\cL') \cdot \max_{(\sigma, \eta) \in \cL} \left( \frac{1}{\cL[\sigma \rightarrow \eta]} \sum_{(\alpha,\beta) \in \cL'} \sum_{\gamma\in\Gamma_{\alpha, \beta}\atop \gamma\ni  (\sigma, \eta)} g_{\alpha,\beta}(\gamma) \cL'[\alpha \rightarrow \beta] |\gamma| \right)\;.$$
\end{prop}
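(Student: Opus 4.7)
The plan is to derive Proposition~\ref{prop:frac_path} as a direct specialization of Proposition~\ref{prop:weighted_frac_paths}, whose full proof is deferred to the appendix. The fractional paths setup is the weighted multi-commodity flow framework subject to two simultaneous restrictions: one takes the weight function to be identically $1$ on every transition of $\cL$, and one assumes both stationary distributions $\pi$ and $\pi'$ to be uniform on $\Omega$. All that is needed is to substitute these choices into the conclusion $\tau(\cL) \leq b^2 \rho_{\max}\, \tau(\cL')$ and track the cancellations.

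First, taking $\omega(\sigma, \eta) = 1$ for every $(\sigma, \eta) \in \cL$ removes the $\omega(\sigma, \eta)$ factor from the denominator of $\rho_{\sigma, \eta}$ and, by the definition $|\gamma|_\omega = \sum_{i=1}^{m} \omega(\xi^{i-1}, \xi^i)$, turns the weighted length $|\gamma|_\omega$ back into the usual path length $|\gamma|$. Second, since $\pi$ and $\pi'$ are both uniform on the common state space $\Omega$, we have $\pi(\sigma) = \pi'(\alpha) = 1/|\Omega|$ for every $\sigma, \alpha \in \Omega$. Hence the factor $\pi'(\alpha)$ appearing in the numerator of $\rho_{\sigma, \eta}$ cancels with the factor $\pi(\sigma)$ in the denominator, and the constant $b = \max_{\alpha} \pi(\alpha)/\pi'(\alpha)$ reduces to $1$, so that $b^2 = 1$.

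After these simplifications the congestion becomes
$$
\rho_{\sigma, \eta} \;=\; \frac{1}{\cL[\sigma \rightarrow \eta]} \sum_{(\alpha, \beta) \in \cL'} \;\sum_{\gamma \in \Gamma_{\alpha, \beta},\; \gamma \ni (\sigma, \eta)} g_{\alpha, \beta}(\gamma)\, \cL'[\alpha \rightarrow \beta]\, |\gamma| \;,
$$
and Proposition~\ref{prop:weighted_frac_paths} immediately gives $\tau(\cL) \leq \rho_{\max}\, \tau(\cL')$, which is precisely the claimed bound once the maximum over $(\sigma, \eta) \in \cL$ is written on the outside. I do not foresee any genuine obstacle: the argument is a purely bookkeeping specialization of an earlier, more general statement, and the only point that deserves a moment of care is verifying that both chains truly share the same state space $\Omega$, so that the two uniform distributions coincide pointwise rather than merely being proportional, which is what makes the stationary weights cancel and collapses the factor $b^2$ to $1$.
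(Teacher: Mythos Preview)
Your proposal is correct and matches the paper's approach exactly: the paper presents Proposition~\ref{prop:frac_path} as the specialization of Proposition~\ref{prop:weighted_frac_paths} obtained by taking $\omega\equiv 1$ and both stationary distributions uniform, and does not give a separate proof. Your verification that $b=1$, that $|\gamma|_\omega=|\gamma|$, and that the $\pi(\sigma)$ and $\pi'(\alpha)$ factors cancel is precisely the bookkeeping needed.
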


Finally, we will use the following well-known fact: if two Markov chains have the same transitions with similar transition rates, and similar stationary distributions, then their mixing time is also similar (see~\cite[Theorem~3]{RanTet}; it also follows as a corollary of Proposition~\ref{prop:weighted_frac_paths}).
\begin{prop}\label{prop:constant_dyn}
	Suppose there exists a constant $c > 1$ such that for every $\alpha, \beta \in \Omega$ it holds that
	\begin{itemize}
		\item[-] $\frac 1 c \pi'(\alpha) \leq \pi(\alpha) \leq c \pi'(\alpha)$;
		\item[-] $\frac 1 c \cL'[\alpha \rightarrow \beta] \leq \cL[\alpha \rightarrow \beta] \leq c \cL'[\alpha \rightarrow \beta]\;.$
	\end{itemize}
	Then, there is a constant $K>0$ such that
	$$\frac 1 K \tau(\cL') \leq \tau(\cL) \leq K \tau(\cL') \;.$$
\end{prop}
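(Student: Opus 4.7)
The plan is to derive the statement as an immediate corollary of the weighted multi-commodity flow method (Proposition~\ref{prop:weighted_frac_paths}), exactly as the paper hints. The argument is symmetric in $\cL$ and $\cL'$, so a single application followed by swapping the two chains will produce both inequalities.

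First I would observe that the two-sided rate comparison forces $\cL$ and $\cL'$ to have exactly the same set of permitted transitions: if $\cL'[\alpha\to\beta]=0$ then $\cL[\alpha\to\beta]=0$ and vice versa, and likewise for ergodicity (which depends only on the support and the state space). This lets me simulate each transition of $\cL'$ by the identical one-step path in $\cL$. Concretely, for every $(\alpha,\beta)\in\cL'$ I set $\Gamma_{\alpha,\beta}=\{\gamma_{\alpha,\beta}\}$ where $\gamma_{\alpha,\beta}$ is the length-one path $\alpha\to\beta$, with flow $g_{\alpha,\beta}(\gamma_{\alpha,\beta})=1$, and I take the constant weight function $\omega\equiv 1$, so $|\gamma_{\alpha,\beta}|_\omega=1$.

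With these choices the congestion of any transition $(\sigma,\eta)\in\cL$ collapses to a single term, and applying the two hypotheses yields
$$
\rho_{\sigma,\eta} \;=\; \frac{\pi'(\sigma)\,\cL'[\sigma\to\eta]}{\pi(\sigma)\,\cL[\sigma\to\eta]} \;\leq\; c\cdot c \;=\; c^2,
$$
while $b=\max_\alpha \pi(\alpha)/\pi'(\alpha)\leq c$. Proposition~\ref{prop:weighted_frac_paths} then gives $\tau(\cL)\leq b^2\rho_{\max}\,\tau(\cL')\leq c^4\,\tau(\cL')$. Since the hypotheses on $(\cL,\pi)$ and $(\cL',\pi')$ are symmetric under swapping the two chains, the same argument with their roles exchanged yields $\tau(\cL')\leq c^4\,\tau(\cL)$, so the proposition holds with $K=c^4$.

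I do not foresee any real obstacle: the entire argument is a one-line invocation of the comparison proposition using trivial (identity) paths. The only small verification is the observation in the first paragraph that the two-sided rate bound preserves both the transition support and ergodicity, which is what guarantees that Proposition~\ref{prop:weighted_frac_paths} is applicable in both directions.
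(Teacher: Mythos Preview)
Your proposal is correct and is exactly the approach the paper indicates: it states Proposition~\ref{prop:constant_dyn} without proof, noting only that it ``also follows as a corollary of Proposition~\ref{prop:weighted_frac_paths}'', which is precisely what you carry out with identity paths and unit weights. Your computation of $\rho_{\sigma,\eta}\le c^2$, $b\le c$, and hence $K=c^4$ is correct, as is the observation that the two-sided rate bound forces identical transition supports (and hence identical ergodicity) so that the comparison applies in both directions.
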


\subsection{Weighted and Reduced Block Dynamics}\label{sec:block}

This subsection describes block dynamics and its reduced version. Informally speaking, block dynamics is a generalisation of Glauber dynamics where one splits the set of vertices into ``blocks'' (usually with few intersections/interactions between them), and updates each block at a time, according to some probabilities (see~\cite{Mar}). In this paper we will only consider disjoint blocks that partition the vertex set. Note that if all blocks are singletons, we recover Glauber dynamics. The method is presented for vertex-colourings, but it works the same for edge-colourings by taking an edge partition instead. We will present some known results for bounding the relaxation time of Glauber dynamics in terms of the relaxation time of its (reduced) block dynamics. 

Let $G=(V_G,E_G)$ be a graph, $T=(V,E)$ be an induced subgraph and let $\cV=\{V_1,\dots, V_r\}$ be a \emph{partition} of $V$. Let $\mu \in \Omega_{V_G}$.
Suppose $\cL^\sigma_{V_i}$ is ergodic for every $i\in [r]$ and every $\sigma\in \Omega^{\mu}_{V}$ and let $\pi_{V_i}^\sigma$ denote its stationary distribution, which is also the uniform distribution on $\Omega_{V_i}^\sigma$. The \emph{weighted block dynamics on $\cV$ with boundary condition $\mu$} is a continuous-time Markov chain with state space $\Omega^\mu_{V}$ and generator matrix $\cB^\mu_{\cV}$ given for any $\sigma\neq \eta$ by
\begin{align*}
\cB^\mu_{\cV}[\sigma \rightarrow \eta]&=\begin{cases}
g_i \pi_{V_i}^\sigma(\eta) & \text{if }\eta\in \Omega^{\sigma}_{V_i },\\
0& \text{otherwise,}
\end{cases}
\end{align*}
where $g_i = \min_{\sigma \in \Omega^\mu_{V}} \Gap(\cL^\sigma_{V_i})$ is the minimum gap for the Glauber dynamics on the block $V_i$, where the minimum is taken over all possible boundary conditions.
Note that $\cB^\mu_{\cV}$ can be understood as the dynamics where each block $V_i$ updates its entire colouring at times given by an independent Poisson clock of rate $g_i$. The new colouring of $V_i$ is chosen uniformly among all the possible colourings which are compatible with the current boundary conditions. It is clear that the block dynamics is ergodic if the Glauber dynamics is, since each transition of the Glauber dynamics is a valid transition of the block dynamics. Moreover, both dynamics have the same stationary distribution.
 
An unweighted version of the block dynamics, corresponding to $g_i=1$, was used by Martinelli in~\cite{Mar}. 
Lucier and Molloy generalised this result for weighted block dynamics:
\begin{prop}[Proposition 3.2 in~\cite{LucMol}]\label{prop:mar}
For every $\mu\in \Omega_{V_G}$ and partition $\cV$ of $V$, we have
$$
\tau(\cL^\mu_{V})\leq \tau(\cB^\mu_{\cV})\;.
$$
\end{prop}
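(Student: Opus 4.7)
\medskip

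The plan is to compare the two chains through the variational characterisation of the spectral gap. Since both $\cL^\mu_V$ and $\cB^\mu_\cV$ are reversible with the same stationary distribution $\pi$ (uniform on $\Omega^\mu_V$), it suffices to show $\mathcal{E}_{\cB^\mu_\cV}(f,f) \leq \mathcal{E}_{\cL^\mu_V}(f,f)$ for every function $f : \Omega^\mu_V \to \mathbb{R}$, where $\mathcal{E}_\cL(f,f) = \tfrac{1}{2}\sum_{\sigma,\eta} \pi(\sigma)\cL[\sigma\to\eta](f(\sigma)-f(\eta))^2$; this will give $\Gap(\cB^\mu_\cV) \leq \Gap(\cL^\mu_V)$, equivalent to the desired bound on relaxation times.

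First, I would rewrite the Dirichlet form of $\cB^\mu_\cV$ by grouping transitions according to which block is updated. For each $i \in [r]$, partition $\Omega^\mu_V$ into equivalence classes $R$, where $\sigma \sim \sigma'$ iff they agree on $V \setminus V_i$; each class $R$ is precisely some $\Omega^\sigma_{V_i}$. Since $\pi$ is uniform on each class, a short calculation converts the contribution of block $i$ into
$$\sum_{R} g_i \, \pi(R)\, \Var_{\pi_R}(f|_R),$$
where $\pi_R$ is uniform on $R$. The key input now is the definition of $g_i$ as the minimum of $\Gap(\cL^\sigma_{V_i})$ over all boundary conditions: for any representative $\sigma \in R$ we have $g_i \leq \Gap(\cL^\sigma_{V_i})$, so by the variational characterisation applied to $\cL^\sigma_{V_i}$,
$$g_i \, \Var_{\pi_R}(f|_R) \;\leq\; \mathcal{E}_{\cL^\sigma_{V_i}}(f|_R, f|_R).$$

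Second, I would sum these inequalities over $i$ and over the classes $R$, and check that the result is exactly $\mathcal{E}_{\cL^\mu_V}(f,f)$. The point is that the single-site transitions of $\cL^\mu_V$ naturally split according to which $V_i$ the flipped vertex lies in, and for a fixed $i$ they further split according to the class $R$ they preserve; each such transition is counted exactly once, with the correct weight $\pi(\tau)\,\cL_V[\tau\to\tau']$, because $\pi(\tau) = \pi(R)/|R|$ and $\pi^\sigma_{V_i}(\tau) = 1/|R|$ match up. Putting the two steps together yields $\mathcal{E}_{\cB^\mu_\cV}(f,f) \leq \mathcal{E}_{\cL^\mu_V}(f,f)$ as required.

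The only delicate point is the bookkeeping in the second step: one must verify that the equality $\sum_i \sum_R \pi(R)\, \mathcal{E}_{\cL^\sigma_{V_i}}(f|_R, f|_R) = \mathcal{E}_{\cL^\mu_V}(f,f)$ is exact (not just an inequality) — this is what makes the comparison tight and is the reason no spurious factor appears in the bound. There is no obstacle beyond this, and the argument requires neither ergodicity subtleties (it follows from the hypothesis on each $\cL^\sigma_{V_i}$) nor any information about the rates $g_i$ beyond their definition, which is precisely why the weighted version of Martinelli's original result goes through unchanged.
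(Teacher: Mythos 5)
Your argument is correct, and it is the standard variational/Dirichlet-form comparison for block dynamics (essentially Martinelli's original argument, adapted to the weighted rates $g_i$, which is how Lucier and Molloy prove it). The paper itself cites this as Proposition 3.2 of \cite{LucMol} without reproducing a proof, so there is nothing in the paper to compare against; but the two key steps you identify --- rewriting $\mathcal{E}_{\cB^\mu_\cV}(f,f)$ as $\sum_i\sum_R g_i\,\pi(R)\,\Var_{\pi_R}(f|_R)$, and then observing that the inequality $g_i\,\Var_{\pi_R}(f|_R)\leq \mathcal{E}_{\cL^\sigma_{V_i}}(f|_R,f|_R)$ together with $\pi(R)\pi_R(\tau)=\pi(\tau)$ recombines exactly to $\mathcal{E}_{\cL^\mu_V}(f,f)$ because the $R$'s for a fixed $i$ partition $\Omega^\mu_V$ and each single-site move preserves its class --- are exactly the right ones, and the bookkeeping you flag does close without slack.
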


Given a block partition $\cV$, let $H=(U,F)$ be the subgraph of $T$ composed of the vertices adjacent to vertices in other blocks. Let $\Omega_{\cR}$ be the set of colourings of $U$ induced by the colourings in $\Omega^\mu_{V}$; we will use $\hat \sigma,\hat \eta, \dots$ to denote the elements of $\Omega_\cR$. It is convenient to see $\Omega_\cR$ as a set of colouring classes of $\Omega_V^\mu$ where two colourings are equivalent if and only if they coincide on $U$. More precisely, for $\hat \sigma\in \Omega_\cR$, let $\Omega^{\hat \sigma}_*$ be the set of colourings $\sigma\in \Omega_V^\mu$ with $\sigma|_U=\hat \sigma$. In a slight abuse of notation, we write $\Omega_{V_i}^{\hat \sigma}$ to denote the set $\{\eta\in \Omega_{V_i}^{\sigma}:\,\sigma\in \Omega^{\hat\sigma}_*\}$, and we write $\pi_{V_i}^{\hat \sigma}$ for $\pi_{V_i}^{\sigma}$ where $\sigma$ is an arbitrary colouring in $\Omega_{*}^{\hat \sigma}$. Note that the projection of $\pi_{V_i}^{\hat \sigma}$ onto $U$ is well-defined and independent of the choice of $\sigma\in \Omega_*^{\hat\sigma}$.

The \emph{reduced} version of $\cB_{\cV}^\mu$, is a dynamics with state space $\Omega_{\cR}$ and generator matrix $\cR_{\cV}^\mu$ given for any $\hat \sigma\neq \hat \eta$ by
\begin{align}
\cR_{\cV}^\mu[\hat \sigma \rightarrow \hat \eta]&=\begin{cases}
g_i \pi^{i, \hat \sigma}_{\text{proj}}(\hat \eta)  
& \text{if }\hat \eta=\eta|_U \text{ for some }\eta\in \Omega^{\hat \sigma}_{V_i },\\
0& \text{otherwise.}
\end{cases}
\end{align}
where $\pi^{i, \hat \sigma}_{\text{proj}}$ is the projection of $\pi_{V_i}^{\hat\sigma}$ onto $U$; that is, for an arbitrary $\sigma\in \Omega_*^{\hat\sigma}$, the probability distribution on the colourings of $U$ defined for any $\hat \eta \in \Omega_\cR$ by
\begin{align}\label{eq:proj_2}
	\pi^{i, \hat \sigma}_{\text{proj}}(\hat \eta) = \pi_{V_i}^\sigma(\{\eta \in \Omega_{V_i}^\sigma,\, \eta|_{U} = \hat \eta \}) \; .
\end{align}

In the particular case where each block contains only one vertex in $H$, only one vertex of $H$ changes colour during a transition of $\cR_\cV^\mu$. In this case, the reduced block dynamic is very similar to the Glauber dynamics on $H$ with some parameters $p_i$ determined by the Glauber dynamics on $V_i$, only with slightly different transition rates.

We will use another result of Lucier and Molloy that shows that the weighted block dynamics and the reduced block dynamics have the same relaxation time.
\begin{prop}[Proposition 3.3 in~\cite{LucMol}]\label{prop:red}
For every $\mu\in \Omega_{V_G}$ and partition $\cV$ of $V$, we have
	$$
	\tau(\cB^\mu_{\cV}) = \tau(\cR^\mu_{\cV})\;.
	$$
\end{prop}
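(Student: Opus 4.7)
The strategy is to realise $\cR^\mu_{\cV}$ as the boundary projection of $\cB^\mu_{\cV}$ and to show that this projection preserves the spectral gap. Let $\pi$ denote the stationary measure of $\cB^\mu_{\cV}$ and $\hat\pi$ its push-forward onto $\Omega_{\cR}$. I would introduce the mutually adjoint operators $\Pi^*\colon L^2(\hat\pi)\to L^2(\pi)$ given by $(\Pi^*\hat f)(\sigma)=\hat f(\sigma|_U)$, which lifts $\hat f$ to a class-constant function on $\Omega^\mu_V$, and $\Pi\colon L^2(\pi)\to L^2(\hat\pi)$ given by conditional expectation $(\Pi f)(\hat\sigma)=\bE_\pi[f\mid\sigma|_U=\hat\sigma]$.

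The central identity to establish is the intertwining $\cB^\mu_{\cV}\,\Pi^{*}=\Pi^{*}\,\cR^\mu_{\cV}$, which I would verify by a direct rate-calculation. For $\sigma$ with $\sigma|_U=\hat\sigma$ and $\hat\eta\neq\hat\sigma$, there is a unique block index $i$ such that $\hat\sigma$ and $\hat\eta$ differ only on $U_i:=U\cap V_i$, and summing the $\cB^\mu_{\cV}$-rates of transitions from $\sigma$ into the class $\Omega^{\hat\eta}_{*}$ returns exactly $g_i\,\pi^{i,\hat\sigma}_{\text{proj}}(\hat\eta)$, which by definition is $\cR^\mu_{\cV}[\hat\sigma\to\hat\eta]$. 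This immediately gives the inequality $\tau(\cB^\mu_{\cV})\geq\tau(\cR^\mu_{\cV})$: an eigenfunction $\hat f$ of $\cR^\mu_{\cV}$ with eigenvalue $-\Gap(\cR^\mu_{\cV})$ is lifted by $\Pi^{*}$ to a non-constant eigenfunction of $\cB^\mu_{\cV}$ with the same eigenvalue, so $\Gap(\cB^\mu_{\cV})\leq\Gap(\cR^\mu_{\cV})$.

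For the reverse inequality I would split $L^2(\pi)=W\oplus W^\perp$ where $W=\Pi^{*}(L^2(\hat\pi))$ is the subspace of class-constant functions. Both summands are $\cB^\mu_{\cV}$-invariant: $W$ by the intertwining, $W^\perp=\ker(\Pi)$ by self-adjointness of $\cB^\mu_{\cV}$. A short variance-decomposition computation shows that the Dirichlet form and the variance split orthogonally over $W$ and $W^\perp$, so the variational characterisation gives
$$\Gap(\cB^\mu_{\cV})=\min\bigl(\Gap(\cR^\mu_{\cV}),\,\Gap(\cB^\mu_{\cV}\!\upharpoonright_{W^\perp})\bigr).$$
On $W$ the operator is unitarily equivalent to $\cR^\mu_{\cV}$ via $\Pi^{*}$, so the first quantity in the minimum is exactly $\Gap(\cR^\mu_{\cV})$. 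It therefore remains to show $\Gap(\cB^\mu_{\cV}\!\upharpoonright_{W^\perp})\geq\Gap(\cR^\mu_{\cV})$.

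This final step is the main obstacle. To attack it, I would write the Dirichlet form as
$$\cE_{\cB^\mu_{\cV}}(f,f)=\sum_i g_i\,\|f-P_i f\|_{L^2(\pi)}^{2},\qquad (P_i f)(\sigma):=\bE_\pi[f\mid\sigma|_{V\sm V_i}],$$
and first check that each $P_i$ preserves $W^\perp$ by a two-stage conditional expectation: because $\pi$ is uniform on $\Omega^\mu_V$, resampling block $V_i$ realises the same coupling on $\sigma|_{U}$ that the $\cR^\mu_{\cV}$-transition of type $i$ does, so averaging a function with zero class-mean still yields a function with zero class-mean. The delicate part is then converting this structure into the quantitative bound $\cE_{\cB^\mu_{\cV}}(f,f)\geq\Gap(\cR^\mu_{\cV})\,\|f\|^{2}$ for $f\in W^\perp$. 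Here one must crucially exploit the \emph{full-refresh} feature of block dynamics (as opposed to a generic lumped chain): each application of $I-P_i$ is an orthogonal projection, and the corresponding spectral information transfers from the boundary operator $\cR^\mu_{\cV}$ to its ``interior lift'' on $W^\perp$. Once the bound is in place, $\Gap(\cB^\mu_{\cV}\!\upharpoonright_{W^\perp})\geq\Gap(\cR^\mu_{\cV})$ follows, combining with the decomposition above to give $\Gap(\cB^\mu_{\cV})=\Gap(\cR^\mu_{\cV})$ and hence $\tau(\cB^\mu_{\cV})=\tau(\cR^\mu_{\cV})$.
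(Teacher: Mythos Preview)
The paper does not prove this proposition; it is quoted verbatim from Lucier--Molloy, so there is no in-paper argument to compare against. Your framework---realise $\cR^\mu_{\cV}$ as the boundary quotient of $\cB^\mu_{\cV}$, establish the intertwining $\cB\Pi^*=\Pi^*\cR$, and split $L^2(\pi)=W\oplus W^\perp$---is the standard one and is correct as far as it goes; in particular the inequality $\Gap(\cB)\le\Gap(\cR)$ follows cleanly.

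However, you explicitly flag the bound on $W^\perp$ as ``the main obstacle'' and then do not prove it: the sentence about spectral information ``transferring to the interior lift'' is a hope, not an argument. Here is one way to close the gap concretely. For each block $j$ let $R_j$ be the orthogonal projection that averages over the \emph{interior} $V_j\setminus U_j$; since interior vertices have no neighbours outside their own block, the $R_j$ commute pairwise and with every $P_i$, and $\Pi=\prod_j R_j$. This yields a simultaneous eigenspace decomposition $L^2(\pi)=\bigoplus_{S\subseteq[r]}W_S$ with $W_\emptyset=W$ and $W^\perp=\bigoplus_{S\ne\emptyset}W_S$, each $W_S$ being $P_i$-invariant. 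Because $\mathrm{Im}(P_i)\subseteq\mathrm{Im}(R_i)$, one gets $P_i|_{W_S}=0$ whenever $i\in S$, hence $-\cL_{\cB}|_{W_S}\succeq\bigl(\sum_{i\in S}g_i\bigr)I$ and so $\Gap(\cB|_{W^\perp})\ge\min_i g_i$. It remains to check $\min_i g_i\ge\Gap(\cR)$: the reduced operators $\hat P_i$ are themselves orthogonal projections (since $\pi^{i,\hat\sigma}_{\mathrm{proj}}$ depends only on $\hat\sigma|_{U\setminus U_i}$), so any non-constant $\hat f$ depending only on $\hat\sigma|_{U_i}$ with $\hat P_i\hat f=0$ is an eigenfunction of $-\cL_{\cR}$ with eigenvalue exactly $g_i$. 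This gives $\Gap(\cR)\le g_i$ for every $i$, completing the equality.
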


Finally, we will use the following property. If the reduced block dynamics is ergodic, then the projection of $\pi_{V}^\mu$ onto $U$ is its stationary distribution.

\begin{lemma}
	\label{lem:reversible_red}
	The reduced block dynamics $\cR_{\cV}^\mu$ is reversible for the projection of $\pi_{V}^\mu$ onto $U$.
\end{lemma}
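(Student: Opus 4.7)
The plan is to verify detailed balance for $\cR_\cV^\mu$ with respect to the projection $\hat\pi$ of $\pi_V^\mu$ onto $U$. Since $\pi_V^\mu$ is uniform on $\Omega_V^\mu$, we have $\hat\pi(\hat\sigma) = |\Omega_*^{\hat\sigma}|/|\Omega_V^\mu|$, and the goal reduces to proving
$$|\Omega_*^{\hat\sigma}|\, \cR_\cV^\mu[\hat\sigma \to \hat\eta] = |\Omega_*^{\hat\eta}|\, \cR_\cV^\mu[\hat\eta \to \hat\sigma]$$
for every $\hat\sigma \neq \hat\eta$ in $\Omega_\cR$. First I would observe that since the $V_i$'s partition $V$, the sets $\{V_i \cap U\}_i$ partition $U$; as a single block move changes colours only inside one $V_i$, if $\cR_\cV^\mu[\hat\sigma \to \hat\eta] > 0$ then there is a unique block $V_i$ with $\hat\sigma|_{U \setminus V_i} = \hat\eta|_{U \setminus V_i}$.

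The key structural observation is that every vertex of $V_j \setminus U$ is adjacent (in $G$) only to vertices of $V_j$ and to vertices of $V_G \setminus V$ coloured by $\mu$. Hence the quantity
$$F_j(\tau) := \bigl|\{\text{valid colourings of } V_j \setminus U \text{ extending } \tau \text{ on } V_j\cap U,\ \text{compatible with } \mu\}\bigr|$$
depends only on $\tau$ and $\mu$. This yields the factorisation $|\Omega_*^{\hat\sigma}| = \prod_j F_j(\hat\sigma|_{V_j \cap U})$, because each $\sigma \in \Omega_*^{\hat\sigma}$ is determined by independent choices of $\sigma|_{V_j \setminus U}$ across blocks. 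Analogously, both $|\Omega_{V_i}^\sigma|$ and the count $|\{\eta \in \Omega_{V_i}^\sigma : \eta|_U = \hat\eta\}| = F_i(\hat\eta|_{V_i \cap U})$ depend only on $\hat\sigma|_{U \setminus V_i} = \hat\eta|_{U \setminus V_i}$ and on $\mu$; in particular $|\Omega_{V_i}^{\hat\sigma}| = |\Omega_{V_i}^{\hat\eta}|$.

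Substituting into the definition of $\cR_\cV^\mu$,
\begin{align*}
|\Omega_*^{\hat\sigma}|\, \cR_\cV^\mu[\hat\sigma \to \hat\eta]
&= \frac{g_i\, F_i(\hat\sigma|_{V_i \cap U})\, F_i(\hat\eta|_{V_i \cap U}) \prod_{j \neq i} F_j(\hat\sigma|_{V_j \cap U})}{|\Omega_{V_i}^{\hat\sigma}|}.
\end{align*}
Since $\hat\sigma|_{V_j \cap U} = \hat\eta|_{V_j \cap U}$ for every $j \neq i$ and $|\Omega_{V_i}^{\hat\sigma}| = |\Omega_{V_i}^{\hat\eta}|$, the right-hand side is symmetric in $\hat\sigma$ and $\hat\eta$, yielding the desired reversibility.

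The main obstacle is the combinatorial bookkeeping: justifying the product factorisation of $|\Omega_*^{\hat\sigma}|$ from the block structure, and tracking carefully which quantities depend on which restrictions of $\hat\sigma$ and $\hat\eta$. Once these dependencies are in place, the symmetry is immediate, and in fact identifies $\cR_\cV^\mu$ as the lumping of the (manifestly reversible) block dynamics $\cB_\cV^\mu$ along the equivalence classes $\{\Omega_*^{\hat\sigma}\}_{\hat\sigma \in \Omega_\cR}$.
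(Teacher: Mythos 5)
Your proposal is correct and follows essentially the same approach as the paper: verify detailed balance for the projection $\hat\pi$, factorise the number of extensions of $\hat\sigma$ across the blocks (your $F_j$'s correspond to the paper's per-block counts $|\{\sigma\in\Omega_{V_j}^{\hat\sigma}:\sigma|_U=\hat\sigma\}|$), and observe the resulting expression is symmetric in $\hat\sigma,\hat\eta$ because $\Omega_{V_i}^{\hat\sigma}=\Omega_{V_i}^{\hat\eta}$ and the $j\neq i$ factors agree. The closing remark that $\cR_\cV^\mu$ is the lumping of $\cB_\cV^\mu$ is a nice additional framing but the computation is the same.
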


\begin{proof}

Recall that $H=(U,F)$ is the subgraph of $T$ induced by the vertices adjacent to vertices in other blocks.
	Let $\pi_{\text{proj}}$ be projection of  $\pi_V^\mu$ onto $U$; that is, the probability distribution on the colourings of $U$ defined for any $\hat \sigma \in \Omega_\cR$ by
\begin{align}\label{eq:proj}
	\pi_{\text{proj}}(\hat \sigma) = \pi_V^\mu(\{\sigma \in \Omega_V^\mu,\, \sigma|_{U} = \hat \sigma \}) \; .
\end{align}
	First observe that, since $H$ consists of all the vertices in each block which are adjacent to vertices of other blocks, for any $\hat \sigma \in \Omega_\cR$, any extension of $\hat \sigma$ to $V$ is obtained by computing an extension on each of the blocks separately. As $\pi_V^\mu$ is uniform, we have
	\begin{align*}
	\pi_{\text{proj}}(\hat \sigma) &= \frac {|\{\sigma \in \Omega_{V}^\mu,\, \sigma|_{U} = \hat \sigma  \}|} {|\Omega_{V}^\mu|} 
	= \frac {\prod_{j = 1}^r |\{\sigma \in \Omega_{V_j}^{\hat \sigma},\, \sigma|_{U} = \hat \sigma  \}| } {|\Omega_{V}^\mu|} \;.
	\end{align*}
	Thus, it follows that, for any two colourings $\hat\sigma, \hat \eta \in \Omega_\cR$ which differ only on the block $V_i$, we have
	\begin{align*}
	\pi_{\text{proj}}(\hat \sigma) \cR_\cV^\mu[\hat\sigma \rightarrow \hat\eta] &= \frac {\prod_{j = 1}^r |\{\sigma \in \Omega_{V_j}^{\hat\sigma},\, \sigma|_{U} = \hat\sigma  \}| } {|\Omega_{V}^\mu|} \cdot g_i \frac {|\{\eta \in \Omega_{V_i}^{\hat\sigma}, \ \eta|_{U} = \hat\eta \}| } { |\Omega_{V_i}^{\hat\sigma}|} \\ 
	&= \frac {g_i \cdot |\{\sigma \in \Omega_{V_i}^{\hat\sigma},\, \sigma|_{U} = \hat \sigma  \}| \cdot |\{\eta \in \Omega_{V_i}^{\hat\sigma},\, \eta|_{U} = \hat\eta  \}|}{|\Omega_{V}^\mu| |\Omega_{V_i}^{\hat\sigma}|} \cdot \prod_{j \neq i} |\{\sigma \in \Omega_{V_j}^{\hat \sigma},\, \sigma|_{U} = \hat\sigma  \}|\;.
	\end{align*}
	This quantity is symmetric in $\hat \sigma$ and $\hat \eta$. Indeed,  since $\hat \sigma$ and $\hat\eta$ only differ on $V_i$, we have $\Omega_{V_i}^{\hat\sigma} = \Omega_{V_i}^{\hat\eta}$. Moreover, for every $j \neq i$, $\hat\sigma$ and $\hat\eta$ agree on $V_j \cap U$. As a consequence, we have
		$$ |\{\sigma \in \Omega_{V_j}^{\hat\sigma},\, \sigma|_{U} = \hat\sigma  \}| = |\{\eta \in \Omega_{V_j}^{\hat\eta},\, \eta|_{U} = \hat \eta  \}|\;.$$

	Thus, it follows that $\pi_{\text{proj}}(\hat\sigma) \cR_\cV^\mu[\hat\sigma \rightarrow \hat\eta] = \pi_{\text{proj}}(\hat\eta) \cR_\cV^\mu[\hat\eta \rightarrow \hat\sigma]$, and  $\cR_{\cV}^\mu$ is reversible for $\pi_{\text{proj}}$.
\end{proof}

\subsection{Monotonicity of Glauber dynamics}\label{sec:mono}

Finally, we introduce a monotonicity statement that will allow us to simplify some of our proofs. The previous subsections gave tools to compare the relaxation time of two Markov chains with the same state space but different transitions. Here we are interested in comparing Markov chains with similar transitions but different state spaces. A natural example is comparing the relaxation time of the Glauber dynamics on $G$ and $H$, where $H$ is a subgraph of $G$. In general, it is not clear which of the two relaxation times should be smaller, however, if $H$ and $G$ have a particular structure, we will be able to derive a monotonicity bound.

\begin{prop}\label{prop:mono}
Let $G=(V,E)$ be a graph on $n$ vertices, and $k$ be a positive integer. Let $v\in V$ such that $N(v)$ induces a clique of size at most $k-2$. For any choice of parameters $(p_1, \ldots, p_n)$, the Glauber dynamics $\cL_V$ and $\cL_{V\setminus \{v\}}$ for $k$-colourings of $V$ and $V\setminus\{v\}$ respectively and defined with the same parameters satisfy,
	$$
	\tau\left(\cL_{V \setminus \{v\}} \right) \leq \tau(\cL_V) \;.
	$$
\end{prop}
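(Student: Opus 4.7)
The strategy is a variational comparison of the two spectral gaps. Since both Glauber dynamics are symmetric, each has uniform stationary distribution on its state space, and the Rayleigh characterisation
$$
\Gap(\cL) \;=\; \inf_{f \text{ non-constant}} \frac{\cE_\cL(f)}{\Var_\pi(f)}\,, \qquad \cE_\cL(f) \;:=\; \tfrac12\sum_{\sigma,\eta}\pi(\sigma)\cL[\sigma\to\eta](f(\sigma)-f(\eta))^2\,,
$$
applies. To prove $\tau(\cL_{V\sm\{v\}}) \leq \tau(\cL_V)$, i.e.\ $\Gap(\cL_V) \leq \Gap(\cL_{V\sm\{v\}})$, I will lift every non-constant test function $g \colon \Omega_{V\sm\{v\}} \to \mathbb{R}$ to $f \colon \Omega_V \to \mathbb{R}$ via $f(\sigma) := g(\sigma|_{V\sm\{v\}})$ and show that its Rayleigh ratio does not increase under the lift.

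\textbf{Key structural fact (variance identity).} Because $N(v)$ induces a clique, in any $\alpha \in \Omega_{V\sm\{v\}}$ the neighbours of $v$ receive $|N(v)|$ pairwise distinct colours, so $\alpha$ extends to a proper colouring of $V$ by exactly $d := k - |N(v)| \geq 2$ choices of $\sigma(v)$. Therefore the projection $\sigma \mapsto \sigma|_{V\sm\{v\}}$ is a uniform $d$-to-$1$ surjection $\Omega_V \to \Omega_{V\sm\{v\}}$, which yields $|\Omega_V| = d\,|\Omega_{V\sm\{v\}}|$ and, immediately, the identity $\Var_{\pi_V}(f) = \Var_{\pi_{V\sm\{v\}}}(g)$.

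\textbf{Dirichlet form comparison.} Transitions of $\cL_V$ that update $v$ contribute zero to $\cE_{\cL_V}(f)$ since $f$ ignores $\sigma(v)$. For each other vertex $v_i$, I count $N_i(\alpha,\beta)$, the number of $\cL_V$-transitions $(\sigma,\eta)$ at $v_i$ that project to a fixed $\cL_{V\sm\{v\}}$-transition $(\alpha,\beta)$ at $v_i$: writing $a := \alpha(v_i)$, $b := \beta(v_i)$ and $c := \sigma(v) = \eta(v)$, the properness of $\sigma,\eta$ is equivalent to $c \notin \alpha(N(v)) \cup \beta(N(v))$. If $v_i \notin N(v)$ the two sets coincide and $c$ has $d$ choices; if $v_i \in N(v)$ the union equals $\alpha(N(v)\sm\{v_i\}) \cup \{a,b\}$, a set of $|N(v)|+1$ distinct colours (using the clique hypothesis together with properness), leaving $c$ with $d-1 \geq 1$ choices. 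In both cases $N_i(\alpha,\beta) \leq d$. Using $\pi_V(\sigma) = 1/(d|\Omega_{V\sm\{v\}}|)$ and $(f(\sigma)-f(\eta))^2 = (g(\alpha)-g(\beta))^2$ on each fibre,
$$
\cE_{\cL_V}(f) \;=\; \frac{1}{2d|\Omega_{V\sm\{v\}}|} \sum_{i:\, v_i\neq v} p_i \sum_{(\alpha,\beta)} N_i(\alpha,\beta)\,(g(\alpha)-g(\beta))^2 \;\leq\; \cE_{\cL_{V\sm\{v\}}}(g)\,.
$$
Combined with the variance identity and taking infima over non-constant $g$ (whose lifts are non-constant), this gives $\Gap(\cL_V) \leq \Gap(\cL_{V\sm\{v\}})$, and hence the claim.

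\textbf{Main obstacle.} The whole argument pivots on two combinatorial properties of the hypothesis: the clique condition makes the fibre size $d$ independent of $\alpha$ (without it, the variance identity already fails), while $|N(v)| \leq k-2$ is precisely the threshold ensuring $d-1 \geq 1$, so that every transition of $\cL_{V\sm\{v\}}$ has at least one lift in $\cL_V$. The delicate step is the count in the second case of the Dirichlet form: one must combine the clique structure and properness of $\alpha,\beta$ to see that the forbidden set for $c$ has exactly $|N(v)|+1$ elements; if $|N(v)|$ were $k-1$ the count would drop to $0$ and $\cL_V$ would typically fail to be ergodic, making the comparison vacuous.
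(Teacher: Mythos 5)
Your argument is correct and follows essentially the same route as the paper's: restrict the Rayleigh quotient to test functions that are pulled back from $\Omega_{V\setminus\{v\}}$, use the clique condition to see that the projection $\Omega_V\to\Omega_{V\setminus\{v\}}$ is a uniform $(k-|N(v)|)$-to-one map (giving the variance identity), and bound the Dirichlet form by counting lifts of each transition. Your fibre count is slightly finer than the paper's (distinguishing $v_i\in N(v)$ from $v_i\notin N(v)$), but both reduce to the same bound and the overall structure is identical.
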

The proof follows from standard computations and is included in the Appendix. Our proof strongly uses the fact that the neighbourhood of $v$ is a clique. One can easily generalise it to obtain the following bound: if $N(v)$ induces a clique of size $d$ minus $f$ edges, then 
$$
\tau\left(\cL_{V \setminus \{v\}} \right) \leq \bigg(1+\frac{f}{k-d}\bigg)\tau(\cL_V).
$$

\section{Glauber dynamics for list-colourings of a clique}\label{sec:clique}

In this section we analyse the relaxation time of the Glauber dynamics for list-vertex-colourings of a clique. Using the line graph, vertex-colourings of cliques correspond to edges-colourings of the edges incident to a given vertex. Thus, the results in this section will be used later in the proof of our main result for edge-colourings of trees.

Consider the clique with vertex set $U=\{u_1,\ldots, u_d\}$ with $d\geq 1$. Throughout this section we fix the number of colours to $k=d+1$. Recall that given a list assignment of $U$, we can define $\Omega^L_U$ and the dynamics $\cL^L_U$, governed by the parameters $(p_1,\dots, p_d)$, where $p_i$ is the rate at which vertex $u_i$ changes to $c\in L(u_i)$.

For a positive integer $t$, a list assignment $L$ is \emph{$t$-feasible} if it satisfies
\begin{itemize}
\item[-] $|L(u_i)|\geq t+1$ for every $i\in [t]$;
\item[-] $|L(u_i)|=d+1$ for every $i\in [d]\setminus [t]$.
\end{itemize}
We say $u\in U$ is \emph{free} if $|L(u)|=d+1$, and \emph{constrained} otherwise. 
The motivation to define $t$-feasible list assignments is to be able to treat the edge-colouring dynamics around a vertex, where some of the edges are incident to other precoloured ones. In this sense, constrained elements correspond to elements that may not be able to take any possible colour, and the parameter $t$ gives a lower bound on the number of choices. In our applications, it will be enough to consider $1$- and $2$-feasible list assignments, which will exist if we use $k\geq \Delta+1$ and $k\geq \Delta+2$ colours, respectively.

We should stress here that all the lists are subsets of $[d+1]$, although the results in this section can be generalised to arbitrary lists of size $k\geq d+1$ using a variant of Proposition~\ref{prop:mono} for list colouring. 
If $L$ is $t$-feasible, we have
\begin{align}\label{eq:size_L-col}
\left(\prod_{i=1}^t |L(u_i)|-i+1\right) (d+1-t)!\leq |\Omega^L_U| \leq \left(\prod_{i=1}^t |L(u_i)|\right)(d+1-t)!\;.
\end{align}

The chain $\cL^L_U$ is symmetric and it will follow from the results below that, if $L$ is $t$-feasible, for some $t$, then the chain is also ergodic. So its stationary distribution is uniform on $\Omega_U^L$.  The main goal of this section is to prove the following bound on its relaxation time.
\begin{lemma}\label{lem:rel_time_clique}
	If $L$ is $t$-feasible and $k=d+1$, then we have
	$$
	\tau(\cL^L_U)= O\left(\sum_{i=1}^t \frac{d}{p_i} + \sum_{i=t+1}^d \frac{1}{p_i}\right)\;.
	$$
\end{lemma}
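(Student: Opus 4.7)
The plan is to apply the weighted canonical paths method (Proposition~\ref{prop:weighted_cano_path}) comparing $\cL := \cL^L_U$ with a trivial reference chain $\cL'$ on $\Omega^L_U$, defined by $\cL'[\alpha \to \beta] = 1/|\Omega^L_U|$ for all distinct $\alpha, \beta$. Both chains are reversible for the uniform distribution on $\Omega^L_U$, and a direct eigenvalue computation shows that the non-trivial eigenvalues of $\cL'$ all equal $-1$, so $\tau(\cL') = O(1)$ and the factor $b$ in Proposition~\ref{prop:weighted_cano_path} is $1$.

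For each pair $(\alpha, \beta) \in \Omega^L_U \times \Omega^L_U$ I would construct a canonical path $\gamma_{\alpha, \beta}$ as follows. Identify $\alpha$ and $\beta$ with bijections $U \cup \{*\} \to [d+1]$ by placing the missing colour of each at an auxiliary slot $*$, and decompose the resulting permutation $\pi_{\alpha, \beta}$ of $U \cup \{*\}$ into cycles. Process the cycles in a fixed canonical order, starting with the cycle through $*$. A cycle of length $k$ through $*$ is executed by $k-1$ Glauber moves, each placing a cycle vertex directly at the currently missing colour. A cycle of length $k$ disjoint from $*$ requires an opening move, a rotation through $k-1$ moves, and a closing move, for $k+1$ moves total; the opening vertex is chosen so its list contains the currently missing colour. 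Each vertex of $U$ is touched $O(1)$ times along the path.

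Let the weight function be $\omega(\sigma, \eta) = 1/p_i$ if $u_i$ is free and $\omega(\sigma, \eta) = d/p_i$ if $u_i$ is constrained, where $u_i$ is the vertex updated by $(\sigma, \eta)$. Then $\cL[\sigma \to \eta]\, \omega(\sigma, \eta) \in \{1, d\}$, and because each vertex is touched $O(1)$ times, every canonical path has weighted length $|\gamma_{\alpha, \beta}|_\omega = O(\Lambda)$, where $\Lambda := \sum_{i=1}^t d/p_i + \sum_{i=t+1}^d 1/p_i$. A standard reconstruction/injection argument bounds the number of pairs $(\alpha, \beta)$ whose canonical path traverses a fixed transition $(\sigma, \eta)$ updating $u_i$: this number is $O(|\Omega^L_U|)$ when $u_i$ is free and $O(d \cdot |\Omega^L_U|)$ when $u_i$ is constrained. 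The extra factor $d$ in the constrained case reflects the sparsity of such transitions, since only a $|L(u_i)|/(d+1) = O(1/d)$ fraction of states admits an outgoing transition on a constrained vertex (the current missing colour must lie in the small list $L(u_i)$). Substituting into the congestion formula yields $\rho_{\sigma, \eta} = O(\Lambda)$ uniformly in $(\sigma, \eta)$, and Proposition~\ref{prop:weighted_cano_path} then gives $\tau(\cL) \leq \rho_{\max}\, \tau(\cL') = O(\Lambda)$, as claimed.

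The main obstacle is a cycle consisting entirely of constrained vertices whose lists all fail to contain the currently missing colour, which cannot be opened directly. For $t = 1$ this does not arise, since every cycle of length at least $2$ disjoint from $*$ must contain a free vertex. For $t = 2$ (the case needed to handle the extremal bound $k = \Delta + 1$) one must preprocess by using a free vertex as a buffer --- one Glauber move to shift the missing colour into a value compatible with the constrained cycle, and an extra move later to restore the buffer --- and a pigeonhole argument using $t$-feasibility guarantees such a buffer always exists. Propagating this workaround may force replacing the canonical paths by a multi-commodity flow (Proposition~\ref{prop:frac_path}) to avoid overloading any single transition, which is presumably why Sections~\ref{sec:clique_1} and~\ref{sec:clique_2} split the analysis between $t = 1$ and $t = 2$.
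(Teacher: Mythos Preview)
Your overall strategy --- compare $\cL^L_U$ to the complete uniform chain via weighted canonical paths built from the cycle decomposition of the permutation taking $\alpha$ to $\beta$, with weights $\omega = d/p_i$ on constrained vertices and $1/p_i$ on free ones --- is exactly the paper's. The path construction, the weight function, and the counting bounds $|\Lambda_{\sigma,\eta}| = O(|\Omega^L_U|)$ for free $u_i$ and $O(d\,|\Omega^L_U|)$ for constrained $u_i$ all match (Lemmas~\ref{lem:at_most_2}--\ref{lem:countPaths}).

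Your identification of the obstacle (cycles consisting entirely of constrained vertices --- the paper's \emph{type-3-cycles}) is also correct, but your reading of how the paper resolves it is off. Sections~\ref{sec:clique_1} and~\ref{sec:clique_2} are \emph{not} a case split on $t$; both are carried out for arbitrary constant $t$. The paper inserts an intermediate chain $\cL^L_{\text{int}}$ whose transitions are precisely the \emph{good pairs} (pairs with no type-3-cycle), each at rate $1/|\Omega^L_U|$. Section~\ref{sec:clique_1} compares $\cL^L_U$ to $\cL^L_{\text{int}}$ via weighted canonical paths --- this is your argument verbatim, and it runs cleanly because good pairs by definition avoid the obstruction. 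Section~\ref{sec:clique_2} then compares $\cL^L_{\text{int}}$ to $\cL^L_{\text{unif}}$ via the fractional paths method: an arbitrary pair $(\alpha,\beta)$ is routed through a bounded-length sequence of intermediate colourings $\xi^1,\dots,\xi^{m-1}$, where each $\xi^i$ is drawn from a set of size $\Omega(|\Omega^L_U|)$ (Lemma~\ref{lem:typeIII}), so that every hop $(\xi^{i-1},\xi^i)$ is good. This is the rigorous version of your ``buffer'' idea, and spreading the flow over many intermediates is exactly what keeps the congestion $O(1)$. Incidentally, the application in the paper only invokes Lemma~\ref{lem:rel_time_clique} with $t\le 1$, not $t=2$; for $t\le 1$ every pair is automatically good and your direct one-step comparison already suffices.

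One small correction: your heuristic for the extra factor $d$ on constrained transitions (``sparsity'') is not the actual mechanism. The factor arises in the reconstruction count: for the constrained vertex $v$ being updated, neither of the two restriction clauses pinning $f(v)$ or $f^{-1}(v)$ (the claim inside the proof of Lemma~\ref{lem:countPaths}) applies, so one pays an extra factor of order $d$ when counting permutations $f$ compatible with the transition.
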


In order to prove the lemma we will use the comparison techniques introduced in Section~\ref{sec:comparison}. Consider the dynamics $\cL_{\text{unif}}^L$ on $\Omega_U^L$ with generator matrix given for any $\sigma\neq \eta$ by
\begin{align}\label{eq:LU}
\cL_{\text{unif}}^L[\sigma\to \eta]&= \frac{1}{|\Omega_U^L|}\;.
\end{align}

Clearly $\tau(\cL_{\text{unif}}^L)=1$. The main idea will be to compare $\tau(\cL_U^L)$ with $\tau(\cL_{\text{unif}}^L)$. For technical reasons, it will be easier to introduce an intermediate chain and compare both to it. Define $\cL^L_{\text{int}}$ to be the dynamics on $\Omega^L_U$ with generator matrix given for any $\sigma\neq \eta$ by
\begin{align}\label{eq:LI}
\cL^L_{\text{int}}[\sigma\to \eta]&=
\begin{cases}
\frac{1}{|\Omega_U^L|} & \text{if }(\sigma,\eta) \text{ is a good pair},\\
0& \text{otherwise.}
\end{cases}
\end{align}
Speaking informally, the dynamics $\cL^L_{\text{int}}$ can be seen as $\cL_{\text{unif}}^L$ where only moves between ``good'' pairs of colourings are allowed. Informally speaking, a pair of colourings $(\alpha,\beta)$ is good if it contains no set $S$ of constrained vertices such that $\alpha(S)=\beta(S)$. We defer the formal definition of a good pair to later in the section. 
\medskip

Lemma~\ref{lem:rel_time_clique} follows from combining these two lemmas (proved in the next two subsections) with the fact that $\tau(\cL_{\text{unif}}^L)=1$.
\begin{lemma}\label{lem:LUvsLI}
If $k=d+1$, then for $t$ constant we have
	$$ 
	\frac {\tau(\cL_U^L)} {\tau(\cL^L_{\text{int}})} = O\left(\sum_{i=1}^t \frac{d}{p_i} + \sum_{i=t+1}^d \frac{1}{p_i}\right) \;.
	$$
\end{lemma}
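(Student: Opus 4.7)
The plan is to apply the weighted canonical paths method (Proposition~\ref{prop:weighted_cano_path}) with $\cL=\cL_U^L$ and $\cL'=\cL_{\text{int}}^L$. Both chains are symmetric and share the same state space $\Omega_U^L$, hence both stationary distributions are uniform and the hypotheses hold. I first need to fix a formal notion of ``good pair'': for $(\alpha,\beta)\in\Omega_U^L\times\Omega_U^L$, let $D=\{u\in U:\alpha(u)\neq\beta(u)\}$; since the clique forces distinct colours, $\alpha(D)=\beta(D)$, and the rule ``$\alpha(v)\mapsto \eta$-colour at the same vertex'' defines a permutation $\pi$ on $D$. I declare $(\alpha,\beta)$ good iff every cycle of $\pi$ contains at least one free vertex, which is exactly the condition that no non-empty $S\subseteq\{u_1,\dots,u_t\}$ satisfies $\alpha(S)=\beta(S)$.

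Next I would build the canonical paths. For each good pair, order the cycles $C_1,\dots,C_r$ of $\pi$ (say by smallest-index vertex) and process them sequentially. Because $k=d+1$, at every intermediate step the current colouring leaves exactly one colour free, which I use as a buffer to rotate the current cycle: pick a free vertex of $C_j$ as the ``anchor'' (possible by the good-pair condition), move it to the buffer colour, then sequentially recolour the other vertices of $C_j$ to their $\beta$-values, and finally put the anchor on its target colour. Each cycle of length $\ell$ costs $\ell+1$ transitions, so $|\gamma_{\alpha,\beta}|\leq |D|+r\leq 2d$, and each vertex is updated at most twice along the path. I then assign weights $\omega(\sigma,\eta)=1/p_j$ when the transition updates a free vertex $u_j$ ($j>t$) and $\omega(\sigma,\eta)=d/p_j$ when it updates a constrained vertex ($j\leq t$); summing over the $O(1)$ updates at each vertex gives
\[
|\gamma_{\alpha,\beta}|_{\omega}=O\!\left(\sum_{i=1}^t\frac{d}{p_i}+\sum_{i=t+1}^d\frac{1}{p_i}\right)=:O(W).
\]

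The heart of the proof is the congestion bound. For a fixed transition $(\sigma,\eta)\in\cL_U^L$ updating vertex $u_j$, I need to show that the number $N_{\sigma,\eta}$ of good pairs whose canonical path passes through $(\sigma,\eta)$ is $O(|\Omega_U^L|)$ when $u_j$ is free and $O(d\,|\Omega_U^L|)$ when $u_j$ is constrained. The approach is an encoding argument: given $\sigma$, the vertices already in their $\beta$-colour are those belonging to cycles already processed, while those still in $\alpha$-colour belong to cycles not yet processed; only the vertices of the cycle currently being rotated are in an ``in between'' state. To reconstruct $(\alpha,\beta)$ from $(\sigma,\eta)$ I only need to specify a bounded amount of extra data, namely the identity of the current cycle, the step of the rotation, the cyclic structure of the other unprocessed cycles, and the identity of the buffer colour. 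Careful bookkeeping shows that this extra data has $O(1)$ choices per vertex (so $O(\mathrm{poly}(d))$ overall), and that summing over it and over the ``invisible'' part of the colouring produces at most $|\Omega_U^L|$ distinct pairs, with one extra factor of $d$ in the constrained case corresponding to the choice of anchor vertex/buffer identity. Plugging these bounds into Proposition~\ref{prop:weighted_cano_path} together with $\cL_{\text{int}}^L[\alpha\to\beta]=1/|\Omega_U^L|$ and the weighted path length $O(W)$ yields $\tau(\cL_U^L)\leq O(W)\cdot\tau(\cL_{\text{int}}^L)$.

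The main obstacle is exactly this combinatorial congestion estimate. The subtlety is two-fold: first, one has to exploit the good-pair hypothesis to guarantee that a free anchor always exists in each cycle, so that the rotation scheme is well-defined and can be \emph{inverted} starting from an intermediate state with limited ambiguity; second, the different behaviour of free and constrained vertices must be encoded into the weights $\omega$ so that the asymmetry in the desired bound ($d/p_i$ versus $1/p_i$) is reproduced exactly. If the straight canonical-path encoding turns out to overcount at constrained vertices, I would fall back on Proposition~\ref{prop:frac_path} and spread the flow across several choices of buffer colour and cycle order, which typically saves the missing factor of $d$; however, a single-path analysis with the weights above should already suffice because each vertex participates in each path only a constant number of times.
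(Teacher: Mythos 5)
Your high-level plan is the same as the paper's: define a "blocking permutation" from $\alpha,\beta$, declare the pair good when every non-trivial cycle of constrained vertices has a free escape, build a canonical path by rotating cycles through the spare colour, weight constrained updates by $d/p_i$ and free updates by $1/p_i$, and invoke Proposition~\ref{prop:weighted_cano_path}. Two things, however, keep the proposal from being a proof.

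First, a formalism issue that you brush past: the map $\pi$ is not a permutation of $D=\{u:\alpha(u)\neq\beta(u)\}$, and the assertion $\alpha(D)=\beta(D)$ is false in general. With $|U|=d$ and $k=d+1$, both $\alpha$ and $\beta$ omit exactly one colour. If their omitted colours differ, say $\alpha$ misses $c_\alpha$ and $\beta$ misses $c_\beta\neq c_\alpha$, then the vertex $u$ with $\alpha(u)=c_\beta$ has no $v\in U$ with $\beta(v)=\alpha(u)$, so "$\pi(u)$" is undefined, and the would-be cycle through $u$ is actually a path. This is precisely what the paper's auxiliary vertex $w$ fixes: $w$ carries the unused colour, $f$ is a genuine permutation of $U\cup\{w\}$, and the broken path is closed into a type-1 cycle. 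Without $w$ (or some equivalent device) your "process one cycle at a time" recipe has an unhandled case, and your good-pair definition silently excludes/ignores the path component.

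Second, and more seriously, the congestion bound — which you yourself identify as the heart of the proof — is not established. You claim "this extra data has $O(1)$ choices per vertex (so $O(\mathrm{poly}(d))$ overall)", but $O(1)$ choices at each of $d$ vertices gives $2^{\Theta(d)}$, which is not polynomial in $d$ and would destroy the bound. The real point, which the paper proves via Lemmas~\ref{lem:at_most_2} and~\ref{lem:countPaths}, is much more delicate. One first shows that for a fixed transition $(\sigma,\eta)$ and a fixed permutation $f$, at most two good pairs $(\alpha,\beta)$ are compatible (because the path is entirely determined by $f$, and the transition's vertex appears at most twice on it). The crux is then to count the number of admissible permutations $f$, and here one must exploit the list structure: for each constrained vertex $u\neq v$, either $f(u)$ or $f^{-1}(u)$ is determined by $\sigma$ or $\eta$, and the ambiguity in placing $u$ in $f$ is only $|L(u)|$, not $d$. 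Summing over the $2^t$ ways to decide whether item (i) or (ii) of the claim holds for each constrained vertex, and over the $(d+1-t)!$ ways to complete $f$ on the free vertices, one gets a count of size $O\bigl(\prod_{i=1}^t|L(u_i)|\cdot(d+1-t)!\bigr)=O(|\Omega_U^L|)$, exactly cancelling against~\eqref{eq:size_L-col}. This cancellation between the permutation count and $|\Omega_U^L|$ is the whole point and does not appear in your encoding sketch. Likewise, your explanation of the extra factor $d$ when the updated vertex $v$ is constrained ("choice of anchor vertex/buffer identity") is off the mark: the correct source is that neither (i) nor (ii) pins down $f$ at $v$ itself, so one pays an additional factor $|L(v)|\leq d+1$ for the image or preimage of $v$.

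In short, the route is the same as the paper's, but the two technical lemmas that make the congestion bound go through are exactly the pieces you have not supplied; without them, the proposal is an outline rather than a proof.
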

\begin{lemma}\label{lem:LIvsLU}
If $L$ is $t$-feasible and $k=d+1$, then for $t$ constant we have
	$$ 
	\frac {\tau(\cL^L_{\text{int}})} {\tau(\cL_{\text{unif}}^L)} = O(1) \;.
	$$
\end{lemma}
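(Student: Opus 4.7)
The plan is to apply the fractional paths method (Proposition~\ref{prop:frac_path}) with $\cL = \cL^L_{\text{int}}$ and $\cL' = \cL^L_{\text{unif}}$. Both chains are symmetric with uniform stationary distribution on $\Omega_U^L$, and every nonzero transition rate of either chain equals $1/|\Omega_U^L|$. Since $\tau(\cL^L_{\text{unif}}) = 1$ was already noted in the text, after the transition-rate factors cancel, the task reduces to constructing, for every pair $(\alpha,\beta) \in \Omega_U^L \times \Omega_U^L$, a flow supported on paths of good transitions of $\cL^L_{\text{int}}$ whose congestion at any good transition $(\sigma,\eta)$ is bounded by a constant depending only on $t$.

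For every good pair $(\alpha,\beta)$ I send unit flow along the direct length-$1$ edge. For a bad pair, the definition of goodness provides a non-empty subset $S \subseteq \{u_1,\dots,u_t\}$ on which $\alpha$ and $\beta$ disagree pointwise yet satisfy $\alpha(S) = \beta(S)$, so $\beta|_S$ is a non-trivial permutation of $\alpha|_S$. I decompose this permutation into disjoint cycles and construct a canonical path that processes the cycles one at a time in a fixed order. Each cycle is broken by an intermediate colouring that places a \emph{fresh} colour $c^\star$ (outside the palette used by $\alpha$ and $\beta$ on the constrained vertices) on a canonically chosen vertex of the cycle, after which the cycle can be resolved by a bounded number of further good-pair transitions moving the remaining vertices to their target colours. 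The slack $|L(u_i)| \geq t+1$, together with $|S| \leq t$ and the budget $k = d+1$, guarantees that a valid choice of $c^\star$ exists at every step; when the single-vertex modification is not proper on the clique, one instead uses an intermediate that rearranges two constrained vertices simultaneously, which is legal in $\cL^L_{\text{int}}$. Every consecutive pair on the resulting path is good because the fresh colour destroys any colour-preserving symmetry on $S$, and the total path length is $O(t) = O(1)$.

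For the congestion analysis, fix a good transition $(\sigma,\eta)$ and consider the bad pairs whose canonical path traverses it. The position of $(\sigma,\eta)$ within the path (an $O(1)$ choice) together with $(\sigma,\eta)$ itself determines $(\alpha,\beta)$ up to the data of the still-unresolved cycle structure on $S$ and the fresh colours already committed; both have combinatorial complexity bounded by a function of $t$ and are therefore $O(1)$. Combining this with the $O(1)$ path length and the cancellation of rate factors, Proposition~\ref{prop:frac_path} yields $\tau(\cL^L_{\text{int}}) = O(1)\cdot \tau(\cL^L_{\text{unif}}) = O(1)$, as required.

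The main technical obstacle I anticipate is constructing the intermediate colourings so that each of them is simultaneously a valid $L$-colouring of the clique \emph{and} forms a good pair with its neighbour on the path. These two constraints pull against each other: the natural fresh colour that would destroy the colour-preserving symmetry on $S$ may fail to lie in $L(u)$ for the vertex $u$ at which it is being placed, or may clash with the colour of another vertex of the clique, forcing the intermediate to touch two constrained vertices at once. Verifying that the $t$-feasibility hypothesis $|L(u_i)| \geq t+1$ combined with the tight budget $k = d+1$ and $|S| \leq t$ always leaves enough room, and making the canonical rules for cycle ordering and fresh-colour selection rigid enough that the $O(1)$-ambiguity reconstruction of $(\alpha,\beta)$ from $(\sigma,\eta)$ genuinely holds, is where the technical bulk of the proof will lie.
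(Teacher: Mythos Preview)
There is a genuine gap in your congestion analysis, and it is more serious than the construction difficulties you anticipate in your final paragraph. Consider any transition $(\sigma,\eta)=(\xi^{i-1},\xi^i)$ on your canonical path for a bad pair $(\alpha,\beta)$ with $i<m$, so that $\eta\neq\beta$. By your construction the intermediates $\xi^1,\dots,\xi^{m-1}$ are obtained from $\alpha$ by modifying only constrained vertices (and perhaps $O(1)$ further vertices to accommodate fresh colours); in particular both $\sigma$ and $\eta$ agree with $\alpha$ on the free vertices up to $O(1)$ positions, so you can indeed reconstruct $\alpha$ with bounded ambiguity. But the only information about $\beta$ carried by $(\sigma,\eta)$ is the restriction of $\beta$ to the constrained vertices (via the cycle structure and the choice of fresh colours); the restriction of $\beta$ to the $d-t$ free vertices is completely undetermined. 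Every bad pair $(\alpha,\beta')$ with $\beta'$ agreeing with $\beta$ on $\{u_1,\dots,u_t\}$ produces the same intermediate sequence through step $m-1$, and there are $\Theta(|\Omega_U^L|)$ such $\beta'$. The congestion at $(\sigma,\eta)$ is therefore $\Theta(|\Omega_U^L|)$, not $O(1)$, and Proposition~\ref{prop:frac_path} only yields $\tau(\cL^L_{\text{int}}) = O(|\Omega_U^L|)$.

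This is a structural limitation of canonical paths in this setting, not a missing detail: a single intermediate colouring cannot simultaneously encode both $\alpha$ and $\beta$ on the free vertices, so for any canonical path of length at least two some transition will fail to pin down one of the endpoints. The paper resolves this by using genuinely \emph{fractional} paths. It fixes a short sequence $\xi_A^0,\dots,\xi_A^m$ of partial colourings on the set $A$ of constrained vertices with small lists, and then lets each full intermediate $\xi^i$ range over all of the at least $c(t)|\Omega_U^L|$ valid extensions of $\xi_A^i$ that make consecutive pairs good (Lemma~\ref{lem:typeIII}). Spreading unit flow uniformly over these $(c(t)|\Omega_U^L|)^{m-1}$ paths gives each path weight $O(|\Omega_U^L|^{-(m-1)})$, which exactly cancels the $|\Omega_U^L|^{m-1}$ undetermined intermediate choices in the congestion count. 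As a side remark, your proposed fallback of ``rearranging two constrained vertices simultaneously'' is itself problematic: swapping the colours of two constrained vertices is precisely a type-$3$-cycle of length two, hence not a good pair and not a transition of $\cL^L_{\text{int}}$.
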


\subsection{Comparing $\cL_U^L$ with $\cL^L_{\text{int}}$: the proof of Lemma~\ref{lem:LUvsLI}}\label{sec:clique_1}

Let $w$ be an additional vertex with $L(w)=[k]$, so $w$ is free. Consider an order on $U\cup \{w\}$ where $w$ is the smallest vertex. We can extend $\alpha\in \Omega_U^L$ to $U \cup \{w\}$, by letting $\alpha(w)$ be the unique colour not in $\alpha(U)$.
To every pair $\alpha,\beta\in \Omega_U^L$, we can assign a permutation $f=f(\alpha,\beta)$ on $U\cup \{w\}$ such that $f(x)=y$ if and only if $\alpha(x)=\beta(y)$. In particular, if $\alpha=\beta$ then $f$ is the identity permutation.
One can see the permutation $f$ as a blocking permutation: if $v=f(u)\in U$ for some $u\in U$, then $u$ blocks $v$ from being directly recoloured from $\alpha(v)$ to $\beta(v)$. However, if $v=f(w)\in U$, then $v$ can be directly changed to $\beta(v)$ in the colouring $\alpha$. 

It is useful to think about $f$ using its representation as a union of directed cycles. Throughout this section, by cycle in the permutation we mean a cycle of order at least $2$. If a cycle in $f$ contains~$w$, we will be able to recolour every vertex in the cycle by successively recolouring the vertices whose preimage is $w$. 
The main difficulty will arise from handling the other cycles that do not contain $w$, which correspond to circular blockings of vertices in $U$ (e.g. $u,v\in U$, $u$ blocks $v$ and $v$ blocks $u$). In this case, we will need to insert $w$ into this cycle, and then process it. The merging operation corresponds to recolouring a vertex in $\alpha$ with the only available colour, which must be in its list. This motivates the following classification: a cycle in $f$ is a \emph{type-1-cycle} if it contains $w$, a \emph{type-2-cycle} if it does not contain $w$ but has at least one free vertex and a \emph{type-3-cycle} if it does not contain $w$ and all its vertices are constrained. 

Recolouring a type-$3$-cycle $C$ might be hard because the only colours available in $L(C)$ might already be taken by other vertices in the clique. 
This motivates the definition of good pairs which governs~$\cL^L_{\text{int}}$.

\begin{defn}
A pair of colourings $(\alpha,\beta)$ is \emph{good} if $f(\alpha,\beta)$ contains no type-3-cycles.
\end{defn}
Let $(\alpha,\beta)$ be a good pair and $f=f(\alpha,\beta)$.
A \emph{$v$-swap} is an operation on $f$ that gives the permutation $\hat f$ obtained from $f$ by reassigning $\hat f(w)=f(v)$ and $\hat f(v)=f(w)$. These operations are in bijection with the valid transitions of $\cL^L_U$.

One can define recolouring sequences between $\alpha$ and $\beta$ using swaps. 
The order on $U\cup\{w\}$ gives a canonical way to deal with the cycles in $f(\alpha,\beta)$, processing the cycle with the smallest free vertex, at a time. 
Precisely, while there is a free vertex in a cycle of length at least $2$ of $f$, let $v$ be the smallest one and
\begin{itemize}
\item[-] if $v$ is in a type-$1$-cycle, then $v=w$. While $f(w)\neq w$, update $f$ by performing an $f(w)$-swap.
\item[-] if $v$ is in a type-$2$-cycle, then $v\in U$. Update $f$ by performing a $v$-swap. Then, while $f(w)\neq w$, update $f$ by performing an $f(w)$-swap.
\end{itemize}
As there are no type-$3$-cycles, after termination the procedure produces the identity permutation. As swaps correspond to valid recolouring moves, it gives a recolouring path $\gamma_{\alpha,\beta}$ from $\alpha$ to $\beta$ that uses transitions from $\cL_U^L$. Note that any vertex in $U$ is recoloured at most twice. In fact, a vertex is only recoloured twice if it is the smallest free vertex in a type-$2$-cycle.

For every transition $(\sigma,\eta)\in \cL_U^L$, define
$$
\Lambda_{\sigma,\eta}= \{(\alpha,\beta)\in \cL_{\text{int}}^L :\, (\sigma,\eta)\in \gamma_{\alpha,\beta}\}\;.
$$

Our goal is to prove Lemma~\ref{lem:LUvsLI} using the weighted canonical paths method from Section~\ref{sec:comparison}. To this end, the two following lemmas will help us analyse the congestion resulting from this construction.

\begin{lemma}\label{lem:at_most_2}
Given a transition $(\sigma,\eta)\in \cL_U^L$ and a permutation $f$, there are at most two good pairs $(\alpha,\beta)$ such that $f=f(\alpha,\beta)$ and $(\alpha,\beta)\in \Lambda_{\sigma,\eta}$. 
\end{lemma}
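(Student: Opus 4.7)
The plan is to fix $f$ and the transition $(\sigma,\eta)$ and show that a good pair $(\alpha,\beta)$ with $f(\alpha,\beta)=f$ and $(\alpha,\beta)\in\Lambda_{\sigma,\eta}$ is determined by at most two pieces of information. The first simplification is that, once $f$ is fixed, $\beta$ is completely determined by $\alpha$ via the relation $\beta(y)=\alpha(f^{-1}(y))$ for $y\in U\cup\{w\}$; so it suffices to bound the number of possible $\alpha$.

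Next I would use the canonical path description to locate the step of $\gamma_{\alpha,\beta}$ at which $(\sigma,\eta)$ occurs. The transition $(\sigma,\eta)$ recolours a unique vertex $v\in U$, and each swap in the path recolours exactly one vertex of $U$: processing a type-1-cycle recolours each of its non-$w$ vertices exactly once, while processing a type-2-cycle recolours the smallest free vertex $v'$ of the cycle exactly twice (the initial $v'$-swap and the terminal swap that closes the cycle) and every other vertex of the cycle exactly once. Since $(\alpha,\beta)$ is good, $f$ has no type-3-cycles, so these cases are exhaustive. Consequently $v$ is recoloured at most twice along $\gamma_{\alpha,\beta}$, and the time-step $t$ at which $(\sigma,\eta)$ could appear takes one of at most two values, both of which are determined by $f$ and $v$ alone (not by $\alpha$).

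The final step is to reconstruct $\alpha$ from each candidate $t$. Let $f_t$ denote the permutation reached after the first $t$ swaps of the canonical procedure applied to $f$; this depends only on $f$ and $t$, so it is known. The invariant $\alpha_t(x)=\beta(f_t(x))$ together with $\alpha_t=\sigma$ yields $\beta(y)=\sigma(f_t^{-1}(y))$, and hence $\alpha(x)=\beta(f(x))=\sigma(f_t^{-1}(f(x)))$, so $\alpha$ is uniquely determined by $\sigma$, $f$, and $t$. Summing over the at most two candidate steps gives at most two good pairs. The main technical point is the ``at most twice'' claim in the type-2 case; the reconstruction step is then a routine unpacking of the definitions.
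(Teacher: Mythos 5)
Your proof is correct and follows essentially the same route as the paper's: both proofs observe that the vertex sequence of the canonical path is determined by $f$ alone, that the recoloured vertex $v$ appears at most twice in that sequence (because only the smallest free vertex of a type-$2$-cycle is ever swapped twice), and that once the step index is fixed the pair $(\alpha,\beta)$ is uniquely reconstructible from $(\sigma,\eta)$ and $f$. The paper reconstructs $\alpha$ and $\beta$ by running the swap sequence backwards and forwards from $\sigma$ and $\eta$; you instead use the invariant $\alpha_t(x)=\beta(f_t(x))$ to solve for $\beta$ (and hence $\alpha$) directly, which is a cosmetic rather than substantive difference.
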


\begin{proof}
Let  $v$ be the vertex at which $\sigma$ and $\eta$ differ. The order in which we recolour the vertices is fixed by the permutation $f$. Let $v_1, \ldots, v_\ell$ be the vertices in the order they are recoloured with repetitions, where $v_i$ is the vertex recoloured at the $i$-th step. Given this sequence, at every step the only valid transition is to perform a $v_i$-swap. If the recolouring done by the transition  $(\sigma, \eta)$ corresponds to the $i^*$-th step in the sequence, then $\alpha$ and $\beta$ are fully determined. Indeed, $\beta$ can be recovered from $\eta$ by sequentially performing $v_i$-swaps for every $i>i^*$. Symmetrically, $\alpha$ can be recovered from $\sigma$ by performing $v_i$-swaps for every $i<i^*$ (recall that the $u$-swap operation is an involution).

Since every vertex is recoloured at most twice in a recolouring path, $v$ appears at most twice in the sequence $v_1, \ldots, v_\ell$. So there are at most $2$ choices for $i^*\in [\ell]$ with $v_{i^*}=v$, and by the argument above, there are at most two pairs $(\alpha, \beta)$, with $f(\alpha, \beta) = f$, and $(\alpha,\beta)\in \Lambda_{\sigma,\eta}$. \qedhere

\end{proof}

Using the previous lemma, we can bound the size of $\Lambda_{\sigma,\eta}$
\begin{lemma}
	\label{lem:countPaths}
	Let $t$ be a fixed constant. Suppose that $\sigma$ and $\eta$ differ at $v\in U$. If $v$ is free, then $|\Lambda_{\sigma,\eta}|= O\left(|\Omega_U^L|\right)$.
If $v$ is constrained, then	$|\Lambda_{\sigma,\eta}| = O(d |\Omega_U^L|)$.
\end{lemma}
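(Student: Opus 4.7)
The proof reduces to counting permutations via Lemma~\ref{lem:at_most_2}: writing $N$ for the number of distinct permutations $f = f(\alpha, \beta)$ with $(\alpha, \beta) \in \Lambda_{\sigma,\eta}$, one has $|\Lambda_{\sigma,\eta}| \leq 2N$, so it suffices to bound $N$.

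For the constrained case ($v \in \{u_1,\dots,u_t\}$), a crude bound is enough. Since $t$ is fixed, the lower bound in \eqref{eq:size_L-col} gives $|\Omega_U^L| \geq 2^t (d+1-t)! = \Omega_t(d!)$, while the trivial count of permutations of $U\cup\{w\}$ gives $N \leq (d+1)!$. Combining, $|\Lambda_{\sigma,\eta}| \leq 2(d+1)! = O(d \cdot |\Omega_U^L|)$.

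For the free case ($v \in \{u_{t+1},\dots,u_d\}$) we need the sharper bound $N = O(|\Omega_U^L|)$. The strategy is to encode each $f$ by the pair $(\beta,\xi)$, where $\beta \in \Omega_U^L$ is the colouring associated to $(\alpha,\beta)$ and $\xi$ is a small combinatorial ``witness''. Given $\beta$, the partial permutation $f_\sigma := \beta^{-1}\sigma$ is determined, and $f$ is obtained from $f_\sigma$ by re-inserting into appropriate cycles the vertices that were already processed before step $i^*$. I would classify $f$ by the role of the $v$-swap: either (a) $v$ lies in the type-$1$-cycle of $f$, so $f_\sigma(w)=v$ and $v$ is being reached by a $w$-rotation after some elements $x_1,\dots,x_j$; or (b) $v$ is the smallest free vertex of a type-$2$-cycle of $f$, in which case the $v$-swap is the initial swap (when $f_\sigma(w)=w$), a middle rotation, or the terminal swap. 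In each case $\xi$ records the case together with the ordered list of already-processed vertices in $v$'s cycle, and $(\beta,\xi)$ determines $f$.

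The hard part is controlling the total number of witnesses: a priori the ordered list in $\xi$ can be long. The key point is that these processed vertices must be fixed points of $f_\sigma$, and the canonical ``smallest-free-first'' rule together with $v$ being free constrains both which fixed points of $f_\sigma$ are used for previously-processed cycles and the order in which they were processed. Consequently, inserting a different ordered list of fixed points of $f_\sigma$ into $v$'s cycle corresponds, up to $O(1)$ multiplicity, to replacing $\beta$ by another colouring in $\Omega_U^L$; so the double count of $(\beta,\xi)$ telescopes to $O(|\Omega_U^L|)$ rather than the naive $O(d\cdot|\Omega_U^L|)$. Making this bijective correspondence precise — especially in case (b), where $v$ is recoloured twice and both occurrences must be handled consistently — is the technical heart of the argument.
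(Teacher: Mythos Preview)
Your treatment of the constrained case contains an arithmetical error: the assertion $2^t(d+1-t)! = \Omega_t(d!)$ is false for $t\geq 2$, since $(d+1-t)! = \Theta_t(d!/d^{t-1})$. When the constrained lists have size $\Theta(t)$ (which $t$-feasibility allows), \eqref{eq:size_L-col} gives $|\Omega_U^L| = \Theta_t((d+1-t)!)$, and then the crude permutation count $(d+1)!$ exceeds $d\,|\Omega_U^L|$ by a factor of order $d^{t-1}$. Your shortcut is valid only for $t\leq 1$.

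For the free case, your sketch is not a proof and, more importantly, never invokes the lists $L(u_i)$ of the constrained vertices. But the target bound $N = O(|\Omega_U^L|)$ genuinely depends on those sizes: with $t=1$ and $|L(u_1)|=2$, say, one has $|\Omega_U^L| = \Theta(d!)$, while the total number of permutations of $U\cup\{w\}$ is $(d+1)!$, so a factor of $d$ must be saved, and the only place it can come from is $u_1$. Your encoding by $(\beta,\xi)$ and the asserted ``telescoping'' concern only the free vertices and the smallest-free-first order; I do not see how they produce the required saving, and you yourself flag that the ``technical heart'' has not been carried out.

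The paper handles both cases at once with a much more direct count of permutations. The key is a dichotomy for each constrained vertex $u\neq v$: either $u$ is recoloured \emph{before} the transition $(\sigma,\eta)$, in which case $\eta(f(u)) = \beta(f(u)) = \alpha(u)\in L(u)$, so $f(u)$ lies in the set $\{z:\eta(z)\in L(u)\}$ of size $|L(u)|$; or $u$ is recoloured \emph{after}, in which case symmetrically $f^{-1}(u)$ lies in a set of size $|L(u)|$. Summing over the $2^t$ choices of this dichotomy, one pins down $t$ images of $f$ with $\prod_{i=1}^t |L(u_i)|$ options, and completes $f$ on the remaining elements in $(d+1-t)!$ ways; by \eqref{eq:size_L-col} this is $O_t(|\Omega_U^L|)$. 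When $v$ itself is constrained the dichotomy does not apply to $v$, costing one extra unconstrained choice of image or preimage and hence a factor $d+1$, which gives the $O(d\,|\Omega_U^L|)$ bound.
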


\begin{proof}
By Lemma~\ref{lem:at_most_2}, it suffices to bound the number of permutations $f$ which are compatible with $(\sigma,\eta)$. The key idea is the following claim that not all candidates for $f$ are compatible, as constrained vertices can only block and be blocked by vertices with colours in their lists.
\begin{cla}
	Let $(\alpha,\beta)\in \Lambda_{\sigma,\eta}$ with $f(\alpha,\beta)=f$. For every constrained vertex $u\in U$ with $u\neq v$, one of the following holds:
	\begin{enumerate}[label=\roman*),start=1]
		\item\label{item:const_i} $\eta(f(u)) = \beta(f(u))$;
		\item\label{item:const_ii} $\sigma(f^{-1}(u)) = \alpha(f^{-1}(u))$;
	\end{enumerate}
	Moreover,~\ref{item:const_i} holds if and only if $u$ is recoloured in $\gamma_{\alpha,\beta}$ before the transition $(\sigma, \eta)$.
\end{cla}
\begin{proof}
	By construction of the recolouring paths, constrained vertices are only recoloured once. Let $C$ be the cycle containing $u$. If $v \not \in V(C)$, then the vertices in $C$ are recoloured either all before $v$, or all after $v$. In particular we have $\sigma|_{V(C)} = \eta|_{V(C)}$, and it is equal to $\beta|_{V(C)}$ if $u$ is recoloured before $(\sigma,\eta)$ and to $\alpha|_{V(C)}$ otherwise. Since $f(u), f^{-1}(u) \in V(C)$, either~\ref{item:const_i} or~\ref{item:const_ii} holds.
	
	If $v \in V(C)$, then 
	\begin{itemize}
		\item[-] If $f(u)$ is not the  smallest free vertex in $C$, then assume that~\ref{item:const_i} does not hold. Since $f(u)$ is recoloured only once, this means that $\eta(f(u)) = \alpha(f(u))$. Consequently, $f(u)$ is recoloured after the transition $(\sigma, \eta)$, and since $u \neq v$, this is also the case for $u$. This means that $f^{-1}(u)$ is recoloured either during or after the transition $(\sigma, \eta)$, and in both cases~\ref{item:const_ii} holds.
	
		\item[-] Symmetrically, if $f^{-1}(u)$ is not the smallest free vertex in $C$, then assuming that~\ref{item:const_ii} does not hold we conclude that $u$ has been coloured before the transition $(\sigma,\eta)$ and that~\ref{item:const_i} holds.

		\item[-] Finally, if $f(u) = f^{-1}(u)$ is the smallest free vertex in $C$, then $v = f(u)$. In this case, either this is the first time $v$ is recoloured, so $u$ is recoloured after $(\sigma,\eta)$ and $\sigma(v) = \alpha(v)$, and~\ref{item:const_ii} holds, or it is the second time $v$ is recoloured, so $u$ is recoloured before $(\sigma,\eta)$ and $\eta(v) = \beta(v)$ and~\ref{item:const_i} holds.
	\end{itemize}
\end{proof}

Suppose that $v$ is free. As there are at most $t$ constrained vertices, there are at most $2^t$ choices to decide which of items~\ref{item:const_i} or~\ref{item:const_ii} holds for them. For $s\in [t]$, let $x_1,\dots,x_s$ be the constrained vertices that satisfy~\ref{item:const_i} and $y_1,\dots,y_{t-s}$ be the ones that satisfy~\ref{item:const_ii}.
For every $i\in [s]$, by definition of $f$ we have $\eta(f(x_i))=  \beta(f(x_i))= \alpha(x_i)\in L(x_i)$ and there are at most $|L(x_i))|$ choices $z\in U\cup \{w\}$ for $f(x_i)=z$, namely the ones with $\eta(z)\in L(x_i)$. Analogously, for $i\in [t-s]$, by definition of $f$ we have $\sigma(f^{-1}(y_i))=\alpha(f^{-1}(y_i))=\beta(y_i) \in L(y_i)$ and there are at most $|L(y_i)|$ choices $z\in U\cup\{w\}$ for $f(z)=y_i$. 
Thus, there are at most $\prod_{i=1}^s |L(x_i)|\prod_{i=1}^{t-s} |L(y_i)|= \prod_{i=1}^t |L(u_i)|$ choices for the images of $x_i$ and the preimages of $y_i$. Note that $f(x_i)\neq y_j$ for every $i,j$ as~\ref{item:const_i} holds if and only if $u$ is recoloured before $(\sigma,\eta)$ and $v$ is a free vertex. So these choices fix exactly $t$ images in $f$.
Finally, there are at most $(d+1-t)!$ ways to complete $f$ by choosing successively the image of the remaining elements. Using Lemma~\ref{lem:at_most_2} and~\eqref{eq:size_L-col}, we have
$$
|\Lambda_{\sigma,\eta}|\leq 2\cdot 2^t (d+1-t)! \prod_{i=1}^t (|L(u_i)|) \leq 2^{t+1} \left(\prod_{i=1}^t \frac{|L(u_i)|}{|L(u_i)|-i+1}\right) |\Omega^L_U|  = O(|\Omega^L_U|)\;.
$$
Assume now that $v$ is constrained. There are at most $2^{t-1}$ ways to choose a configuration of~\ref{item:const_i} and~\ref{item:const_ii} for the remaining constrained vertices. In comparison to the case where $v$ is free, as neither \ref{item:const_i} nor~\ref{item:const_ii} holds for $u=v$, there is an extra factor $|L(v)|\leq d+1$ to choose either the image or the preimage of $v$ in $f$. Similarly as before, it follows that
$$
|\Lambda_{\sigma,\eta}|= O(d |\Omega_U^L|)\;.
$$
\end{proof}

We are now in a good situation to prove Lemma~\ref{lem:LUvsLI}.

\begin{proof}[Proof of Lemma~\ref{lem:LUvsLI}]
	
In order to apply the weighted canonical paths theorem to $\cL_U^L$ and $\cL^L_{\text{int}}$, we need to choose a weight function $\omega$ for all $(\sigma, \eta)\in \cL_U^L$. Let $v$ be the vertex where $\sigma$ and $\eta$ differ. We define $\omega$ as follows
\begin{align}\label{eq:def_weight}
		\omega(\sigma,\eta)&:= 
		\begin{cases}
		d/p_i &\text{ if $v=u_i$ for }i\in [t],\\
	   1/p_i &\text{ if $v=u_i$ for }i\in [d]\sm [t].
		\end{cases}
\end{align}		
As for every recolouring path $\gamma$, each element of $U$ is recoloured at most twice, it follows that
		\begin{align}\label{eq:weight_path}
	|\gamma|_\omega \leq 2\left(\sum_{i=1}^t \frac{d}{p_i} + \sum_{i=t+1}^d \frac{1}{p_i}\right)\;.
	\end{align}
Both stationary distributions of $\cL_U^L$ and $\cL_{\text{int}}^L$ are uniform on $\Omega_U^L$. Also recall that $\cL^L_{\text{int}}[\alpha\to \beta]=1/|\Omega_U^L|$ for any good pair $(\alpha,\beta)$.

Using Lemma~\ref{lem:countPaths}, regardless of whether the vertex where $\sigma$ and $\eta$ differ is free or constrained, we can bound the congestion of the transition $(\sigma,\eta)$ as follows
\begin{align*}
\rho_{\sigma,\eta} &= \frac{1}{\cL^L_U[\sigma\to \eta]\omega(\sigma,\eta)}\sum_{(\alpha, \beta)\in \Lambda_{\sigma,\eta}} \frac {|\gamma_{\alpha,\beta}|_\omega} {|\Omega_U^L|} \\
 &\leq \frac{|\Lambda_{\sigma, \eta}|}{\cL^L_U[\sigma\to \eta]\omega(\sigma,\eta)|\Omega_U^L|} \cdot \max_{\alpha,\beta} |\gamma_{\alpha,\beta}|_\omega  \\
 &= O(\max_{\alpha,\beta} |\gamma_{\alpha,\beta}|_\omega) =O\left(\sum_{i=1}^t \frac{d}{p_i} + \sum_{i=t+1}^d \frac{1}{p_i}\right)\;. 
\end{align*}
The desired result follows from Proposition~\ref{prop:weighted_cano_path}.
\end{proof}

\subsection{Comparing $\cL^L_{\text{int}}$ with $\cL^L_{\text{unif}}$: the proof of Lemma~\ref{lem:LIvsLU}}\label{sec:clique_2}

The proof of this lemma uses the fractional paths method with uniform weights. To define the paths, we first split the set of constrained vertices into two subsets. Let $A$ be the set of constrained vertices $u$ satisfying $|L(u)|\leq 2(3t+2)$, and let $B$ be the remaining ones. Before we define the paths, we will need the following result.
\begin{lemma}
	\label{lem:typeIII}
	Let $\alpha,\beta\in \Omega_U^L$ and let $\xi_A$ be an $L$-colouring of the vertices in $A$. Assume that $\alpha|_A$ and $\xi_A$ differ on at most one vertex, and similarly for $\xi_A$ and $\beta|_A$. There exists a constant $c(t)>0$ such that there are at least $c(t)|\Omega_U^L|$ colourings $\xi\in \Omega_U^L$ satisfying that $\xi|_A = \xi_A$, and both $(\alpha,\xi)$ and $(\xi, \beta)$ are good pairs. 
\end{lemma}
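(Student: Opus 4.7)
The plan is to identify a \emph{safe} sub-family $\mathcal{S}\subseteq\Omega_U^L$ of extensions of $\xi_A$ that is automatically good for both pairs, and then bound $|\mathcal{S}|\geq c(t)|\Omega_U^L|$ by direct counting. Call $\xi\in\Omega_U^L$ \emph{safe} if $\xi|_A=\xi_A$ and $\xi(v)\in L(v)\setminus\bigl(\alpha(A\cup B)\cup\beta(A\cup B)\bigr)$ for every $v\in B$.

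First I would show that safety implies goodness. A hypothetical type-$3$ cycle in $f(\alpha,\xi)$ lives on some $S\subseteq A\cup B$ with $|S|\geq 2$ and forces $\xi(S)=\alpha(S)$ as colour sets. If $S\subseteq A$, the hypothesis that $\xi_A$ differs from $\alpha|_A$ on at most one vertex $v^*$ forces either $\xi|_S=\alpha|_S$ pointwise (so $f|_S$ is the identity, contradicting cycle length $\geq 2$) or $v^*\in S$, in which case properness of $\xi_A$ and $\alpha$ on the clique $A$ gives $\xi_A(v^*)\notin\xi_A(S\setminus\{v^*\})=\alpha(S\setminus\{v^*\})$ and $\alpha(v^*)\notin\alpha(S\setminus\{v^*\})$, so the set equality reduces to $\xi_A(v^*)=\alpha(v^*)$, contradicting the hypothesis. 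If $S\cap B\neq\emptyset$, pick $v\in S\cap B$; safety gives $\xi(v)\notin\alpha(A\cup B)\supseteq\alpha(S)$, contradicting $\xi(v)\in\xi(S)=\alpha(S)$. The identical argument with $\beta$ in place of $\alpha$ settles $(\xi,\beta)$.

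Next I would lower-bound $|\mathcal{S}|$ by enumerating safe colourings greedily in the order $A, B, F$. The vertices of $A$ are forced to $\xi_A$, contributing factor $1$. When $v\in B$ is processed, at most $|A|+|B|-1\leq t-1$ colours are already used, and at most $|\alpha(A\cup B)\cup\beta(A\cup B)|\leq 2t$ colours of $L(v)$ are forbidden by safety, leaving at least $|L(v)|-3t+1$ choices. The free vertices then contribute $(d+1-|A|-|B|)!$. Against the standard upper bound $|\Omega_U^L|\leq\prod_{v\in A\cup B}|L(v)|\cdot(d+1-|A|-|B|)!$ obtained by the same greedy argument with constrained vertices processed first, this yields
\[
\frac{|\mathcal{S}|}{|\Omega_U^L|}\geq\prod_{v\in A}\frac{1}{|L(v)|}\cdot\prod_{v\in B}\frac{|L(v)|-3t+1}{|L(v)|}\geq(6t+4)^{-t}\left(\frac{3t+6}{6t+5}\right)^{t}=:c(t)>0,
\]
using $|L(v)|\leq 6t+4$ on $A$, $|L(v)|\geq 6t+5$ on $B$, and $|A|,|B|\leq t$.

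The main obstacle is producing a bound independent of the clique size $d$: the naive ratio features factors $|L(v)|\leq d+1$ in the denominator for $v\in B$, which can completely destroy the estimate. The key is to keep the symmetric product $\prod_{v\in A\cup B}|L(v)|$ in the upper bound on $|\Omega_U^L|$ so that these large factors cancel; then the residual ratio $(|L(v)|-3t+1)/|L(v)|$ is bounded below by a positive constant precisely because the threshold $|L(v)|>2(3t+2)$ defining $B$ is comfortably larger than $3t$. Without this quantitative separation between the $B$-list size and $t$, no $d$-uniform lower bound would survive.
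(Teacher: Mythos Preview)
Your proof is correct and follows essentially the same approach as the paper: define a family of ``safe'' extensions of $\xi_A$ by forcing each $v\in B$ to avoid the colours $\alpha(A\cup B)\cup\beta(A\cup B)$, show this rules out type-$3$ cycles, and count greedily to get a $c(t)$ fraction of $|\Omega_U^L|$. The constants differ slightly (the paper gets $(2(3t+2))^{-t}$), and your handling of the case $S\subseteq A$ via the colour-set identity $\xi(S)=\alpha(S)$ is a little more elaborate than necessary---since a cycle of length at least $2$ has no fixed points, every $u\in S$ satisfies $\alpha(u)\neq\xi(u)$, which already contradicts $|\{u\in A:\alpha(u)\neq\xi_A(u)\}|\leq 1$---but your argument is valid as written.
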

\begin{proof}
	Assume that $A=\{u_1, \ldots, u_a\}$, so $|B|=t-a$. Construct the colouring $\xi$ by setting $\xi|_A=\xi_A$ and then choosing the colours in $U\setminus A$ one by one, starting from the vertices in $B$, as follows:
	\begin{itemize}
		\item[-] for each $v\in B$, choose $\xi(v) \notin \xi(A)\cup \alpha(A\cup B)\cup \beta(A\cup B)$ that has not been used already; 
		\item[-] for each free vertex, choose a colour not used by the vertices already coloured in $\xi$.
	\end{itemize}
	When we choose the colour for $v \in B$, there are at most $3t+2$ forbidden colours. Indeed, there are at most $a+2(t-a)+2$ colours in $\xi(A)\cup \alpha(A\cup B)\cup \beta(A\cup B)$, and at most $t-a$ colours used by the previous vertices in $B$ that have already been coloured by $\xi$. Thus, for $v\in B$ there are at least $|L(v)| - (3t+2)\geq |L(v)|/2$ choices for $\xi(v)$. Using~\eqref{eq:size_L-col}, the total number of colourings extending $\xi_A$ is at least
	\begin{align*}
	\prod_{v\in B} (|L(v)| - (3t+2)) \cdot (d+1 - t)! &\geq \frac {(d+1 - t)! } {2^{t-a}} \prod_{v\in B} |L(v)|   \\
	&\geq \frac {(d+1 - t)! } {2^{t} (3t+2)^a} \prod_{i=1}^t |L(u_i)| \\ 
	&\geq  \frac {|\Omega_U^L|} {(2(3t+2))^t}\;.
	\end{align*}
	It suffices to show that for any such extension $\xi$ of $\xi_A$, $(\alpha, \xi)$ and $(\xi, \beta)$ are good pairs. We only prove it for $(\alpha, \xi)$ as the other case is symmetric. Assume by contradiction that $f(\alpha,\xi)$ contains a type-$3$-cycle, and let $C$ be this cycle. Then $V(C) \cap B = \emptyset$. Indeed, if $v \in V(C)\cap B$, then there exists a vertex $u\in A\cup B$ with $\alpha(u)=\xi(v)$, but this contradicts the fact that $\xi(v)\notin \alpha(A\cup B)$. Thus, $V(C) \subseteq A$, but this is not possible since  $\alpha|_A$ and $\xi|_A$ differ by at most one vertex.
\end{proof}

We can now compare the relaxation times of $\cL^L_{\text{int}}$ and $\cL^L_{\text{unif}}$.
\begin{proof}[Proof of Lemma~\ref{lem:LIvsLU}] 

We use the fractional paths method. Note that both $\cL^L_{\text{int}}$ and $\cL^L_{\text{unif}}$ are ergodic, reversible and symmetric and that their stationary distributions are uniform on $\Omega_U^L$. 

It suffices to  define a collection of fractional paths $\Gamma_{\alpha,\beta}$ between any two colourings $\alpha$ and $\beta$ in~$\Omega_U^L$. Since there are at most $t$ constrained vertices and their lists have size at least $t+1$, we can find a sequence of $L$-colourings of $A$, $\alpha|_A=\xi_A^0, \xi_A^1, \dots ,\xi_A^m = \beta|_A$, such that any two consecutive colourings differ by one vertex. Let $M$ be an upper bound on the length of these paths for every $\alpha,\beta$, which only depends on $t$. 

For $m\in [M]$, let $\Gamma^{m}_{\alpha, \beta}$ be the collection of all the paths of the form $\alpha=\xi^0,\xi^1,\dots,\xi^m = \beta$ where $\xi^i|_A=\xi^i_A$, and $(\xi^{i-1},\xi^i)$ is a good pair for $i\in [m]$. By Lemma~\ref{lem:typeIII}, for each $i\in [m-1]$ there are at least $c(t) |\Omega_U^L|$ choices for $\xi^i$, independently of the choices of the other $\xi^j$ for $j \neq i$. Thus, $|\Gamma^{m}_{\alpha, \beta}| \geq (c(t)|\Omega_U^L|)^{m-1} $, and if $g_{\alpha,\beta}$ is the uniform flow from $\alpha$ to $\beta$, then each $\gamma\in \Gamma^m_{\alpha, \beta}$ satisfies $g_{\alpha,\beta}(\gamma)\leq (c(t) |\Omega_U^L|)^{-(m-1)}$. Let $\Gamma^m= \{\Gamma^{m}_{\alpha,\beta}:\, \alpha,\beta\in \Omega_U^L\}$ and $\Gamma=  \cup_{m=1}^M\Gamma^{m}$.
	
We need to bound the congestion of any good pair $(\sigma,\eta)$. We fix $m\in [M]$ and will bound the contribution of $\Gamma^m$ to it. Let $\gamma= \xi^0,\xi^1,\dots,\xi^m \in \Gamma^m$ containing $(\sigma,\eta)$. 
There are at most $m$ choices for $i\in [m]$ such that $\sigma=\xi^{i-1}$ and $\eta=\xi^i$. Then, there are at most $|\Omega_U^L|^{m -1}$ choices for $\xi^j$ with $j\notin \{i-1,i\}$. Each such path satisfies $g(\gamma)\leq  (c(t)|\Omega_U^L|)^{-(m-1)}$, for any uniform flow $g$. Thus,
$$
\sum_{\gamma\in \Gamma^{m}\atop \gamma\ni (\sigma,\eta)}g(\gamma)|\gamma| \leq m  c(t)^{-(m-1)}\;.
$$
Hence,
$$
\rho_{\sigma,\eta} \leq \sum_{m=1}^M m c(t)^{-(m-1)}= O(1)\;,
$$
and, by Proposition~\ref{prop:frac_path}, we obtain the desired result.
\end{proof}

\subsection{Dynamics of two cliques intersecting at a vertex}
\label{sec:biclique}

In this section we study a similar dynamics, that we will also use in the main proof. Let $z$ be a vertex. For $d \geq 1$, let $X=\{z,x_1,\dots, x_{d-1}\}$ and $Y=\{z,y_1,\dots,y_{d-1}\}$ two sets of vertices and consider the graph with vertex set $Z=X\cup Y$ where each set $X$ and $Y$ induces a clique. As before, we fix the number of colours $k=d+1$. 
For a list assignment $L$ of $Z$, recall the definition of $\Omega_Z^L$ and the dynamics $\cL^L_Z$, where we denote by $p_z,p_1,\dots,p_{d-1},q_1,\dots,q_{d-1}$ the parameters for $z, x_1,\dots,x_{d-1},y_1,\dots,y_{d-1}$, respectively.

Let $t, t_X,t_Y$ be non-negative integers with $t\geq t_X+t_Y$ and $1 \leq t_X,t_Y\leq d-1$. Without loss of generality, we will assume that $d$ is sufficiently large with respect to $t$. If this is not the case, then $|\Omega_Z^L|$ is a constant depending on $t$, and the relaxation time is $O(1)$.

A list assignment $L$  is \emph{$(t,t_X,t_Y)$-feasible} if 
\begin{itemize}
\item[-] $|L(z)|=d+1$;
\item[-] $|L(x_i)|,|L(y_j)|\geq t$ for every $i\in [t_X]$ and $j\in [t_Y]$;
\item[-] $|L(x_i)|,|L(y_j)|=d+1$ for every $i\in [d-1]\sm [t_X]$ and $j\in [d-1]\sm [t_Y]$;
\end{itemize}We define free and constrained vertices as before, with the exception of $z$ which is considered a constrained vertex.
If $L$ is $(t,t_X,t_Y)$-feasible, then
\begin{align}\label{eq:size_L-col2}
 |\Omega^L_Z| &\geq \Big(\prod_{i=1}^{t_X} |L(x_i)|-i+1\Big)\Big(\prod_{j=1}^{t_Y} |L(y_j)|-j+1\Big) \cdot (d+1-t) (d-t_X)! (d-t_Y)!\\
 |\Omega^L_Z| &\leq \Big(\prod_{i=1}^{t_X} |L(x_i)|\Big)\Big(\prod_{j=1}^{t_Y} |L(y_j)|\Big)\cdot  d(d-t_X)! (d-t_Y)!\;.
\end{align}
As before, the chain is symmetric and, if $L$ is $(t,t_X,t_Y)$-feasible for some $t,t_X,t_Y$ satisfying the conditions stated above, it follows from results below that it is ergodic, so its stationary distribution is uniform.

We will prove a bound analogous to the one in Lemma~\ref{lem:rel_time_clique} on the relaxation time of $\cL^L_{Z}$.
\begin{lemma}\label{lem:rel_time_clique2}
	If $L$ is $(t,t_X,t_Y)$-feasible and $k=d+1$, then for $t$ constant we have
	$$
	\tau(\cL^L_{Z})= O\left(\frac{d^2}{p_z}+ \sum_{i=1}^{t_X} \frac{d}{p_i} + \sum_{i=t_X+1}^{d-1} \frac{1}{p_i}+ \sum_{j=1}^{t_Y} \frac{d}{q_j} + \sum_{j=t_Y+1}^{d-1} \frac{1}{q_j}\right)\;.
	$$
\end{lemma}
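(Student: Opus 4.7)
The plan is to mirror the proof of Lemma~\ref{lem:rel_time_clique}. Denote by $F$ the right-hand side of the lemma. We introduce the uniform chain $\cL^L_{\text{unif}}$ on $\Omega^L_Z$ with constant transition rate $1/|\Omega^L_Z|$ (so $\tau(\cL^L_{\text{unif}}) = 1$) and an intermediate chain $\cL^L_{\text{int}}$ that allows only transitions between ``good pairs'' (defined below); we then show $\tau(\cL^L_Z)/\tau(\cL^L_{\text{int}}) = O(F)$ and $\tau(\cL^L_{\text{int}}) = O(1)$, and combine these two bounds to conclude.

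For the permutation structure, attach auxiliary free vertices $w_X, w_Y$ with full lists $[d+1]$, making $X' := X \cup \{w_X\}$ and $Y' := Y \cup \{w_Y\}$ into $(d+1)$-cliques. Each $\alpha \in \Omega^L_Z$ extends uniquely so that $\alpha(w_X)$ is the colour missing from $X$ and $\alpha(w_Y)$ the colour missing from $Y$, and a pair $(\alpha, \beta)$ defines two permutations $f_X$ on $X'$ and $f_Y$ on $Y'$ by $f_X(u) = v$ iff $\alpha(u) = \beta(v)$. The pair is \emph{good} if neither $f_X$ nor $f_Y$ contains a type-3-cycle (a non-trivial cycle of fully constrained vertices avoiding the corresponding auxiliary vertex), where $z$ is treated as constrained in both permutations.

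To compare $\cL^L_Z$ with $\cL^L_{\text{int}}$ we use weighted canonical paths (Proposition~\ref{prop:weighted_cano_path}). For a good pair $(\alpha,\beta)$, the canonical path processes the cycles of $f_X$ first via the swap procedure of Lemma~\ref{lem:LUvsLI}, then processes the analogous permutation on $Y$ computed from the current state to $\beta$. The shared vertex $z$ is recoloured at most once (during the $f_X$-phase, with at most one preparatory swap on $Y \setminus \{z\}$ to free the target colour when needed, ensuring the single Glauber transition recolouring $z$ is valid). We use the weight function
$$
\omega(\sigma, \eta) = \begin{cases}
d^2/p_z & \text{if } \sigma, \eta \text{ differ at } z,\\
d/p_i & \text{if they differ at a constrained } x_i,\\
1/p_i & \text{if they differ at a free } x_i,\\
d/q_j & \text{if they differ at a constrained } y_j,\\
1/q_j & \text{if they differ at a free } y_j,
\end{cases}
$$
so $|\gamma|_\omega = O(F)$ for every canonical path. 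An extension of Lemma~\ref{lem:countPaths} bounds the number of good pairs whose canonical path uses a given transition $(\sigma, \eta)$: the bound is $O(|\Omega^L_Z|)$ if the updated vertex is free, $O(d\,|\Omega^L_Z|)$ for a constrained vertex other than $z$, and $O(d^2\,|\Omega^L_Z|)$ when the vertex is $z$; the last estimate gains an extra factor of $d$ because $z$ appears as a constrained vertex in both $f_X$ and $f_Y$, with each clique contributing an independent choice for its image/preimage. Combined with $\cL^L_Z[\sigma \to \eta] = p_v$ this yields $\rho_{\max} = O(F)$.

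To bound $\tau(\cL^L_{\text{int}}) = O(1)$, adapt Lemma~\ref{lem:LIvsLU}. Split the constrained vertices of $X \cup Y$ into $A$ (short lists) and $B$ (long lists), and include $z$ alongside $A$. An extension of Lemma~\ref{lem:typeIII} shows that for any partial assignment $\xi_{A\cup\{z\}}$ that agrees with $\alpha$ and $\beta$ up to a single swap, a positive fraction of $|\Omega^L_Z|$ extensions $\xi$ make both $(\alpha, \xi)$ and $(\xi, \beta)$ good pairs; the uniform fractional flow then has constant congestion. The main obstacle is the counting at transitions that recolour $z$: its interaction with both cliques contributes the $d^2$ factor, which is exactly absorbed by the weight $d^2/p_z$. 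Verifying the $O(d^2|\Omega^L_Z|)$ bound on $\Lambda_{\sigma,\eta}$ at $v = z$, and ensuring that the preparatory swaps needed to make $z$'s recolouring Glauber-valid do not blow up the congestion at other transitions, is the most delicate part of the proof.
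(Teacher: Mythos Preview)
Your two-step comparison through $\cL^L_{\text{int}}$, the definition of good pairs via the two permutations $f_X,f_Y$ with $z$ treated as constrained, the weight function, and the congestion counts $O(|\Omega^L_Z|)$, $O(d\,|\Omega^L_Z|)$, $O(d^2|\Omega^L_Z|)$ all match the paper. The gap is in how you build the canonical path $\gamma_{\alpha,\beta}$.

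Your ``at most one preparatory swap on $Y\setminus\{z\}$'' need not exist. If $\beta(z)$ is occupied in $Y$, the blocking vertex is $y=f_Y^{-1}(z)$, and the only colour it could receive in one step is the missing colour $\alpha(w_Y)$; but $y$ may be constrained with $\alpha(w_Y)\notin L(y)$. Goodness of $(\alpha,\beta)$ only guarantees that \emph{some} vertex in $z$'s $f_Y$-cycle is free, not that $f_Y^{-1}(z)$ is. Even when the swap is legal, the permutation on $Y$ recomputed from the post-swap state can acquire a new type-$3$ cycle (the fragment of $z$'s original cycle strictly between $w_Y$ and $y$ may consist entirely of constrained vertices), so the subsequent $Y$-phase can fail.

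The paper avoids all of this by \emph{interleaving} rather than sequencing: run $\gamma^X_{\alpha_X,\beta_X}$ up to, but not including, the single step that would recolour $z$; at that moment $\beta(z)$ is unused in $X$. Now run $\gamma^Y_{\alpha_Y,\beta_Y}$ in full; when its one recolouring of $z$ occurs, $\beta(z)$ is simultaneously free in both cliques, so that step is a valid Glauber move on $Z$. Then resume $\gamma^X$. No auxiliary swap is needed, every constrained vertex is still recoloured at most once, and your congestion bookkeeping goes through unchanged.

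A smaller discrepancy in the second comparison: the paper keeps $z$ in $B$ (since $|L(z)|=d+1>2(3t+2)$), not alongside $A$. With $z\in B$ one colours $\xi(z)$ avoiding $\xi(A)\cup\alpha(A\cup B)\cup\beta(A\cup B)$, which still leaves $\Theta(d)$ choices and rules out type-$3$ cycles through $z$ in both permutations; the analogue of Lemma~\ref{lem:typeIII} then yields $\Theta(|\Omega^L_Z|)$ good extensions and the congestion is $O(1)$ via the same loose $|\Omega^L_Z|^{m-1}$ count as in the single-clique proof. Freezing $z$ as you propose drops the extension count by a factor of $d$, and the loose count no longer suffices.
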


The proof follows the same lines as the proof of Lemma~\ref{lem:rel_time_clique}, so we will only sketch it, stressing the parts where the two differ. For $\alpha\in \Omega_{Z}^L$, denote by $\alpha_X$ and $\alpha_Y$ the restrictions of $\alpha$ onto $X$ and $Y$, respectively. As we did for the clique, we extend $\alpha$ by adding two artificial vertices: $w_X$ in $X$ and $w_Y$ in $Y$, and by assigning to them the only available colour in each set. Define the permutations $f_X$ and $f_Y$ as before.
 
We say that $(\alpha,\beta)$ is good if and only if both $(\alpha_X,\beta_X)$ and $(\alpha_Y,\beta_Y)$ are good (i.e., there is no type-$3$-cycle in the permutations $f_X$ and $f_Y$). Remember that $z$ is a constrained vertex, and as a consequence if $(\alpha, \beta)$ are good, every cycle in the permutations contains a free vertex different from $z$. 
Redefine the dynamics $\cL^L_{\text{unif}}$ on $\Omega_Z^L$ as in~\eqref{eq:LU}, and, using the new definition of good pairs, redefine $\cL^L_{\text{int}}$ on $\Omega_Z^L$ as in~\eqref{eq:LI}. We proceed in two steps by bounding the ratios of the relaxation times of $\cL^L_Z $ and $\cL^L_{\text{int}}$ and of $ \cL^L_{\text{int}}$ and $\cL^L_{\text{unif}} $ .
\begin{lemma}
	\label{lem:LUvsLI2}
	If $k=d+1$, then for $t$ constant we have
	$$\frac {\tau(\cL^L_Z)} {\tau(\cL^L_{\text{unif}})} = O\left(\frac{d^2}{p_z}+ \sum_{i=0}^{t_X} \frac{d}{p_i} + \sum_{i=t_X+1}^{d-1} \frac{1}{p_i}+ \sum_{j=0}^{t_Y} \frac{d}{q_j} + \sum_{j=t_Y+1}^{d-1} \frac{1}{q_j}\right)\;. $$
\end{lemma}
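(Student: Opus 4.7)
The plan is to mirror the proof of Lemma~\ref{lem:LUvsLI}, applying the weighted canonical paths method (Proposition~\ref{prop:weighted_cano_path}) to compare $\cL_Z^L$ with $\cL^L_{\text{int}}$ (combined later with the obvious analogue of Lemma~\ref{lem:LIvsLU} to reach $\cL^L_{\text{unif}}$). Both chains are reversible with uniform stationary distribution on $\Omega_Z^L$. Assign to each transition of $\cL_Z^L$ the weight
\begin{align*}
\omega(\sigma,\eta)=\begin{cases}
d^2/p_z & \text{if $\sigma,\eta$ differ at $z$},\\
d/p_i,\ d/q_j & \text{if they differ at a constrained $x_i$ or $y_j$},\\
1/p_i,\ 1/q_j & \text{if they differ at a free $x_i$ or $y_j$}.
\end{cases}
\end{align*}

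For a good pair $(\alpha,\beta)$, extend it by the auxiliary vertices $w_X\in X$ and $w_Y\in Y$, each taking the unique colour missing in its clique, and form the two blocking permutations $f_X$ on $X\cup\{w_X\}$ and $f_Y$ on $Y\cup\{w_Y\}$. Since $(\alpha,\beta)$ is good, neither permutation has a type-3-cycle. I would first process all cycles of $f_X$ and $f_Y$ that do not contain $z$ independently in each clique, exactly as in the proof of Lemma~\ref{lem:LUvsLI}; note that the intermediate configuration remains proper, because any $v$ in $X$ with $\beta(v)=\alpha(z)$ must lie in the $f_X$-cycle of $z$ and hence is untouched during this phase. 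Then I would process the two cycles $C_X$ and $C_Y$ containing $z$ in a coordinated fashion. The delicate point is that a swap that recolours $z$ in $X$ assigns it the current missing colour $c$ of $X$, and $c$ might already be occupied by some vertex of $Y\setminus\{z\}$ that was updated during the previous phase. I would handle this by a short case analysis (on whether the missing colours in $X$ and $Y$ coincide, and on whether $C_X$ and $C_Y$ are type-1 or type-2) inserting at most a constant number of auxiliary swaps in $Y$ before performing the $z$-swap in $X$. This ensures that $z$ is recoloured $O(1)$ times and every other vertex at most twice, yielding
\[|\gamma_{\alpha,\beta}|_\omega = O\Big(\tfrac{d^2}{p_z}+\sum_{i=1}^{t_X}\tfrac{d}{p_i}+\sum_{i=t_X+1}^{d-1}\tfrac{1}{p_i}+\sum_{j=1}^{t_Y}\tfrac{d}{q_j}+\sum_{j=t_Y+1}^{d-1}\tfrac{1}{q_j}\Big).\]

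For the congestion, I would adapt the claim inside Lemma~\ref{lem:countPaths} applied separately to each permutation: for any transition $(\sigma,\eta)$ at a vertex $v$ and any good pair $(\alpha,\beta)\in\Lambda_{\sigma,\eta}$, every constrained vertex $u\neq v$ in $X$ (resp.\ in $Y$) must satisfy either $\eta(f_X(u))=\beta(f_X(u))$ or $\sigma(f_X^{-1}(u))=\alpha(f_X^{-1}(u))$ (resp.\ for $f_Y$). Counting choices exactly as in Lemma~\ref{lem:countPaths}, each such constrained vertex fixes the image or preimage up to $|L(u)|$ options. If $v$ is free, this yields $|\Lambda_{\sigma,\eta}|=O(|\Omega_Z^L|)$; if $v$ is constrained with $v\neq z$, one gains an extra factor $|L(v)|\le d+1$, giving $|\Lambda_{\sigma,\eta}|=O(d|\Omega_Z^L|)$; and if $v=z$, this extra factor arises once in each of $f_X$ and $f_Y$, so $|\Lambda_{\sigma,\eta}|=O(d^2|\Omega_Z^L|)$. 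Combined with $\cL_Z^L[\sigma\to\eta]=\Theta(p_v)$ and the chosen weights, in every case $\rho_{\sigma,\eta}=O(\max|\gamma_{\alpha,\beta}|_\omega)$, and Proposition~\ref{prop:weighted_cano_path} yields the claimed bound.

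The main obstacle is the coordinated handling of the two $z$-cycles: whenever the missing colour of $X$ is occupied in $Y$ (or vice versa), a naïve $z$-swap produces an improper colouring and must be preceded by an auxiliary recolouring in the opposite clique. Verifying that such auxiliary operations can always be completed with only $O(1)$ additional recolourings across all configurations of $C_X, C_Y$, and that the counting for $|\Lambda_{\sigma,\eta}|$ at $v=z$ truly gains only the factor $d^2$ (rather than anything larger from the interaction between $f_X$ and $f_Y$), is the crux of the argument and the principal way in which it diverges from the single-clique proof of Lemma~\ref{lem:LUvsLI}.
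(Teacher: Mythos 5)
Your framework is the right one: weighted canonical paths, the two blocking permutations $f_X$ and $f_Y$, the notion of good pair, the weight assignment $\omega$, and the counting of $|\Lambda_{\sigma,\eta}|$ (in particular the extra factor $d$ appearing once per permutation when the transition is at $z$, giving $O(d^2|\Omega_Z^L|)$) all match the paper's computation. The gap is in the construction of the recolouring path $\gamma_{\alpha,\beta}$, which is precisely the step you yourself flag as ``the crux''.

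Your plan is to first process all cycles of $f_X$ and $f_Y$ that avoid $z$, and then handle $C_X$ and $C_Y$ with ``at most a constant number of auxiliary swaps''. This cannot work as stated. After your first phase, the vertex $f_X^{-1}(z)$ in $X$ and the vertex $f_Y^{-1}(z)$ in $Y$ both still carry colour $\beta(z)$, so the $z$-swap is blocked on \emph{both} sides. To clear $\beta(z)$ in $Y$ by legal moves of $\gamma^Y$ you must march around $C_Y$ from its smallest free vertex all the way to $f_Y^{-1}(z)$, which is $\Theta(|C_Y|)$ swaps, not $O(1)$; a one-off ``detour'' recolouring of $f_Y^{-1}(z)$ to the current missing colour of $Y$ need not be legal if $f_Y^{-1}(z)$ is constrained. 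Moreover, ad-hoc auxiliary swaps that are not dictated by $(f_X,f_Y)$ and the canonical cycle ordering would break exactly the structure that the analogue of Lemma~\ref{lem:at_most_2} (path determined by the permutations and the position of the transition) needs to bound $|\Lambda_{\sigma,\eta}|$.

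The paper avoids all of this with a cleaner ordering that does not process the two cliques' non-$z$-cycles first. It runs $\gamma^X_{\alpha_X,\beta_X}$ up to, but not including, the $z$-swap. At that pause point the missing colour of $X$ is exactly $\beta(z)$, by the swap semantics. It then runs the \emph{entire} path $\gamma^Y_{\alpha_Y,\beta_Y}$. Every move of $\gamma^Y$ other than the one at $z$ touches only $Y\setminus\{z\}$ and is therefore unaffected by $X$; the move at $z$ recolours $z$ to $\beta(z)$, which is the current missing colour of $Y$ by construction of $\gamma^Y$ and the current missing colour of $X$ by the pause, so it is a legal move in $Z$. Finally it resumes $\gamma^X$, whose remaining moves again avoid $z$. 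This requires no auxiliary swaps, keeps every vertex recoloured at most twice (and $z$ and the constrained vertices at most once), and keeps the path a deterministic function of $(f_X,f_Y)$, so the $|\Lambda_{\sigma,\eta}|$ counting goes through verbatim. Your proof should adopt this interleaving instead of the auxiliary-swap idea.
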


\begin{proof}[Sketch of the proof]
We reuse the recolouring paths defined in Lemma~\ref{lem:LUvsLI} for the clique. Given two colourings $\alpha$ and $\beta$, denote by $\gamma^X_{\alpha_X, \beta_X}$ and $\gamma^Y_{\alpha_Y, \beta_Y}$ the recolouring paths constructed for each of the two sets $X$ and $Y$ independently. Observe that for each path in these sets, each constrained vertex is recoloured at most once. In particular, $z$ changes its colour at most once. Construct the recolouring path $\gamma_{\alpha, \beta}$ in the following way:
\begin{itemize}
	\item[-] apply the recolourings in $\gamma^X_{\alpha_X,\beta_X}$ until $z$ needs to be recoloured;
	\item[-] apply the recolourings in  $\gamma^Y_{\alpha_Y,\beta_Y}$;
	\item[-] apply the remaining recolourings in  $\gamma^X_{\alpha_X,\beta_X}$.
\end{itemize}
Note that in the second step, $z$ can be safely recoloured with $\beta(z)$ because its target colour is available in $X$, since the next move according to $\gamma^X_{\alpha_X,\beta_X}$ would be to recolour $z$ with colour $\beta(z)$. We need to bound the congestion of each transition for this collection of paths. 

For a transition $(\sigma,\eta)$, let $\Lambda_{\sigma,\eta}$ be the set of good pairs $\alpha,\beta$ such that $\gamma_{\alpha,\beta}$ contains $(\sigma,\eta)$.
The analogues of Lemmas~\ref{lem:at_most_2} and~\ref{lem:countPaths} hold in this setting. 
In particular, for every $(\sigma,\eta)$ differing at a vertex $v$, if $v$ is free then $|\Lambda_{\sigma,\eta}|= O\left(|\Omega_Z^L|\right)$, if $v\neq z$ is constrained then $|\Lambda_{\sigma,\eta}| = O(d |\Omega_Z^L|)$, and if $v=z$ then $|\Lambda_{\sigma,\eta}| = O(d^2 |\Omega_Z^L|)$ as we get an extra factor $d$ for each permutation. This allows us to bound the congestion of a transition as in the previous section, and so Lemma~\ref{lem:LUvsLI2} follows.
\end{proof}

\begin{lemma}\label{lem:LIvsLUbis}
	If $L$ is $(t,t_X,t_Y)$-feasible and $k=d+1$, then for $t$ constant we have
	$$ 
	\frac {\tau(\cL^L_{\text{int}})} {\tau(\cL_{\text{unif}}^L)} = O(1) \;.
	$$
\end{lemma}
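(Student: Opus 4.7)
The plan is to mirror the proof of Lemma~\ref{lem:LIvsLU} using the fractional paths method, adapted to the two-clique setting. Both $\cL^L_{\text{int}}$ and $\cL^L_{\text{unif}}$ are reversible with the same uniform stationary distribution on $\Omega_Z^L$, so by Proposition~\ref{prop:frac_path} it suffices to build fractional paths through good transitions with congestion $O(1)$. As before, I partition the non-$z$ constrained vertices of $Z$ by list size into a set $A$ of ``short-list'' vertices with $|L(v)|$ bounded by a suitable constant $C(t)$ analogous to $2(3t+2)$, and the remaining ``long-list'' ones. The shared vertex $z$ is placed, together with the long-list constrained vertices, into the well-supplied set $B$; this is legitimate because $|L(z)|=d+1$, so forbidding $O(t)$ colours at $z$ still leaves $\Omega(d)$ choices.

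The main technical step is an analogue of Lemma~\ref{lem:typeIII}: for any $L$-colouring $\xi_A$ of $A$ that differs from $\alpha|_A$ and $\beta|_A$ on at most one vertex each, there are at least $c(t)\,|\Omega_Z^L|$ extensions $\xi\in\Omega_Z^L$ of $\xi_A$ such that both $(\alpha,\xi)$ and $(\xi,\beta)$ are good. I construct such $\xi$ in three stages: first choose $\xi(z)\in L(z)$ avoiding $\xi(A)\cup\alpha(A\cup B)\cup\beta(A\cup B)$ (at most $O(t)$ forbidden colours, hence $\Omega(d)$ choices); then for each $v\in B\setminus\{z\}$ in clique $X$ (symmetrically $Y$), choose $\xi(v)\in L(v)$ avoiding the colours already assigned in $X$ together with $\alpha(A\cup B)\cup\beta(A\cup B)$, leaving $\Omega(|L(v)|)$ choices by the definition of $B$; finally extend arbitrarily to the free vertices of each clique, giving at least $(d-t_X)!\,(d-t_Y)!$ valid completions by the standard injection count. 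Comparing with the upper bound on $|\Omega_Z^L|$ in~\eqref{eq:size_L-col2} yields $\geq c(t)|\Omega_Z^L|$ extensions. To verify goodness, I rule out type-$3$-cycles in $f_X$ and $f_Y$ separately: if such a cycle $C$ in $f_X(\alpha,\xi)$ contained some $v\in B\cap X$ (possibly $z$), its predecessor $u$ would satisfy $\alpha(u)=\xi(v)\notin\alpha((A\cup B)\cap X)$, forcing $u$ to be free (as $\alpha|_X$ is injective), a contradiction; if instead $V(C)\subseteq A\cap X$, then $\alpha|_A$ and $\xi_A$ would disagree on every vertex of $C$, contradicting $|V(C)|\geq 2$ and the hypothesis that they differ on at most one vertex. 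The cases of $f_Y$ and $(\xi,\beta)$ are symmetric.

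With this extension lemma in hand, the remainder of the proof is a direct transposition of the argument for the single clique. A recolouring sequence $\alpha|_A=\xi_A^0,\dots,\xi_A^m=\beta|_A$ with consecutive entries differing on a single vertex exists with length $m$ bounded by a constant $M(t)$: since $z\notin A$, the set $A$ decomposes as the disjoint union of the cliques $A\cap X$ and $A\cap Y$, and in each clique the lists have size $\geq t\geq |A\cap X|+1$ (using $|A\cap X|\leq t_X\leq t-t_Y\leq t-1$ and $t_Y\geq 1$), so the list-colouring flip graph is connected. Taking the uniform flow over the resulting family $\Gamma_{\alpha,\beta}$ and summing the congestion of any fixed good transition $(\sigma,\eta)$ over $m\leq M(t)$ gives a geometric series bounded by $O(1)$, and Proposition~\ref{prop:frac_path} concludes. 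The only subtlety relative to the single-clique case is that $z$ touches $f_X$ and $f_Y$ simultaneously; the choice to place $z$ into $B$ and colour it first with the usual colour-avoidance rule is what lets the goodness condition be enforced in both cliques at once, and this is the main obstacle that the argument has to overcome.
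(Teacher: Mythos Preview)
Your proposal is correct and follows essentially the same approach as the paper's sketch: partition the constrained vertices into a short-list set $A$ (with $z\in B$ since $|L(z)|=d+1$), prove the two-clique analogue of Lemma~\ref{lem:typeIII}, build a bounded-length recolouring sequence on $A$ by handling $A\cap X$ and $A\cap Y$ independently (using $z\notin A$), and conclude via Proposition~\ref{prop:frac_path}. Your write-up is in fact more detailed than the paper's sketch---in particular your explicit three-stage construction of $\xi$ (colouring $z$ first to handle both permutations simultaneously) and the careful type-$3$-cycle exclusion in each $f_X,f_Y$ make precise what the paper leaves implicit.
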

\begin{proof}[Sketch of the proof]
The lemma can be proved using the same steps as in the proof of Lemma~\ref{lem:LIvsLU}. Let $A_X$ and $A_Y$ be the set of constrained vertices with lists of size at most $2(3t+2)$ in $X$ and $Y$, respectively. Let $A=A_X\cup A_Y$ and let $B$ be the set of constrained vertices that are not in $A$. As $z$ is a constrained vertex with $|L(z)|=d+1 > 2(3t+2)$, we have $z\in B$. The analogous of Lemma~\ref{lem:typeIII} still holds in this setting.

For any $\alpha,\beta\in \Omega_Z^L$, we would like to find a sequence $\alpha|_A = \xi^0_A, \dots, \xi^m_A= \beta|_A$ of colourings of $A$ such that each consecutive pair differs only at one vertex. As $z\not\in A$ and as all constrained vertices in $A$ have a list of size at least $t+1$, this sequence can be found by  independently recolouring $A_X$ and $A_Y$.
Arguing as in the proof of Lemma~\ref{lem:LIvsLU}, we can construct many recolouring paths between $\alpha$ and $\beta$ with transitions that correspond to good pairs. Then, Proposition~\ref{prop:frac_path} implies the desired result.
\end{proof}

\section{Glauber dynamics on edge-colourings of trees}\label{sec:gen}

In this section we prove our main theorem. We follow a similar approach to the one of Lucier and Molloy in~\cite{LucMol} for vertex-colourings, by recursively splitting the tree into smaller subtrees using block dynamics. However, there are several points where our strategies differ.

\subsection{Relaxation time of block dynamics}
\label{sec:relaxBlockTree}

In all this section, we will assume that $G=({V_G},{E_G})$ is a $d$-regular tree, that is every internal vertex has degree \emph{exactly} $d$. We also assume that $k=d+1$. 
\begin{defn}
A subtree $T$ of $G$ is splitting if one of the following holds.
\begin{itemize}
\item[-] $T$ is a single edge, 
\item[-] $T$ has fringe interior boundary $|\overline\partial T| \leq 2$ and if $|\overline\partial T|=2$ with $\overline\partial T=\{e,f\}$, then $e$ and $f$ are not incident.
\end{itemize} 
\end{defn}
Fix $\mu\in \Omega_{E_G}$, and fix $T = (V, E)$ a splitting subtree of $G$. If $T$ is not a single edge, then it has fringe boundary, so it is $d$-regular. The central point of our proof is to study $\cL_{E}^\mu$ by decomposing it into the dynamics of its subtrees using the block dynamics defined in Section~\ref{sec:block}. We will assume that $T$ is rooted in one of the two following ways.

\paragraph{Vertex-rooted trees:} The root of $T$ is  $r\in V$, an internal vertex of $T$. Let $e_1,\dots,e_d$ be the edges incident to $r$. For each $i\in [d]$, we consider the block formed by the edges of the subtree hanging from $e_i$ (inclusive of $e_i$).

\paragraph{Edge-rooted trees:} The root of $T$ is an edge $e=xy$ where $x$ and $y$ are internal vertices of $T$ (so, $e\notin \overline\partial T$).
Let $e_1,\dots,e_{2d-2}$ be the edges incident to $e$. We let $\{e\}$ be a block, and for every $i\in [2d-2]$, we consider the block formed by the edges of the subtree hanging from $e_i$.

\medskip

In both cases, we denote by $\cE=\{E_1, \ldots, E_r \}$ the block partition  described above. Note that each block in $\cE$ contains at most one edge incident to edges in other blocks. Let $H=(U,F)$ be the subgraph induced by these edges. Note that in the case of a vertex rooted tree, $H$ is a star, and in the case of an edge rooted tree, $H$ is a bi-star: two stars joined by an edge. For each $i \in [r]$, let $T_i$ be the subtree with edge set $E_i$. 

Throughout this section we will make the following two assumptions on $\cE$.
\begin{enumerate}[label=(A\arabic*)]	
	\item\label{item:A1} $T_i$ is splitting for every $i\in [r]$;
	\item\label{item:A2}  $\cL_{E_i}^\sigma$ is ergodic for every $i\in [r]$ and every $\sigma \in \Omega_{E}^\mu$.
\end{enumerate}

Thus, we can define the reduced block dynamics on $\cE$ with boundary condition $\mu$. Recall from Section~\ref{sec:block} that its state space is $\Omega_\cR$, the restriction to $H$ of the colourings in $\Omega_{E}^\mu$. Moreover, its transition matrix is given for any $\hat\sigma\neq  \hat\eta$: 
\begin{align}
\cR_{\cE}^\mu[\hat\sigma \rightarrow \hat\eta]&=\begin{cases}
g_i \pi^{i, \hat\sigma}_{\text{proj}}(\hat\eta) 
& \text{if there exists }i\in [r]\text{ such that }\hat\eta=\eta|_U\text{ for some }\eta\in \Omega_{E_i}^{\hat\sigma},\\
0& \text{otherwise.}
\end{cases}
\end{align}
where $\pi^{i, \hat\sigma}_{\text{proj}}$ is the projection of $\pi_{E_i}^{\hat\sigma}$ onto $F$~(see~\eqref{eq:proj_2}). To bound the relaxation time on $T$, we will proceed in two steps. First we compare the original dynamics $\cL_\cE^\mu$ to the reduced block dynamics on $\cE$ using Propositions~\ref{prop:mar} and~\ref{prop:red} from Section~\ref{sec:block}. Then, we bound the relaxation time for the reduced block dynamics using the results from  Section~\ref{sec:clique}. The three following lemmas will help us for the second step. They show that the transitions rates and the stationary distribution of the reduced block dynamics are close to uniform.
 We will first prove bounds on the stationary distribution for the reduced block dynamics, and then proceed to bound the relaxation time for the reduced block dynamics.

\begin{lemma}\label{lem:red_dyn}
	Assuming~\ref{item:A1}--\ref{item:A2}, the reduced block dynamics $\cR_{\cE}^\mu$ is ergodic and reversible, and its stationary distribution $\pi_\cR$ is the projection of $\pi_{E}^\mu$ onto $F$.
\end{lemma}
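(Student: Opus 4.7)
The statement has three parts: irreducibility, reversibility, and identification of the stationary distribution. Reversibility of $\cR_{\cE}^\mu$ with respect to the projection $\pi_{\mathrm{proj}}$ of $\pi_{E}^\mu$ onto $F$ is exactly the conclusion of Lemma~\ref{lem:reversible_red}; although that lemma was stated in the vertex setting, its proof only uses uniformity of $\pi_{E}^\mu$ and the block-partition structure, so it carries over verbatim to edge colourings (via the line graph). Once reversibility and irreducibility are in hand, identification of $\pi_{\mathrm{proj}}$ as the unique stationary distribution is automatic. So the substantive task is to show that $\cR_{\cE}^\mu$ is irreducible on $\Omega_{\cR}$.

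I would deduce irreducibility of $\cR_{\cE}^\mu$ from irreducibility of the (unreduced) weighted block dynamics $\cB_{\cE}^\mu$ on $\Omega_{E}^\mu$. Indeed, any path $\sigma = \sigma^0, \sigma^1, \ldots, \sigma^m = \eta$ in $\cB_{\cE}^\mu$ projects to a sequence $\hat\sigma^0, \hat\sigma^1, \ldots, \hat\sigma^m$ in $\Omega_{\cR}$ (possibly with repetitions), and whenever two consecutive projections differ, the corresponding transition of $\cR_{\cE}^\mu$ has positive rate: a block update $(\sigma^t, \sigma^{t+1})$ supported on $E_i$ witnesses that $\hat\sigma^{t+1} \in \Omega_{E_i}^{\hat\sigma^t}$, so $\cR_{\cE}^\mu[\hat\sigma^t \to \hat\sigma^{t+1}] = g_i \, \pi^{i,\hat\sigma^t}_{\mathrm{proj}}(\hat\sigma^{t+1}) > 0$ (using also that $g_i > 0$ thanks to (A2)).

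To establish irreducibility of $\cB_{\cE}^\mu$ for given $\sigma, \eta \in \Omega_{E}^\mu$, I would build a $\cB_{\cE}^\mu$-path in two stages. In \emph{Stage~1} I modify the boundary edge values on $U$ one at a time via block updates, transforming $\sigma|_U$ into $\eta|_U$. The subgraph $H$ is a star (vertex-rooted case, $|U|=d$) or a bi-star (edge-rooted case, $|U|=2d-1$) coloured with $k = d+1$ colours, so there is a spare colour at every vertex; this lets us move between any two valid colourings of $H$ via single-edge changes using the spare colour as a buffer, each such change being realised as a block update on the corresponding $E_i$. That the block update has positive rate relies on the existence, after each single-edge boundary modification, of a valid colouring of $E_i$ attaining the prescribed new value of $e_i$; this can be proved by a greedy top-down colouring argument on $T_i$, using the splitting structure of $T_i$, the fringe-boundary condition $|\overline{\partial} T_i| \leq 2$, and the availability of a spare colour at each internal vertex when $k = d+1$. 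In \emph{Stage~2}, once the boundary values on $U$ already match $\eta|_U$, a single block transition brings each $E_i$ from its current colouring to $\eta|_{E_i}$, which is a valid move because $\eta|_{E_i}$ is compatible with the current colouring on $E \setminus E_i$ (the two now agree on the boundary of $E_i$). The main technical point is the extension argument in Stage~1; once it is verified, the two-stage construction yields the required $\cB_{\cE}^\mu$-path and completes the proof.
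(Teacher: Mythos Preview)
Your argument is correct in outline but takes a different, more roundabout route than the paper. The paper observes directly that, since each block $E_i$ meets $H$ in a single edge $e_i$, the transitions of $\cR_\cE^\mu$ coincide with those of the list-colouring Glauber dynamics $\cL_F^L$ on the line graph of $H$, where $L(e)=[k]\setminus\mu(N(e)\cap\partial T)$ for $e\in\overline\partial T$ and $L(e)=[k]$ otherwise. Ergodicity of $\cR_\cE^\mu$ is then equivalent to ergodicity of $\cL_F^L$, which the paper obtains for free from the relaxation-time bounds already proved: Lemma~\ref{lem:rel_time_clique} in the vertex-rooted case ($H^\ell$ a clique, $L$ being $1$-feasible) and Lemma~\ref{lem:rel_time_clique2} in the edge-rooted case ($H^\ell$ two cliques sharing a vertex, $L$ being $(2,1,1)$-feasible). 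Reversibility is handled the same way you do, via Lemma~\ref{lem:reversible_red}.

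Your detour through $\cB_\cE^\mu$ is unnecessary: Stage~1 already produces a path of single-edge changes in $\Omega_\cR$, and together with your greedy extension argument (showing each such step has positive $\cR_\cE^\mu$-rate) this gives irreducibility of $\cR_\cE^\mu$ directly; Stage~2 is only needed for the full dynamics. More substantively, your ``spare colour at every vertex'' claim for connectivity of the $H$-colourings is the real crux, and it is not complete as stated: when $\overline\partial T\neq\emptyset$, one edge of $H$ (or one on each side in the bi-star case) carries a list of size exactly $2$ inherited from $\mu$, and the global spare colour need not lie in that list, so you cannot buffer through that edge. The fix is easy---first free the constrained edge's target colour by moving the unconstrained edge holding it to the spare colour, then flip the constrained edge, then handle the free edges---but it should be said; this is exactly the $t$-feasible irreducibility the paper offloads to Lemmas~\ref{lem:rel_time_clique} and~\ref{lem:rel_time_clique2}. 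On the positive side, your greedy top-down extension argument makes explicit a point the paper leaves implicit when asserting that $\cR_\cE^\mu$ and $\cL_F^L$ have the same transitions.
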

\begin{proof}
Let $H^\ell$ be the line graph with vertex set $F$, the set of edges of $H$. Consider the Glauber dynamics $\cL_ {F}^L$ with the following list constraints on $e\in F$
	\begin{itemize}
		\item $L(e) = [k] \setminus \mu(N(e) \cap \partial T)$ if $e \in \overline \partial T$,
		\item $L(e) = [k]$ for every other edge.
	\end{itemize}
Recall that $N(e)$ is the set of edges incident to $e$.	Then, since each block $E_i$ contains only one edge in $H$, the reduced block dynamics $\cR_\cE^\mu$ has exactly the same transitions as $\cL_ {F}^L$, but with possibly different probability transitions. Thus, $\cR_\cE^\mu$ is ergodic if and only if $\cL_ {F}^L$ is.
	
 If $T$ is vertex-rooted, then $H$ is a star, and  $H^\ell$ is a clique. Additionally, since $T$ is splitting, the two edges in $\overline \partial T$ are not adjacent, and in particular only one is in $H$. This edge, if it exists, is assigned a list of length $2$,  so $L$ is $1$-feasible. The ergodicity of $\cL_{F}^L$ follows from Lemma~\ref{lem:rel_time_clique} with $t\leq 1$.
		 
 If $T$ is edge-rooted, then $H$ is a bi-star, and $H^\ell$ is composed of two cliques intersecting at one vertex. Moreover, the two edges in $\overline \partial T$ cannot be in the same side of the bi-clique, and each has a list of length at least $2$. So $L$ is $(2,1,1)$-feasible. The ergodicity of $\cL_{F}^L$ follows from Lemma~\ref{lem:rel_time_clique2} with $t_X,t_Y\leq 1$ and $t\leq 2$.

Lemma~\ref{lem:reversible_red} implies that the reduced block dynamics is reversible for the projection of $\pi_{E}^\mu$ onto $F$, concluding the proof.
\end{proof}

Before giving a bound on the relaxation time of the reduced block dynamics, we will prove some bounds on its stationary distribution $\pi_\cR$ to show that it deviates from a uniform distribution by at most a constant factor. To this end, the following lemma is a technical tool that we will reuse later. It shows that, given a boundary configuration, under the uniform distribution the probability that an edge $e_i$ is assigned an available colour is close to uniform.
\begin{lemma}\label{lem:each_col}
	For any subtrees $T_i$ with edge-set $E_i$, any $e_i\in \overline{\partial} T_i$ and any $\sigma \in \Omega^\mu_E$, let $C=\sigma(N(e_i)\cap \partial T_i)$. Assuming~\ref{item:A1}--\ref{item:A2}, for every $c\in [k]\setminus C$, we have 
	$$
	\pi_{E_i}^{{\sigma}}(\{\xi\in \Omega_{E_i}^{\sigma}:\,  \xi(e_i)=c\}) = \frac{1}{2}(1+O(1/d))\;.
	$$ 
	Moreover, if $|\overline{\partial} T_i| = 1$,	then
	$$
	\pi_{E_i}^{\sigma}(\{\xi\in \Omega_{E_i}^{\sigma}:\,  \xi(e_i)=c\}) = 1/2\;.
	$$ 
\end{lemma}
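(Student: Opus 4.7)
The plan is to exploit the $c_1\leftrightarrow c_2$ swap symmetry, where $\{c_1,c_2\}=[k]\setminus C$. Note that $|C|=d-1$ (the $d-1$ edges of $\partial T_i$ at the leaf endpoint $v$ of $e_i$ carry pairwise distinct colours), so there are exactly two candidates for $\xi(e_i)$. When $|\overline\partial T_i|=1$, the involution on $\Omega_{E_i}^\sigma$ that swaps $c_1$ and $c_2$ on every edge of $E_i$ preserves the boundary condition (as $c_1,c_2\notin C$) and directly pairs up colourings with $\xi(e_i)=c_1$ and those with $\xi(e_i)=c_2$, yielding the exact equality $1/2$ and handling the ``moreover'' clause (including the single-edge case). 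When $|\overline\partial T_i|=2$, let $e'$ be the other boundary edge with leaf endpoint $v'$ and write $[k]\setminus\sigma(N(e')\cap\partial T_i)=\{c_1',c_2'\}$. If $\{c_1,c_2\}=\{c_1',c_2'\}$ or $\{c_1,c_2\}\cap\{c_1',c_2'\}=\emptyset$, the global swap still preserves boundary compatibility at $v'$ as well (in the disjoint case because $\xi(e')\notin\{c_1,c_2\}$ is unaffected by the swap), and we still get exactly $1/2$.

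The only remaining case is $|\{c_1,c_2\}\cap\{c_1',c_2'\}|=1$, in which the global swap may conflict with the fixed boundary edges at $v'$. Here, for each $\xi$, I would use a Kempe-chain swap: let $K(\xi)$ be the connected component of $e_i$ in the subgraph of $T_i$ formed by edges coloured $c_1$ or $c_2$. Because $T_i$ is a tree and this subgraph has maximum degree~$2$, $K(\xi)$ is a simple path. If $K(\xi)$ does not reach $e'$, swapping $c_1\leftrightarrow c_2$ along $K(\xi)$ yields a valid $\xi'\in\Omega_{E_i}^\sigma$, giving a bijection between the ``good'' colourings with $\xi(e_i)=c_1$ and those with $\xi(e_i)=c_2$. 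If $K(\xi)$ does reach $e'$, it must coincide with the unique $v$-to-$v'$ path $P$ in $T_i$; by the splitting non-adjacency condition $|P|\ge 3$, and a parity argument on the alternation of $c_1,c_2$ along $P$ shows that such ``bad'' colourings occur for only one of the two values of $\xi(e_i)$. Writing $B$ for the total number of bad colourings and $N^{\mathrm{good}}$ for the (common) count of good colourings, the lemma reduces to $B=O(N^{\mathrm{good}}/d)$.

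To bound this ratio, root $T_i$ at $v$, expand at the vertex $u$ adjacent to $v$ through $e_i$, and set $Z(c)=|\{\xi\in\Omega_{E_i}^\sigma:\xi(e_i)=c\}|$. Summing over the allowed colourings of the $d-1$ other edges at $u$, the $d-2$ free subtrees of $u$ contribute factors $A_j$ that are independent of their top colour (by the colour-relabelling symmetry), while the single subtree $T_1$ containing $v'$ contributes a count $Z_1(\cdot)$ that does depend on the top colour. Direct enumeration yields
\[
Z(c)=(d-1)!\,\Big(\prod_{j\ge 2} A_j\Big)\,(S_1-Z_1(c)) \quad \text{with } S_1=\sum_{c\in[k]} Z_1(c),
\]
so $\pi(c_1)-1/2=(Z_1(c_2)-Z_1(c_1))/(2(2S_1-Z_1(c_1)-Z_1(c_2)))$. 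Iterating the same identity along $P$ (at each internal vertex of the path only one direction leads to $v'$), the normalised deviation $\delta_j(c):=Z_j(c)/S_j-1/(d+1)$ obeys the contraction $\delta_{j-1}(c)=-\delta_j(c)/d$. At the bottom ($T_p=\{e'\}$) one computes $\delta_p=O(1)$, so after the $p-1\ge 2$ levels of contraction needed to reach $f_1$, we obtain $|Z_1(c_1)-Z_1(c_2)|/S_1=O(1/d^{p-1})=O(1/d)$ and hence $|\pi(c_1)-1/2|=O(1/d)$. The main obstacle is verifying this geometric contraction cleanly: it hinges on the averaging effect of summing over $d-1$ colours at every internal vertex, and must be carried out while keeping track of the boundary constraint $\xi(e')\in\{c_1',c_2'\}$ throughout the recursion.
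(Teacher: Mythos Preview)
Your argument is correct, but the paper takes a more direct route. It observes that for any fixed colouring $\xi_P$ of the path $P$ joining $e_i$ and $e'$ (including both), the number of extensions of $\xi_P$ to a colouring in $\Omega_{E_i}^\sigma$ is independent of $\xi_P$: at each internal vertex of $P$ there are exactly $(d-1)!$ ways to colour the $d-2$ side edges, and each side subtree (having no boundary constraint) contributes a colour-independent factor. Thus the ratio $\pi_{E_i}^\sigma(\xi(e_i)=c)$ equals $|\Omega_{E_P}^\sigma(c)|/|\Omega_{E_P}^\sigma|$ for colourings of the path alone, and a crude count along $P$ (with $2$ choices at each end and $d$ or $d-1$ choices at each interior edge) pins this between $\tfrac12\cdot\tfrac{d-1}{d}$ and $\tfrac12\cdot\tfrac{d}{d-1}$.

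Your contraction $\delta_{j-1}=-\delta_j/d$ is essentially an inductive unrolling of this same path calculation, driven by the identical observation that side subtrees contribute colour-independent factors $A_j$; the paper just collapses the recursion into a single ``reduce to $P$, then count'' step. Your refinement does buy the sharper bound $|\pi(c)-\tfrac12|=O(1/d^{s-3})$, which the paper's cruder path count does not extract. Two minor remarks: (i) the Kempe-chain interlude is superfluous, since your recursion already handles all subcases of $|\overline\partial T_i|=2$ uniformly, regardless of how $\{c_1,c_2\}$ meets $\{c_1',c_2'\}$; and (ii) when $s=|V_P|=4$ you obtain only $p-1=s-3=1$ contraction step, not $\ge 2$ as stated --- but one contraction already yields $|\delta_1|=O(1/d)$, so the conclusion stands.
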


\begin{proof}
	
	First assume that $\overline{\partial} T_i= \{e_i,f_i\}$. As $G$ is $d$-regular, and $T_i$ is splitting, in particular its boundary is fringe. Since $k =d+1$, this implies that there are exactly two colours available for $e_i$ and two colours for $f_i$.

	We will bound $|\Omega_{E_i}^\sigma|$ and $|\Omega_{E_i}^\sigma(c)|$, the number of colourings in $\Omega_{E_i}^\sigma$ that assign $c$ to $e_i$. Let $P=(V_P,E_P)$ be the unique path in $T_i$ that connects $e_i$ and $f_i$ , including both edges, and let $s=|V_P|$. As $T$ is splitting and $|\overline\partial T_i|=2$, we have $s\geq 4$. If we fix a colouring $\xi_P$ of $E_P$, observe that the number of colourings of $\xi \in \Omega_{E_i}^\sigma$, such that $\xi|_P = \xi_P$ is independent of $\xi_P$. Indeed, if we remove $E_P$, we obtain a collection of rooted subtrees $T'_1,\ldots, T'_{s-2}$, where $T'_j$ has root $v_j\in V_P$. Given $\xi_P$, there are exactly $(d-1)!$ ways to colour the edges of $T'_j$ incident to $v_j$, and for each internal vertex, there are exactly $d!$ ways to choose a colouring of the edges hanging from it.

Therefore, in order to bound the ratio $|\Omega_{E_i}^\sigma(c)| / |\Omega_{E_i}^\sigma|$, we only need to compute $|\Omega_{E_P}^\sigma|$, and $|\Omega_{E_P}^\sigma(c)|$, respectively the number of edge-colourings of $P$ compatible with $\sigma$, and the number of these colourings $\xi_P$ for which $\xi_P(e_i) = c$. 
We can obtain a colouring of $P$ by first colouring $e_i$ and $f_i$, and then choosing the colour of the other edges in $P$ in the order they appear on the path from $e_i$ to $f_i$. As $s\geq 4$, there is at least one edge in $E_P\sm\{e_i,f_i\}$. For each of these edges except for the last one, there are $d$ choices of colours. For the last edge there are either $d$ or $d-1$ choices.
	It follows that 
	\begin{align*}
	4d^{s-1}(d-1) &\leq |\Omega^{\sigma}_{E_P}| \leq 4 d^{s}\;,\\
	2d^{s-1}(d-1) &\leq 	|\Omega^{\sigma}_{E_P}(c)| \leq  2 d^{s} \;.
	\end{align*}
	We conclude that 
	$$
	\pi_{E_i}^{\sigma}( \{\xi\in \Omega_{E_i}^\sigma:\, \xi(e)=c\} ) = \frac {|\Omega_{E_i}^\sigma(c)|} {|\Omega_{E_i}^\sigma|} = \frac {|\Omega_{E_P}^\sigma(c)|} {|\Omega_{E_P}^\sigma|} =\frac{1}{2}(1+O(1/d))\;.
	$$
	The second statement follows by a simple symmetry argument.
\end{proof}

 \begin{lemma}\label{lem:stat_dist}
 	Assuming~\ref{item:A1}--\ref{item:A2}, for every $\hat\sigma \in \Omega_{\cR}$, we have
	$$
	\pi_{\cR}(\hat\sigma) = \frac{1+O(1/d)}{|\Omega_{\cR}| }\;. 
	$$
\end{lemma}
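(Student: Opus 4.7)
The plan is to show that $\pi_\cR(\hat\sigma)/\pi_\cR(\hat\eta) = 1+O(1/d)$ for every pair $\hat\sigma, \hat\eta \in \Omega_\cR$, from which the claim will follow by normalising against $\sum_{\hat\sigma}\pi_\cR(\hat\sigma)=1$.

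By Lemma~\ref{lem:red_dyn}, $\pi_\cR(\hat\sigma)$ is proportional to $n(\hat\sigma):=|\{\sigma\in\Omega_E^\mu:\sigma|_F=\hat\sigma\}|$. The first step will be to factorise $n(\hat\sigma) = \prod_i N_i(\hat\sigma(e_i))$, where $N_i(c)$ denotes the number of proper colourings of $E_i$ that assign $c$ to $e_i$ and agree with $\mu$ on the external boundary. The factorisation follows from the structure of $\cE$: in both rooting schemes the only edge of $E_i$ adjacent to edges in other blocks is $e_i$ (and the singleton block $\{e\}$ in the edge-rooted case contributes a trivial factor of $1$), so blocks interact only through $F$. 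Moreover, once $\hat\sigma(e_i)$ is fixed, the only constraint on $E_i$ coming from $\hat\sigma(e_j)$ with $j\neq i$ is $\hat\sigma(e_i)\neq\hat\sigma(e_j)$, which is automatic; hence $N_i$ depends only on $c$ and on $\mu$.

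Next, I will show that $N_i(c)/N_i(c') = 1+O(1/d)$ for any two colours $c, c'$ available at $e_i$. I claim there is a $\sigma^*\in\Omega_E^\mu$ in which both $c$ and $c'$ remain available at $e_i$: since $G$ is $d$-regular and the root of $T_i$ is internal to $T$ and hence to $G$, all edges at this root lie in $T$, so the only colours forbidden for $e_i$ at the root come from the $d-1$ other edges there, and from $k=d+1$ one can always choose these to be distinct from both $c$ and $c'$. Applying Lemma~\ref{lem:each_col} to $\sigma^*$ twice gives $N_i(c), N_i(c') = \tfrac{1}{2}|\Omega_{E_i}^{\sigma^*}|(1+O(1/d))$, so their ratio is $1+O(1/d)$. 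Crucially, when $|\overline\partial T_i|=1$, the moreover clause of Lemma~\ref{lem:each_col} upgrades this to $N_i(c)=N_i(c')$ exactly.

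The quantitative crux is bounding the number of blocks with $|\overline\partial T_i|=2$. Any edge $f\in\overline\partial T_i\setminus\{e_i\}$ lies strictly below the root of $T_i$, so it has no neighbour in $E\setminus E_i$; its neighbour in $\partial_G T_i$ must therefore lie in $\partial_G T$, whence $f\in\overline\partial T$. Since $T$ is splitting, $|\overline\partial T|\leq 2$, so at most two blocks can have $|\overline\partial T_i|=2$. Combining,
$$\frac{\pi_\cR(\hat\sigma)}{\pi_\cR(\hat\eta)} = \prod_i \frac{N_i(\hat\sigma(e_i))}{N_i(\hat\eta(e_i))} = 1+O(1/d),$$
since the product has at most two non-trivial factors. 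Normalisation then yields $\pi_\cR(\hat\sigma)=(1+O(1/d))/|\Omega_\cR|$. The main subtlety will be justifying the factorisation of $n(\hat\sigma)$ and the reduction of $N_i$ to a function of a single coordinate; once these are in place, the remainder is a direct application of Lemma~\ref{lem:each_col} combined with the splitting hypothesis.
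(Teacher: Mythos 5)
Your strategy is the same as the paper's: invoke Lemma~\ref{lem:red_dyn} to reduce to counting extensions, factorize the count $n(\hat\sigma)=\prod_i N_i(\hat\sigma(e_i))$ over blocks, observe that blocks with $|\overline\partial T_i|=1$ contribute an exactly constant factor, and bound the at-most-two remaining factors by $1+O(1/d)$. Your justification of why $N_i$ depends only on $\hat\sigma(e_i)$, and your explicit argument that any $f\in\overline\partial T_i\setminus\{e_i\}$ must lie in $\overline\partial T$ (so at most two blocks are non-trivial), are both correct and more carefully written out than in the paper.

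There is, however, a gap in the step comparing $N_i(c)$ with $N_i(c')$. You construct $\sigma^*\in\Omega_E^\mu$ with both $c$ and $c'$ available at $e_i$ by colouring the $d-1$ sibling edges at the root with the $d-1$ colours in $[d+1]\setminus\{c,c'\}$, and then invoke Lemma~\ref{lem:each_col} twice with this common $\sigma^*$. But the siblings $e_j$ with $e_j\in\overline\partial T$ (equivalently, single-edge blocks $T_j$) have $|L(e_j)|=2$, and these lists constrain which colours can appear there. If such a sibling has $L(e_j)=\{c,c'\}$, then no proper colouring of $F$ avoids both $c$ and $c'$ on the siblings, your $\sigma^*$ does not exist, and the two invocations of Lemma~\ref{lem:each_col} cannot be coupled. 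The step ``one can always choose these to be distinct from both $c$ and $c'$'' is therefore false in general. The hole is small and localised: since $T$ is splitting there is at most one such constrained sibling, so you can chain through an intermediate colour $c''\notin\{c,c'\}$ — pick $\sigma^*$ with $\sigma^*(e_j)\in L(e_j)$ and $c,c''$ both free at $e_i$, compare $N_i(c)$ to $N_i(c'')$, then symmetrically $N_i(c')$ to $N_i(c'')$. The paper avoids the issue entirely by re-running the path computation inside the proof of Lemma~\ref{lem:each_col}, which bounds each $N_i(c)$ directly (up to a factor independent of $c$) rather than needing $c$ and $c'$ to be simultaneously available under one boundary colouring.
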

\begin{proof}
Recall that $H=(U,F)$ and that $e_i \in F\cap E_i$ is unique edge of $E_i$ in $H$. Let $\hat\sigma\in \Omega_{\cR}$, we will compute $\pi_{\cR}(\hat\sigma)$ by using Lemma~\ref{lem:red_dyn} and by bounding the number of $\sigma\in \Omega_{E}^\mu$ such that $\sigma|_{F} = \hat \sigma$. If $|\overline\partial T_i|=1$, then $e_i$ is the only boundary edge and, by symmetry, the number of extensions of $\hat\sigma$ in $E_i$ does not depend on $\hat \sigma(e_i)$. If $|\overline\partial T_i|=2$, then there exists $f\in E_i$ with $f\neq e_i$ such that $f\in \overline\partial T_i$. As in the proof of Lemma~\ref{lem:each_col}, in this case the number of extensions is the same, up to a $1+O(1/d)$ multiplicative factor. Since $T_i$ is splitting, there are at most two values of $i\in [r]$ with $|\overline\partial T_i|=2$. It follows that, up to a $1+O(1/d)$ multiplicative factor, each $\hat \sigma$ has the same number of extensions. This concludes the proof.
\end{proof}

We will also need the following simple bound on the gap of the dynamics of a single edge.
\begin{lemma}\label{lem:gap_edge}
	Let $e\in {E_G}$ be an edge of $G$. Then,
	$$
	\min_{\mu \in \Omega_{E_G}} \Gap (\cL_{\{e\}}^\mu) \geq \frac{2}{d+1}\;.
	$$
\end{lemma}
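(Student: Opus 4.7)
The plan is to compute $\Gap(\cL_{\{e\}}^\mu)$ by direct diagonalisation, exploiting the fact that the chain lives on a very small state space and is completely homogeneous on it. Setting $m := |\Omega_{\{e\}}^\mu| = k - |\mu(N(e))|$ (the number of colours available to $e$ under the boundary configuration $\mu$), the definition of the continuous-time Glauber dynamics with parameter $p_e = 1/k$ tells us that the transition rate from any available colour to any other available colour equals exactly $1/k = 1/(d+1)$, independent of the chosen pair. Hence $\cL_{\{e\}}^\mu$ is, up to relabelling of the states, the complete-graph Markov chain on $m$ states with uniform rate $1/(d+1)$ on every off-diagonal entry.

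I would first dispose of the degenerate case. Since $\mu$ is a proper edge-colouring, $\mu(e)\notin \mu(N(e))$, so $m \geq 1$; if $m = 1$ then the state space is a singleton, the generator is the $1 \times 1$ zero matrix, the relaxation time is zero and the spectral gap is vacuously $+\infty$, so the bound holds trivially. For the main case $m \geq 2$, the generator admits the matrix form $\cL_{\{e\}}^\mu = \tfrac{1}{d+1}(J_m - m I_m)$, where $J_m$ is the $m \times m$ all-ones matrix. Since the eigenvalues of $J_m$ are $m$ (on the constant eigenvector, which confirms the uniform stationary distribution) and $0$ with multiplicity $m - 1$, the eigenvalues of $-\cL_{\{e\}}^\mu$ are $0$ and $m/(d+1)$ with multiplicity $m - 1$. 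Hence $\Gap(\cL_{\{e\}}^\mu) = m/(d+1) \geq 2/(d+1)$, and taking the minimum over all $\mu$ yields the lemma.

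I do not foresee any real obstacle: this is a routine spectral-gap computation for a clique-type Markov chain, and the only subtlety is the bookkeeping around the degenerate case $m = 1$.
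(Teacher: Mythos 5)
Your proof is correct and takes essentially the same route as the paper's: identify $\cL_{\{e\}}^\mu$ as the uniform-rate complete-graph chain on $m$ states with rate $1/(d+1)$ and read off its spectrum (the paper simply asserts that all positive eigenvalues of $-\cL_{\{e\}}^\mu$ equal $k_0/(d+1)$, which is exactly your $J_m$ computation). The only cosmetic difference is the handling of the singleton case $m=1$, where you declare the gap $+\infty$ while the paper sets the relaxation time to $1$; both conventions make the stated bound hold.
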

\begin{proof}
	Observe that $\Omega_{\{e\}}^\mu$ is the set of colourings of a single edge with $k_0\geq 1$ colours, where each transition happens at rate $1/(d+1)$. If $k_0=1$, then the relaxation time is $1$. If $k_0 \geq 2$, all positive eigenvalues of $-\cL_{\{e\}}^\mu$ are equal to $k_0/(d+1)$, so $\Gap(\cL_{\{e\}}^\mu)\geq 2/(d+1)$.
\end{proof}

We can finally obtain a bound on the relaxation time of $\cL_{E}^\mu$.
\begin{lemma}\label{lem:blockDyn} Assuming~\ref{item:A1}--\ref{item:A2}, the following holds,
$$
\tau(\cL_{E}^\mu)= O\left(d^3 + \sum_{i=1}^r \tau_i \right)\;,
$$
where $\tau_i  := \max_{\sigma \in \Omega_{E}^\mu} \tau(\cL_{E_i}^\sigma)$.
\end{lemma}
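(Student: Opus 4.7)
The plan is to chain together the block-dynamics reduction of Section~\ref{sec:block} with the clique bounds of Section~\ref{sec:clique}. First, applying Propositions~\ref{prop:mar} and~\ref{prop:red} in sequence gives $\tau(\cL_E^\mu) \leq \tau(\cB_\cE^\mu) = \tau(\cR_\cE^\mu)$, so it suffices to bound the relaxation time of the reduced block dynamics.

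Next, I would compare $\cR_\cE^\mu$ with the list-Glauber dynamics $\cL_F^L$ on $H^\ell$ introduced in the proof of Lemma~\ref{lem:red_dyn}, run with rates $p_i := g_i/2$. Since each block contributes a single edge to $F$, the two chains share the same set of transitions. By Lemma~\ref{lem:each_col}, the projection $\pi^{i,\hat\sigma}_{\text{proj}}(c)$ equals $\tfrac{1}{2}(1+O(1/d))$ for every available colour $c$, so the transition rates of $\cR_\cE^\mu$ agree with those of $\cL_F^L$ up to a $1+O(1/d)$ multiplicative factor. Together with the near-uniform bound on the stationary distribution $\pi_\cR$ from Lemma~\ref{lem:stat_dist}, Proposition~\ref{prop:constant_dyn} then yields $\tau(\cR_\cE^\mu) = O(\tau(\cL_F^L))$.

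The third step is to bound $\tau(\cL_F^L)$ using Lemma~\ref{lem:rel_time_clique} in the vertex-rooted case, where $H^\ell$ is a $d$-clique and $L$ is $t$-feasible with $t \leq 1$, and Lemma~\ref{lem:rel_time_clique2} in the edge-rooted case, where $H^\ell$ is a pair of cliques intersecting at $e$ and $L$ is $(t,t_X,t_Y)$-feasible with $t_X,t_Y \leq 1$. The crucial structural observation is that a splitting subtree has a fringe boundary, so any $e_i \in F$ with $L(e_i) \neq [k]$ has its far endpoint as a leaf of $T$, which forces the block $E_i = \{e_i\}$ to be a single edge. Lemma~\ref{lem:gap_edge} then yields $\tau_i = O(d)$, hence $1/p_i = O(d)$, for every constrained block, and analogously $1/p_e = O(d)$ for the central block $\{e\}$ in the edge-rooted case.

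Plugging these bounds into the clique lemmas, each of the $O(1)$ constrained side-edges contributes $d/p_i = O(d^2)$, the central term $d^2/p_e = O(d^3)$ appears in the edge-rooted case, and the remaining unconstrained blocks contribute $\sum_i 1/p_i = O(\sum_i \tau_i)$. Summing yields $O(d^3 + \sum_{i=1}^r \tau_i)$, as required. The main obstacle is precisely the third step: without the fringe-boundary guarantee provided by the splitting condition, the $\tau_i$ of a constrained block could be arbitrarily large and the $d/p_i$ contributions in the clique bounds could not be absorbed into the $O(d^3)$ additive term.
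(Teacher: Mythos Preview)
Your proposal is correct and follows essentially the same approach as the paper's proof. The paper defines an intermediate chain $\cR_{\text{const}}$ with rates $g_i$ rather than re-using the name $\cL_F^L$ with $p_i=g_i/2$, but this is purely cosmetic; the comparison via Lemmas~\ref{lem:each_col} and~\ref{lem:stat_dist} together with Proposition~\ref{prop:constant_dyn}, the invocation of Lemmas~\ref{lem:rel_time_clique} and~\ref{lem:rel_time_clique2}, and the use of the fringe-boundary/splitting condition plus Lemma~\ref{lem:gap_edge} to control the constrained blocks all mirror the paper exactly.
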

\begin{proof}

Using Propositions~\ref{prop:mar} and~\ref{prop:red}, we know that the relaxation time of $\cL_{E}^\mu$ satisfies
\begin{align}\label{eq:first}
 \tau(\cL_{E}^\mu) \leq \tau(\cR_{\cE}^\mu) \;.
 \end{align}
Thus, to get the result, we only need to bound the relaxation time of the reduced block dynamics with partition $\cE$. We define an alternative dynamics.
Let $\cR_{\text{const}}$ be the continuous-time Markov chain with state space $\Omega_{\cR}$ and generator matrix given for any $\hat\sigma \neq \hat\eta$ by
\begin{align*}
\cR_{\text{const}}[\hat\sigma \rightarrow \hat\eta]&=\begin{cases}
g_i & \text{if $\hat\sigma$ and $\hat\eta$ differ only at $e_i$},\\
0& \text{otherwise,}
\end{cases}
\end{align*}
where $g_i = 1/ {\tau_i}$. Observe that $\cR_\cE^\mu$ and $\cR_{\text{const}}$ have the same state space and transitions (but different transition probabilities). For every $\hat\sigma,\hat\eta\in  \Omega_\cR$, Lemma~\ref{lem:each_col} implies that $\pi_{\text{proj}}^{\hat\sigma,i} (\hat \eta)= \pi^{\sigma}_{E_i}(\{\xi \in \Omega^{\sigma}_{E_i}:\xi(e_i)=\hat\eta(e_i)\})=\Theta(1)$, where $\sigma$ is an arbitrary colouring in $\Omega_*^{\hat\sigma}$. So 
$\cR_\cE^\mu[\hat\sigma \rightarrow \hat\eta]=\Theta(\cR_{\text{const}}[\hat\sigma \rightarrow \hat\eta])$.

Moreover, the stationary distribution $\pi_{\text{const}}$ of $\cR_{\text{const}}$ is uniform on $\Omega_\cR$, and by Lemma~\ref{lem:stat_dist} we have $\pi_{\cR}(\hat\sigma)= \Theta(\pi_{\text{const}}(\hat\sigma) )$ for every $\hat\sigma\in \Omega_\cR$.
Thus, it follows from Proposition~\ref{prop:constant_dyn} that 
\begin{align}\label{eq:const}
\tau(\cR^\mu_{\cE})= \Theta(\tau(\cR_{\text{const}}))
\end{align}

We will bound the relaxation time of $\cR_{\text{const}}$ using the results in Section~\ref{sec:clique}, and conclude using~\eqref{eq:first} and~\eqref{eq:const}.

Suppose first that $T$ is vertex-rooted. This implies that $H$ induces a star with edges $e_1,\dots, e_d$. Consider the clique with vertex-set $U=\{u_1,\dots,u_d\}$, where $u_i$ is identified with the edge $e_i$, and the list assignment $L$ of $U$ defined by $L(u_i)=[k]\sm \mu(N(e_i)\cap \partial T)$. Up to relabelling of the edges, as $k= d+1$, the list assignment $L$ is $1$-feasible. Consider the dynamics $\cL_U^L$ with probabilities $p_i:=g_i$. We can identify $\Omega_\cR$ and $\cR_{\text{const}}$ with $\Omega^L_U$ and $\cL_U^L$.
By applying Lemma~\ref{lem:rel_time_clique} with $t \leq 1$ we have,
$$
\tau(\cR_{\text{const}}) = \tau(\cL^L_U) = O\left(\sum_{i=1}^t \frac{d}{g_i}+\sum_{i=t+1}^d\frac{1}{g_i}\right) = O\left(d^2+\sum_{i=1}^d \frac{1}{g_i}\right)\;.
$$
In the last equality we have used Lemma~\ref{lem:gap_edge} as for $i\in [t]$, $T_i$ splitting and $u_i$ constrained imply that $T_i$ is a single edge.

If $T$ is edge rooted, then $H$ is a bi-star. We can do the same proof replacing Lemma~\ref{lem:rel_time_clique} by Lemma~\ref{lem:rel_time_clique2} obtaining a similar bound with an additive factor of order $d^3$.

\end{proof}

\subsection{Proof of Theorem~\ref{thm:stronger}}
\label{sec:proofThm}

Let $k\geq \Delta+1$ and $G= ({V_G}, {E_G})$ be a tree on $n$ vertices with maximum degree at most $\Delta$. 
From~\eqref{eq:mixrel2} and~\eqref{eq:cont_disc} to prove Theorem~\ref{thm:stronger} it suffices to bound $\tau(\cL_{E_G})$.

Let $d:=k-1\geq \Delta$. Construct the \emph{$d$-regularisation} $G^d = (V^d_G, E^d_G)$ of $G$ by adding $d-|N(u)|$ leaves adjacent to each internal vertex $u\in V_G$. Note that $G^d$ is $d$-regular as a tree and it has at most $dn$ vertices. Moreover, as we only added leaves to the original tree, the neighbourhood of any added edge in the line graph is a clique of size $d-1=k-2$.
So we can apply Proposition~\ref{prop:mono} obtaining $\tau(\cL_{E_G})\leq \tau(\cL_{E^d_G})$.
To prove Theorem~\ref{thm:stronger} it suffices to bound the relaxation time of $d$-regular trees with at most $dn$ vertices, and in the following we assume that $G$ is $d$-regular.

\medskip
We will prove the following result by induction on the size of the subtree $T$: 
\begin{cla}\label{cla:1}
	There is a constant $C$ such that, for every splitting subtree $T = (V, E)$ of $G$ with $m$ edges, and every edge-colouring $\mu \in \Omega_{E_G}$, the Glauber dynamics $\cL_{E}^\mu$ with parameters $p_i=1/k$ is ergodic and
		$$ \tau(\cL_{E}^\mu) \leq d^3 m^C\;.   $$
\end{cla}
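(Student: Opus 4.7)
The plan is to proceed by strong induction on the number of edges $m$ of $T$. The base case $m=1$ is immediate: $T$ is a single edge and Lemma~\ref{lem:gap_edge} gives $\tau(\cL_{E}^\mu)\leq (d+1)/2 \leq d^3$, so the claimed bound holds for any $C\geq 0$.

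For the inductive step, I would construct a block partition $\cE=\{E_1,\dots,E_r\}$ of $E$ into splitting subtrees of $G$, each of size $m_i\leq m/2$. The partition comes from a centroid-type decomposition of $T$: either we find an internal vertex $r$ such that every component of $T-r$ has at most $m/2$ edges (vertex-rooted case, giving $d$ blocks), or an internal edge $e=xy$ whose two sides have sizes roughly $m/2$ (edge-rooted case, giving one block $\{e\}$ together with $2d-2$ subtree blocks). In both cases every block $T_i$ contains its ``root edge'' $e_i$ (which becomes a fringe boundary edge of $T_i$ in $G$) plus at most one of the at most two original boundary edges in $\overline\partial T$, so $|\overline\partial T_i|\leq 2$. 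The non-adjacency requirement of the splitting property is then preserved by choosing the root carefully, switching from vertex- to edge-rooting (or sliding the root by one edge) whenever the natural centroid would place some $e_i$ adjacent to a pre-existing boundary edge of $T$.

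Once such a partition is found, conditions~\ref{item:A1} and~\ref{item:A2} of Lemma~\ref{lem:blockDyn} both hold: the former by construction, and the latter by the ergodicity part of the inductive hypothesis applied to each $T_i$ (each being a splitting subtree of $G$ with fewer than $m$ edges). Applying Lemma~\ref{lem:blockDyn} yields
\[
\tau(\cL_E^\mu) \leq A\Bigl(d^3 + \sum_{i=1}^r \tau_i\Bigr)
\]
for an absolute constant $A>0$. Plugging in $\tau_i \leq d^3 m_i^C$ from the inductive hypothesis and using convexity of $x\mapsto x^C$ on the polytope $\{(m_1,\dots,m_r):\,\sum_i m_i = m,\; 0\leq m_i\leq m/2\}$, the sum $\sum_i m_i^C$ is maximised at an extreme point where two $m_i$ equal $m/2$ and the rest vanish, giving $\sum_i m_i^C \leq 2(m/2)^C = m^C/2^{C-1}$. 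Hence $\tau(\cL_E^\mu) \leq A d^3 + A d^3 m^C / 2^{C-1}$, and choosing $C$ large enough (independently of $d$ and $m$) so that $A/2^{C-1} \leq 1/2$ and $A \leq m^C/2$ for $m\geq 2$ makes the right-hand side at most $d^3 m^C$, closing the induction.

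The main obstacle is the combinatorial construction in paragraph two: at the threshold $k=\Delta+1$ the non-adjacency condition on the two boundary edges of each block is essential, since adjacent boundary edges would leave only a single available colour for $e_i$ and break either the ergodicity of the block or the applicability of the clique-dynamics results of Section~\ref{sec:clique}. The flexibility to switch between vertex- and edge-rooting, together with the size slack coming from $m_i\leq m/2$ (rather than merely $m_i\leq m-1$), is what allows the splitting property to be maintained through every level of the recursion; once this is set up, the analytic part of the argument reduces to the clean convexity-plus-block-dynamics computation above.
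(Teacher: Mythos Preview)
Your overall inductive strategy, base case, and final arithmetic match the paper's approach. The convexity bound $\sum_i m_i^C\le 2(m/2)^C$ is equivalent to the paper's cruder bound $\sum_i m_i^C\le (\lceil m/2\rceil)^{C-1}\sum_i m_i$, and either closes the recursion once $C$ is large enough.

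The gap is in your second paragraph: it is \emph{not} always possible to find a single vertex- or edge-root that simultaneously makes every block splitting and of size at most $\lceil m/2\rceil$. Two concrete obstructions: (a) if $|\overline\partial T|=2$ with boundary edges $e,f$ and the centroid $v$ lies on the $e$--$f$ path with $e$ adjacent to $e_1$ and $f$ adjacent to $e_2$, then rooting at $v$ leaves both $T_1,T_2$ non-splitting, while rooting at $e_1$ fixes $T_1$ but still leaves $T_2$ with adjacent boundary edges $e_2,f$; (b) if $v$ is \emph{off} the $e$--$f$ path, then rooting at the branch vertex $v'$ on that path makes all blocks splitting but the block towards $v$ can have more than $\lceil m/2\rceil$ edges, whereas rooting at $v$ puts both $e$ and $f$ into a single block with boundary of size $3$.

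The paper handles this by a short case analysis allowing up to three nested applications of Lemma~\ref{lem:blockDyn} per inductive step (Cases~\ref{split:case1}--\ref{split:case4}), so that the recurrence becomes $\tau(\cL_E^\mu)\le K^3\bigl(3d^3+\sum_i \tau_i\bigr)$ with all final pieces splitting and of size at most $\lceil m/2\rceil$. Your ``sliding the root by one edge'' intuition is exactly right, but you need to acknowledge that the slide may produce one oversized (but splitting) intermediate block which must itself be split again before the induction hypothesis applies; once you allow this, the rest of your write-up goes through unchanged with $A$ replaced by $K^3$.
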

From this claim, the theorem is obtained immediately by taking $T = G$. If $T$ is composed of a single edge, the Claim~\ref{cla:1} follows from Lemma~\ref{lem:gap_edge}. Let $m$ be the number of edges in $T$ and assume $m > 1$. Let $v\in V$ be the vertex such that each subtree $T_i$ hanging from $T$ rooted at $v$ has at most $\lceil \frac m 2 \rceil$ edges; this vertex always exists and it is internal. Let $e_i$ be the edge from $T_i$ incident to $v$. We are going to split $T$ into several subtrees by applying Lemma~\ref{lem:blockDyn}, possibly several times. Note that in order to apply this lemma, we must ensure that each of the subtrees is splitting. This gives constraints on how we can split $T$. Precisely, while splitting the tree into subtrees, none of the subtrees can have an internal boundary of size at least $3$, and for the subtrees with an internal boundary of size $2$, the two edges in the boundary must be non-incident.

The splitting procedure is done according to different cases:
\begin{enumerate}
	
	\item \label{split:case1} If all the $T_i$ are splitting, then simply root $T$ at $v$, and apply Lemma~\ref{lem:blockDyn}. In the following we will assume that not all the $T_i$ are splitting.
	
	\item \label{split:case2} If $|\overline \partial T| = 1$, then there is exactly one subtree, say $T_1$, that is not splitting. In particular, $|\overline \partial T_1| = 2$ and the edge in $\overline \partial T$ is also an edge in $T_1$ and is incident to $e_1$, violating the non-incident condition. All other subtrees $T_i$ for $i \neq 1$ are splitting and $|\overline \partial T_i| =1$. Root $T$ at $e_1$. Now, all subtrees pending from $e_1$ are splitting (the edge in $\overline \partial T$ forms its own subtree, so it is splitting) and have at most $\lceil\frac m 2\rceil$ edges, and we can apply Lemma~\ref{lem:blockDyn}.
	
	\item \label{split:case3} If $|\overline \partial T| = 2$, with $e$ and $f$ the two edges on the internal boundary, and $v$ is on the path between $e$ and~$f$. Without loss of generality, assume that $T_1$ and $T_2$ contain $e$ and $f$ respectively.
	\begin{itemize}
		\item[-] If exactly one of $T_1$ or $T_2$ is not splitting, w.l.o.g. we can assume that it is $T_1$. By rooting $T$ at $e_1$, all subtrees pending from $e_1$ are splitting and contain at most $\lceil\frac m 2\rceil$ edges, and we can apply Lemma~\ref{lem:blockDyn}.
		\item[-] If both $T_1$ and $T_2$ are not splitting, write $e_1 = (v_1, v)$. Since $T_1$ is not splitting, $v_1$ must be incident to $e$. Moreover, since $T_2$ is not splitting, $e_1$ and $f$ are not incident, and all the subtrees pending at $v_1$ are splitting. We apply Lemma~\ref{lem:blockDyn} a first time by rooting $T$ at $v_1$. Let $T'$ be the subtree hanging from $v_1$ that contains $v$ and root $T'$ at $e_2$. Then, all subtrees pending from $e_2$ are splitting, and we can apply Lemma~\ref{lem:blockDyn} a second time. The resulting subtrees all have at most $\lceil\frac m 2\rceil$ edges.
	\end{itemize}
	
	\item \label{split:case4} Finally, if $v$ is not on the path between $e$ and $f$, let $v'$ be the vertex on this path which is closest to $v$. We can first split at $v'$ by applying the procedure from the case~\ref{split:case3}. All the resulting subtrees have at most $\lceil\frac m 2\rceil$ edges except maybe the subtree $T'$ containing $v$. However $T'$ has a fringe boundary of size $1$, and by splitting one more time according to either case~\ref{split:case1} or case~\ref{split:case2}, all resulting subtrees are splitting, and have at most $\lceil\frac m 2\rceil$ edges. In a worst case scenario, we needed to use Lemma~\ref{lem:blockDyn} three times.
\end{enumerate}
Let $T'_1, \ldots, T'_s$ be the subtrees into which $T$ is split by applying the procedure above, and let $E_i'$ be the set of edges of $T_i'$. Since $T_i'$ is splitting, by the induction hypothesis, $\cL_{E_i'}^\sigma$ is ergodic for every $\sigma\in \Omega^\mu_{E}$, so $\cL_{E}^\mu$ is also ergodic. Recall that $\tau_i:= \max_{\sigma \in \Omega_{E}^\mu} \tau (\cL_{E_i'}^\sigma)$.
Lemma~\ref{lem:blockDyn} shows that there exists $K$ such that if $T$ is split at a vertex/edge into $r$ subtrees $T_1',\dots, T_r'$, then $\tau(\cL_{E}^\mu)\leq K (d^3+ \sum_{i=1}^r \tau_i)$.
As we use Lemma~\ref{lem:blockDyn} at most three times in each step of the splitting procedure described above, we have
\begin{align*}
	\tau(\cL_{E}^\mu) \leq K^3\Big(3d^3 + \sum_{i=1}^{s} \tau_i \Big) \; ,
\end{align*}
Using the induction hypothesis on $T'_i$, if we denote by $m_i \leq \lceil\frac m 2\rceil$ the number of edges in $T'_i$, we have:
\begin{align*}
	\tau(\cL_{E}^\mu) &\leq K^3\Big(3d^3 + \sum_{i=1}^{s} d^3 m_i^C\Big)  
	\leq K^3 \Big(3d^3 + d^3 \left\lceil\frac m 2\right\rceil^{C-1} \sum_{i=1}^{s} m_i \Big) \\
	&\leq K^3 d^3 \left(3 + \frac{m^C}{2^{C-2}}\right) 
	\leq d^3  m^C \cdot \frac {K^3} {2^{C-3}}\;.
\end{align*}
So letting $C \geq 3(1+ \log{K})$ gives the desired inequality, and we conclude the proof of Claim~\ref{cla:1}.

\section{Further comments and open problems}
Our main result, Theorem~\ref{thm:stronger} gives a polynomial upper bound on the mixing time for the Glauber dynamics on edge-colourings of a tree. We made no effort to optimise the constant $C$, but it can be bounded by $60$.

In the case of complete trees of degree $\Delta$, we can obtain a better exponent. Using Lemma~\ref{lem:rel_time_clique} and the canonical decomposition of the complete tree of height $h$ into $\Delta$ complete subtrees of height $h-1$, one can directly obtain an upper bound for the relaxation time of the discrete-time dynamics of the form $n^{2 + o_d(1)}$, which by using~\eqref{eq:cont_disc} gives a $n^{3 + o_d(1)}$ bound for the mixing time. One can prove an analogue of Lemma 26 in~\cite{TVVY} (or Theorem 5.7 in~\cite{MSV04}) for edge-colourings of complete trees, which improves the bound on	 the mixing time to $n^{2+o_d(1)}$.

If the complete tree is rooted at $r$ and $k\in [\Delta+1, 2\Delta-1]$, one can construct a $k$-edge-colouring such that the ball of radius $R=\lfloor\log_{\Delta}{n}\rfloor$ around $r$ is frozen. Using Lemma 4.2 in~\cite{HaySin} for the discrete-time dynamics, this shows that the mixing time of the Glauber dynamics for $k$-edge-colourings is $\Omega(n\log{n})$. An interesting open problem would be to determine if the mixing time of the Glauber dynamics for $k$-edge-colourings of complete trees is $O(n\log{n})$, or to obtain the weaker bound $n^{1+o_d(1)}$ (see e.g.~\cite{MSW,SlyZha,TVVY}). The main bottleneck in our proof is the bound on the relaxation time of the constrained clique given in Lemma~\ref{lem:rel_time_clique}. For $k=d+1$, $t=1$ and $p_i=1$, one can show that its relaxation time is $\Theta(d)$, so new ideas are required.

Our theorem applies to $\Delta\geq 3$ and $k\geq \Delta+1$. For the case $\Delta=2$,  the graph is a path, so vertex-colourings and edge-colourings are essentially the same. Dyer, Goldberg and Jerrum~\cite{DGJ} proved that Glauber dynamics for $k=3$ mixes in time $\Theta(n^3\log{n})$.

More generally, there are very few results on Glauber dynamics specific to edge-colourings of graphs. In particular, it would be interesting to find ways to obtain a substantial improvement over the bound $\frac{11\Delta}{3}$ obtained from~\cite{Vig} on the number of colours needed for polynomial mixing of the edge-colouring Glauber dynamics on any graph with maximum degree $\Delta$.

Part of our motivation for analysing edge-colourings of trees comes from the study of Glauber dynamics in sparse random graphs. It is an interesting open problem to determine for which $k$ the Glauber dynamics for $k$-edge-colourings of a random $\Delta$-regular graph is ergodic, and for which $k$ it mixes in polynomial time.

Finally, the problem of sampling uniformly random edge-colourings of some simple graphs is closely related to some fundamental questions in combinatorics. For instance, $2n$-edge-colourings of the complete graph $K_{2n}$ correspond to $1$-factorisations, and $n$-edge-colourings of the complete bipartite graph $K_{n,n}$ are in bijection with Latin squares. Markov Chain Monte Carlo methods have been introduced to sample such combinatorial objects~(see e.g.~\cite{DotLin,JacMat}), but it is not known if they rapidly mix. A promising approach to the problem is to find the smallest $k=k(n)$ for which Glauber dynamics mixes in polynomial time for $k$-edge-colourings of $K_{2n}$ or $K_{n,n}$.

	\medskip

\noindent \textbf{Acknowledgments.} We thank the referees for carefully reading the article, for their suggestions and for bringing the reference~\cite{SlyZha} to our attention.

\bibliographystyle{plain}
\bibliography{trees}	
	
\newpage

\appendix

\section{Missing proofs from Section~\ref{sec:comparison}}

	For any function $f: \Omega \to \mathbb{R}$, the \emph{variance} and the \emph{Dirichlet form} of $\cL$ are defined respectively as
	\begin{align*}
	\text{Var}_{\cL}(f) &= \frac{1}{2} \sum_{\alpha, \beta \in \Omega} \pi(\alpha) \pi(\beta) (f(\alpha) - f(\beta))^2\;,\\
	\xi_{\cL}(f, f)& = \frac{1}{2} \sum_{\alpha, \beta \in \Omega} \pi(\alpha) \cL[\alpha \rightarrow \beta] (f(\alpha) - f(\beta))^2 \;.
	\end{align*}
	Let $\cF_{\cL}=\{f :\, \text{Var}_{\cL}(f)> 0\}$ and note that $\cF_{\cL}=\cF_{\cL'}$ are the set of non-constant functions, as $\pi$ and $\pi'$  are positive on $\Omega$.
	It is well-known (see e.g. Remark 13.13 in~\cite{LevPer}) that the spectral gap of $\cL$ satisfies 
	\begin{align}\label{eq:rel_time_char}
	\Gap(\cL) = \min_{f\in \cF_\cL} \frac {\xi_{\cL}(f,f)}{\text{Var}_{\cL}(f)}\;.
	\end{align}

\subsection{Proof of Propostion~\ref{prop:weighted_frac_paths}}
	
	We will compare the Dirichlet form and the variance of $\cL$ and $\cL'$. We have
	\begin{align*}
	\xi_{\cL'}(f, f) &= \frac 1 2 \sum_{\alpha, \beta \in \Omega} \pi'(\alpha) \cL'[\alpha \rightarrow \beta] (f(\alpha) - f(\beta))^2 \\
	&= \frac 1 2 \sum_{\alpha, \beta \in \Omega} \sum_{\gamma \in \Gamma_{\alpha, \beta}} g_{\alpha,\beta}(\gamma) \pi'(\alpha) \cL'[\alpha \rightarrow \beta] (f(\alpha) - f(\beta))^2 \\
	&= \frac 1 2 \sum_{\alpha, \beta \in \Omega} \sum_{\gamma \in \Gamma_{\alpha, \beta}} g_{\alpha,\beta}(\gamma) \pi'(\alpha) \cL'[\alpha \rightarrow \beta] \left(\sum_{(\sigma, \eta) \in \gamma} \sqrt{\omega(\sigma, \eta)} \frac{f(\sigma) - f(\eta)}{\sqrt{\omega(\sigma, \eta)}}\right)^2 \\
	&\leq \frac 1 2 \sum_{\alpha, \beta \in \Omega} \sum_{\gamma \in \Gamma_{\alpha, \beta}} g_{\alpha,\beta}(\gamma) \pi'(\alpha) \cL'[\alpha \rightarrow \beta] \cdot |\gamma|_\omega \sum_{(\sigma, \eta) \in \gamma} \frac{(f(\sigma) - f(\eta))^2}{\omega(\sigma, \eta)} \\
	&= \frac 1 2 \sum_{(\sigma, \eta)\in \cL}(f(\sigma) - f(\eta))^2\cdot \frac{1}{\omega(\sigma, \eta)} \sum_{(\alpha,\beta)\in \cL'}\sum_{\gamma\in\Gamma_{\alpha, \beta}\atop \gamma\ni  (\sigma, \eta)}g_{\alpha,\beta}(\gamma) |\gamma|_\omega \pi'(\alpha)\cL'[\alpha \rightarrow \beta] \\
	&\leq \rho_{\max} \,\xi_{\cL}(f, f) \;,
	\end{align*}
	where we used the Cauchy-Schwartz inequality in the first inequality. Additionally, we have
	\begin{align*}
	\text{Var}_{\cL'}(f) &= \frac{1}{2}\sum_{\alpha, \beta \in \Omega}\pi'(\alpha) \pi'(\beta) (f(\alpha) - f(\beta))^2 \\
	&\geq \frac{1}{2b^2} \sum_{\alpha, \beta \in \Omega}\pi(\alpha) \pi(\beta) (f(\alpha) - f(\beta))^2 \\
	&= \frac{1}{b^2} \text{Var}_{\cL}(f)\;.
	\end{align*}
	Combining the previous two inequalities and using~\eqref{eq:rel_time_char}, the desired result follows,
	$$ 
	\Gap(\cL') = \min_{f\in \cF_{\cL'}} \frac {\xi_{\cL'}(f, f)}{\text{Var}_{\cL'}(f)} \leq b^2 \rho_{\max}\, \min_{f\in \cF_{\cL}} \frac {\xi_{\cL}(f, f)}{\text{Var}_{\cL}(f)} = b^2\rho_{\max}\, \Gap(\cL)\;. 
	$$

\subsection{Proof of Proposition~\ref{prop:mono}}

Let $U=\{u_1,\dots,u_m\} = V\setminus \{v\}$. Let $p_1,\dots, p_m,p_v$ be the parameters for $u_1,\dots,u_m,v$, respectively. Let $\Omega_V$ and $\Omega_{U}$ be the set of $L$-colourings of $G$ and $G[U]$ respectively. Let $d = |N(v)|$ the degree of $v$. Since $N(v)$ is a clique, we have: 
	\begin{align*}
	|\Omega_V| = |\Omega_U| (k-d) \;.
	\end{align*}
	Indeed, for every colouring $\alpha_U$ of $U$, there are exactly $(k-d)$ possibilities to extend it into a colouring of $V$.
	Let $\pi_V$ and $\pi_U$ be the stationary distributions of $\cL_V$ and $\cL_U$, which are uniform as the transitions are symmetric. Recall that $\cF_{V}$ is the set of non-constant functions from $\Omega_V$ to $\mathbb{R}$. Let $\cF_{V}^v$ be the subset of these functions which are independent of $v$, i.e. which satisfy $f(\alpha) = f(\beta)$ whenever $\alpha$ and $\beta$ agree on $U$.
	
Using~\eqref{eq:rel_time_char}, we have
	\begin{align}
	\label{eq:minf}
	\Gap(\cL_V) = \min_{f \in \cF_{V}} \frac {\cE_{\cL_V}(f)} {\Var_{\cL_V} (f)} \leq \min_{f \in \cF_{V}^v} \frac {\cE_{\cL_V}(f)} {\Var_{\cL_V} (f)} \;.
	\end{align}
	Fix $f\in \cF_V^v$. Then we have 
	\begin{align*}
	\Var_{\cL_V}(f) &= \frac 1 2 \sum_{\alpha, \beta \in \Omega_V} \pi_V(\alpha) \pi_V(\beta) (f(\alpha) - f(\beta))^2 \\ 
	&= \frac 1 2 \sum_{\alpha, \beta \in \Omega_V} \frac 1 {|\Omega_V|^2} (f(\alpha) - f(\beta))^2 \\ 
	&= \frac 1 2 \sum_{\alpha_U, \beta_U \in \Omega_U} \sum_{c \in [k] \atop c \not \in \alpha(N(v))} \sum_{c' \in [k] \atop c' \not \in \beta(N(v))}  \frac 1 {|\Omega_V|^2} (f(\alpha_U) - f(\beta_U))^2 \\
	&= \frac 1 2 \sum_{\alpha_U, \beta_U \in \Omega_U} \frac {(k-d)^2} {|\Omega_V|^2} (f(\alpha_U) - f(\beta_U))^2  \\ \stepcounter{equation}\tag{\theequation}\label{eq:Var}
	&=  \Var_{\cL_U}(f)
	\end{align*}
	and
	\begin{align*}
	\cE_{\cL_V}(f)  
	&= \frac 1 2 \sum_{u_i \in U}\sum_{\alpha, \beta \in \Omega_V \atop \text{differ at } u_i} \frac 1 {|\Omega_V|} p_i (f(\alpha) - f(\beta))^2 +  \sum_{\alpha, \beta \in \Omega_V \atop \text{differ at } v} \frac 1 {|\Omega_V|} p_v (f(\alpha) - f(\beta))^2\\
	&= \frac 1 2 \sum_{u_i \in U}\sum_{\alpha_U, \beta_U \in \Omega_U \atop \text{differ at } u_i} \sum_{c \in [k] \atop {c \not \in \alpha(N(v)) \atop c \not \in \beta(N(v))}} \frac 1 {|\Omega_V|} p_i (f(\alpha_U) - f(\beta_U))^2\\
	&\leq \frac 1 2 \sum_{u_i \in U}\sum_{\alpha_U, \beta_U \in \Omega_U \atop \text{differ at } u_i} \frac {k-d} {|\Omega_V|} p_i (f(\alpha_U) - f(\beta_U))^2 \\
	&= \cE_{\cL_U}(f) \;. \stepcounter{equation}\tag{\theequation}\label{eq:cE}
	\end{align*}
	where we used that $N(v)$ is a clique in the inequality.
	Putting together~\eqref{eq:minf}--\eqref{eq:cE}, gives as required
	\begin{align*}
	\Gap(\cL_V) \leq \Gap(\cL_U) \;.
	\end{align*}

\end{document}